\newcommand{\edgeScale}{0.75}
\newcommand{\myEdge}[2]{ \tikz[baseline=-1pt]{
\draw[#2,line width=0.3pt] (0,0) -- ++(0.6,0) node[anchor=base, yshift=3pt, pos=0.5] {\scalebox{\edgeScale}{$#1$}};
}}
\newcommand{\edge}[1]{\myEdge{#1}{->}}
\newcommand{\escale}[1]{\ensuremath{\textbf{\scalebox{0.7}{#1}}}}
\newcommand{\nscale}[1]{\ensuremath{\textbf{\scalebox{0.6}{#1}}}}
\newcommand{\R}{\ensuremath{\mathcal{R}}}
\newtheorem{objective}{Objective}
\begin{document}
\title{Hardness Results for the Synthesis of $b$-bounded Petri Nets (Technical Report)}
\author{Ronny Tredup}
\institute{Universit\"at Rostock, Institut f\"ur Informatik, Theoretische Informatik, Albert-Einstein-Stra\ss e 22, 18059, Rostock }
\maketitle


\begin{abstract}
Synthesis for a type $\tau$ of Petri nets is the following search problem: 
For a transition system $A$, find a Petri net $N$ of type $\tau$ whose state graph is isomorphic to $A$, if there is one. 
To determine the computational complexity of synthesis for types of bounded Petri nets we investigate their corresponding decision version, called feasibility.
We show that feasibility is NP-complete for (pure) $b$-bounded P/T-nets if $b\in \mathbb{N}^+$.
We extend (pure) $b$-bounded P/T-nets by the additive group $\mathbb{Z}_{b+1}$ of integers modulo $(b+1)$ and show feasibility to be NP-complete for the resulting type.
To decide if $A$ has the \emph{event state separation property} is shown to be NP-complete for (pure) $b$-bounded and group extended (pure) $b$-bounded P/T-nets.
Deciding if $A$ has the \emph{state separation property} is proven to be NP-complete for (pure) $b$-bounded P/T-nets.
\end{abstract}

\section{Introduction}

\emph{Synthesis} for a Petri net type $\tau$ is the task to find, for a given transition system (TS, for short) $A$, a Petri net $N$ of this type such that its state graph is isomorphic to $A$ if such a net exists.
The decision version of synthesis is called $\tau$-\emph{feasibility}.
It asks whether for a given TS $A$ a Petri net $N$ of type $\tau$ exists whose state graph is isomorphic to $A$.

Synthesis for Petri nets has been investigated and applied for many years and in numerous fields:
It is used to extract concurrency and distributability data from sequential specifications like transition systems or languages \cite{DBLP:journals/fac/BadouelCD02}.
Synthesis has applications in the field of process discovery to reconstruct a model from its execution traces \cite{DBLP:books/daglib/0027363}.
In \cite{DBLP:journals/deds/HollowayKG97}, it is employed in supervisory control for discrete event systems and in \cite{DBLP:journals/tcad/CortadellaKKLY97} it is used for the synthesis of speed-independent circuits. 
This paper deals with the computational complexity of synthesis for types of \emph{T2019b} Petri nets, that is, Petri nets for which there is a positive integer $b$ restricting the number of tokens on every place in any reachable marking.

In \cite{DBLP:conf/tapsoft/BadouelBD95,DBLP:series/txtcs/BadouelBD15}, synthesis has been shown to be solvable in polynomial time for bounded and pure bounded P/T-nets.
The approach provided in \cite{DBLP:conf/tapsoft/BadouelBD95,DBLP:series/txtcs/BadouelBD15} guarantees a (pure) bounded P/T-net to be output if such a net exists.
Unfortunately, it does not work for preselected bounds.
In fact, in \cite{DBLP:journals/tcs/BadouelBD97} it has been shown that feasibility is NP-complete for $1$-bounded P/T-nets, that is, if the bound $b=1$ is chosen \emph{a priori}.
In \cite{DBLP:conf/apn/TredupRW18,DBLP:conf/concur/TredupR18}, it was proven that this remains true even for strongly restricted input TSs.
In contrast, \cite{DBLP:conf/stacs/Schmitt96} shows that it suffices to extend pure $1$-bounded P/T-nets by the additive group $\mathbb{Z}_2$ of integers modulo $2$ to bring the complexity of synthesis down to polynomial time.
The work of \cite{TR2019a} confirms also for other types of $1$-bounded Petri nets that the presence or absence of interactions between places and transitions tip the scales of synthesis complexity.
However, some questions in the area of synthesis for Petri nets are still open.
Recently, in \cite{DBLP:conf/concur/SchlachterW17} the complexity status of synthesis for (pure) $b$-bounded P/T-nets, $2\leq b$, has been reported as unknown. 
Furthermore, it has not yet been analyzed whether extending (pure) $b$-bounded P/T-nets by the group $\mathbb{Z}_{b+1}$ provides also a tractable superclass if $b\geq 2$.

In this paper, we show that feasibility for (pure) $b$-bounded P/T-nets, $b\in \mathbb{N}^+$, is NP-complete.
This makes their synthesis NP-hard.
Moreover, we introduce (pure) $\mathbb{Z}_{b+1}$-extended $b$-bounded P/T-nets, $b\geq 2$.
This type origins from (pure) $b$-bounded P/T-nets by adding interactions between places and transitions simulating addition of integers modulo $b+1$.
This extension is a natural generalization of Schmitt's approach \cite{DBLP:conf/stacs/Schmitt96}, which does this for $b=1$.
In contrast to the result of \cite{DBLP:conf/stacs/Schmitt96}, this paper shows that feasibility for (pure) $\mathbb{Z}_{b+1}$-extended $b$-bounded P/T-nets remains NP-complete if $b\geq 2$.

To prove the NP-completeness of feasibility we use its well known close connection to the so-called \emph{event state separation property} (ESSP) and \emph{state separation property} (SSP).
In fact, a TS $A$ is feasible with respect to a Petri net type if and only if it has the type related ESSP \emph{and} SSP \cite{DBLP:series/txtcs/BadouelBD15}.
The question of whether a TS $A$ has the ESSP or the SSP also defines decision problems.
The possibility to decide efficiently if $A$ has at least one of both properties serves as quick-fail pre-processing mechanisms for feasibility.
Moreover, if $A$ has the ESSP then synthesizing Petri nets up to language equivalence is possible \cite{DBLP:series/txtcs/BadouelBD15}.
This makes the decision problems ESSP and SSP worth to study. 
In \cite{DBLP:journals/tcs/Hiraishi94}, both problems have been shown to be NP-complete for pure $1$-bounded P/T-nets.
This has been confirmed for almost trivial inputs in \cite{DBLP:conf/apn/TredupRW18,DBLP:conf/concur/TredupR18}.

This paper shows feasibility, ESSP and SSP to be NP-complete for $b$-bounded P/T-nets, $b\in \mathbb{N}^+$.
Moreover, feasibility and ESSP are shown to remain NP-complete for (pure) $\mathbb{Z}_{b+1}$-extended $b$-bounded P/T-nets if $b\geq 2$.
Interestingly, \cite{T2019b} shows that SSP is decidable in polynomial time for (pure) $\mathbb{Z}_{b+1}$-extended $b$-bounded P/T-nets, $b\in \mathbb{N}^+$.
So far, this is the first net family where the provable computational complexity of SSP is different to feasibility and ESSP.

All presented NP-completeness proofs base on a reduction from the monotone one-in-three 3-SAT problem which is known to be NP-complete \cite{DBLP:journals/dcg/MooreR01}.
Every reduction starts from a given boolean input expression $\varphi$ and results in a TS $A_\varphi$.
The expression $\varphi$ belongs to monotone one-in-three 3-SAT if and only if $A_\varphi$ has the (target) property ESSP, SSP or feasibility, respectively.

This paper is organized as follows:
Section~\ref{sec:preliminaries} introduces the formal definitions and notions. 
Section~\ref{sec:unions} introduces the concept of unions applied in by our proofs.
Section~\ref{sec:hardness_results} provides the reductions and proves their functionality.
A short conclusion completes the paper.
This paper is an extended abstract of the technical report \cite{T2019b}. 
The proofs that had to be removed due to space limitation are given in \cite{T2019b}.

\section{Preliminaries}\label{sec:preliminaries}%

See Figure~\ref{fig:types} and Figure~\ref{fig:example_prelis} for an example of the notions defined in this section.
A \emph{transition system} (TS for short) $A = (S,E,\delta)$ consists of finite disjoint sets $S$ of states and $E$ of events and a partial \emph{transition function} $\delta: S\times E\rightarrow S$.
Usually, we think of $A$ as an edge-labeled directed graph with node set $S$ where every triple $\delta(s,e)=s'$ is interpreted as an $e$-labeled edge $s\edge{e}s'$, called \emph{transition}.
We say that an event $e$ \emph{occurs} at state $s$ if $\delta(s,e)=s'$ for some state $s'$ and abbreviate this with $s\edge{e}$.
This notation is extended to words $w'=wa$, $w\in E^*, a\in E$ by inductively defining $s\edge{\varepsilon}s$ for all $s\in S$ and $s\edge{w'}s''$ if and only if $s\edge{w}s'$ and $s'\edge{a}s''$. 
If $w\in E^*$ then $s\edge{w}$ denotes that there is a state $s'\in S$ such that $s\edge{w}s'$.
An \emph{initialized} TS $A=(S,E,\delta, s_0)$ is a TS with an initial state $s_0 \in S$ where every state is \emph{reachable}: $\forall s\in S, \exists w\in E^*: s_0\edge{w}s$.
The language of $A$ is the set $L(A)=\{w\in E^* \mid s_{0}\edge{w}\}$.
In the remainder of this paper, if not explicitly stated otherwise, we assume all TSs to be initialized and we refer to the components of an (initialized) TS $A$ consistently by $A=(S_A, E_A, \delta_A, s_{0,A})$.

The following notion of \emph{types of nets} has been developed in~\cite{DBLP:series/txtcs/BadouelBD15}.
It allows us to uniformly capture several Petri net types in one general scheme. 
Every introduced Petri net type can be seen as an instantiation of this general scheme.
A type of nets $\tau$ is a TS $\tau=(S_\tau, E_\tau,\delta_\tau)$ and a Petri net $N = (P, T, f, M_0)$ of type $\tau$, $\tau$-net for short, is given by finite and disjoint sets $P$ of places and $T$ of transitions, an initial marking $M_0: P\longrightarrow S_\tau$, and a flow function $f: P \times T \rightarrow E_\tau$. 
The meaning of a $\tau$-net is to realize a certain behavior by cascades of firing transitions. 
In particular, a transition $t \in T$ can fire in a marking $M: P \longrightarrow S_\tau$ and thereby produces  the marking $M': P \longrightarrow S_\tau$ if for all $p\in P$ the transition $M(p)\edge{f(p,t)}M'(p)$ exists in $\tau$. 
This is denoted by $M \edge{t} M'$. 
Again, this notation extends to sequences $\sigma \in T^*$.
Accordingly, $RS(N)=\{M \mid \exists \sigma\in T^*: M_0\edge{\sigma}M \}$ is the set of all reachable markings of $N$.
Given a $\tau$-net $N=(P, T, f,M_0)$, its behavior is captured by the TS $A_N=(RS(N), T,\delta, M_0)$, called the state graph of $N$, where for every reachable marking $M$ of $N$ and transition $t \in T$ with $M \edge{t} M'$ the transition function $\delta$ of $A_N$ is defined by $\delta(M,t) = M'$.

The following notion of $\tau$-regions allows us to define the type related ESSP and SSP.
If $\tau$ is a type of nets then a $\tau$-region of a TS $A$ is a pair of mappings $(sup, sig)$, where $sup: S_A \longrightarrow S_\tau$ and $sig: E_A\longrightarrow E_\tau$, such that, for each transition $s\edge{e}s'$ of $A$, we have that $sup(s)\edge{sig(e)}sup(s')$ is a transition of $\tau$.
Two distinct states $s,s'\in S_A$ define an \emph{SSP atom} $(s,s')$, which is said to be $\tau$-solvable if there is a $\tau$-region $(sup, sig)$ of $A$ such that $sup(s)\not=sup(s')$. 
An event $e\in E_A$ and a state $s\in S_A$ at which $e$ does not occur, that is $\neg s\edge{e}$, define an \emph{ESSP atom} $(e,s)$.
The atom is said to be $\tau$-solvable if there is a $\tau$-region $(sup, sig)$ of $A$ such that $\neg sup(s)\edge{sig(e)}$. 
A $\tau$-region solving an ESSP or a SSP atom $(x,y)$ is a \emph{witness} for the $\tau$-solvability of $(x,y)$.
A TS $A$ has the $\tau$-ESSP ($\tau$-SSP) if all its ESSP (SSP) atoms are $\tau$-solvable.
Naturally, $A$ is said to be $\tau$-feasible if it has the $\tau$-ESSP and the $\tau$-SSP.
The following fact is well known from~\cite[p.161]{DBLP:series/txtcs/BadouelBD15}: 
A set $\mathcal{R}$ of $\tau$-regions of $A$ contains a witness for all ESSP and SSP atoms if and only if the \emph{synthesized $\tau$-net} $N^{\mathcal{R}}_A=(\mathcal{R} , E_A, f, M_0)$ has a state graph that is isomorphic to $A$.
The flow function of $N^{\mathcal{R}}_A$ is defined by $f((sup, sig), e)= sig(e)$ and its initial marking is $M_0((sup,sig))=sup(s_{0,A})$ for all $(sup, sig) \in \mathcal{R}, e\in E_A$ .
The regions of $\R$ become places and the events of $E_A$ become transitions of $N^{\mathcal{R}}_A$.
Hence, for a $\tau$-feasible TS $A$ where $\mathcal{R}$ is known, we can synthesize a net $N$ with state graph isomorphic to $A$ by constructing $N^{\mathcal{R}}_A$.
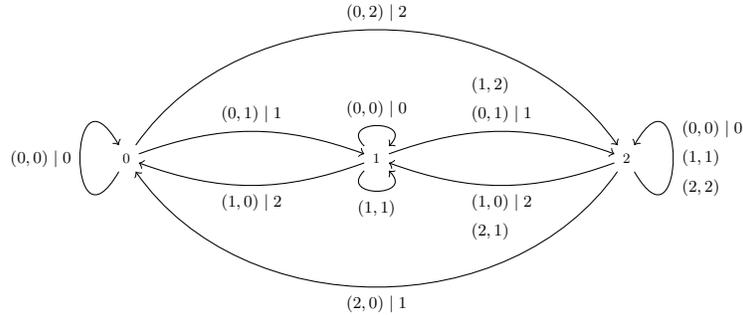
\begin{figure}[b!]
\centering
\begin{tikzpicture}[scale = 0.95]
\begin{scope}%

\node (0) at (0,0) {\nscale{$0$}};
\node (1) at (3.5,0) {\nscale{$1$}};
\node (2) at (7,0) {\nscale{$2$}};

\path (0) edge [<-, out=120,in=-120,looseness=10] node[left] {\escale{$(0,0) \mid 0$}} (0);
\path (1) edge [->, out=225,in=-45,looseness=4] node[below] { \escale{$(1,1)$}  } (1);
\path (1) edge [->, out=-225,in=45,looseness=4] node[above] {  \escale{$(0,0)  \mid  0$}  } (1);
\path (2) edge [->, out=-60,in=60,looseness=10] node[right, align= left] {\escale{$(0,0)  \mid  0$} \\ \escale{$(1,1)$} \\ \escale{$(2,2)$}  } (2);
\path (1) edge [->, bend left=20] node[above, align= left] {   \escale{$(1,2)$}  \\  \escale{$(0,1) \mid 1$}}   (2);
\path (1) edge [<-, bend right=20] node[below, align= left] {   \escale{$(1,0) \mid 2$} \\   \escale{$(2,1)$}   }   (2);
\graph{ 

(0) ->[bend left= 20, "\escale{$(0,1)\mid 1$}"] (1); 
(0) ->[bend left= 55, "\escale{$(0,2)\mid 2$}"] (2); 
(0) <-[bend right= 20,swap,  "\escale{$(1,0)\mid 2$}"] (1); 
(0) <-[bend right= 55, swap, "\escale{$(2,0) \mid 1$}"] (2); 
};
\end{scope}

\end{tikzpicture}
\caption{%
The types $\tau^2_0,\tau^2_1,\tau^2_2$ and $\tau^2_3$.
$\tau^2_0$ is sketched by the $(m,n)$-labeled transitions where edges with different labels represent different transitions.
Discard from $\tau^2_0$ the $(1,1)$, $(1,2)$, $(2,1)$ and $(2,2)$ labeled transitions to get $\tau^2_1$ and 
add for $i\in \{0,1,2\}$ the $i$-labeled transitions and remove $(0,0)$ to have $\tau^2_2$.
Discarding $(1,1),(1,2),(2,1),(2,2)$ leads from $\tau^2_2$ to $\tau^2_3$.
}
\label{fig:types}
\end{figure}

In this paper, we deal with the following $b$-bounded types of Petri nets:
\begin{enumerate}
\item
The type of \emph{$b$-bounded P/T-nets} is defined by $\tau^b_0=(\{0,\dots, b\}, \{0,\dots, b\}^2,\delta_{\tau^b_0})$ where for $s\in S_{\tau^b_0}$ and $(m,n)\in E_{\tau^b_0}$ the transition function is defined by $\delta_{\tau^b_0}(s,(m,n))=s-m+n$ if $s\geq m$ and $ s-m+n \leq b$, and undefined otherwise.

\item
The type of \emph{pure $b$-bounded P/T-nets} is a restriction of $\tau^b_0$-nets that discards all events $(m,n)$ from $E_{\tau^b_0}$ where both, $m$ and $n$, are positive. 
To be exact, $\tau^b_1=(\{0,\dots, b\}, E_{\tau^b_0} \setminus \{(m,n) \mid 1 \leq m,n \leq b\}, \delta_{\tau^b_1})$, and for $s\in S_{\tau^b_1}$ and $e\in E_{\tau^b_1}$ we have $\delta_{\tau^b_1}(s,e)=\delta_{\tau^b_0}(s,e)$.

\item
The type of \emph{$\mathbb{Z}_{b+1}$-extended $b$-bounded P/T-nets} origins from $\tau^b_0$ by extending the event set $E_{\tau^b_0}$ with the elements $0,\dots, b$. 
The transition function additionally simulates the addition modulo (b+1).
More exactly, this type is defined by $\tau^b_2=(\{0,\dots, b\}, (E_{\tau^b_0}\setminus \{(0,0)\}) \cup \{0,\dots, b\}, \delta_{\tau^b_2})$ where for $s\in S_{\tau^b_2}$ and $e\in E_{\tau^b_2}$ we have that $\delta_{\tau^b_2}(s,e)=\delta_{\tau^b_0}(s,e)$ if $e\in E_{\tau^b_0}$ and, otherwise, $\delta_{\tau^b_2}(s,e)=(s+e) \text{ mod } (b+1)$.

\item 
The type of \emph{$\mathbb{Z}_{b+1}$-extended pure $b$-bounded P/T-nets} is a restriction of $\tau^b_2$ being defined by $\tau^b_3=(\{0,\dots, b\}, E_{\tau^b_2}\setminus \{(m,n) \mid 1 \leq m,n \leq b\}, \delta_{\tau^b_3})$ where for $s\in S_{\tau^b_3}$ and $e\in E_{\tau^b_3}$ we have that $\delta_{\tau^b_3}(s,e)=\delta_{\tau^b_2}(s,e)$.
\end{enumerate}
Notice that the type $\tau^1_3$ coincides with Schmitt's type for which the considered decision problems and synthesis become tractable \cite{DBLP:conf/stacs/Schmitt96}.
Moreover, in \cite{TR2019a} it has been shown that $\tau^1_2$, a generalization of $\tau^1_3$, allows polynomial time synthesis, too.
Hence, in the following, if not explicitly stated otherwise, for $\tau\in \{\tau^b_0,\tau^b_1\}$ we let $b\in \mathbb{N}^+$ and for $\tau\in \{\tau^b_2,\tau^b_3\}$ we let $2\leq b\in \mathbb{N}$.
If $\tau\in \{\tau^b_0,\tau^b_1, \tau^b_2,\tau^b_3\}$ and if $(sup, sig)$ is a $\tau$-region of a TS $A$ then for $e\in E_A$ we define $sig^-(e)=m$ and $sig^+(e)=n$ and $\vert sig(e)\vert =0$ if $sig(e)=(m,n)\in E_\tau$, respectively $sig^-(e)=sig^+(e)=0$ and $\vert sig(e)\vert =sig(e)$ if $sig(e)\in \{0,\dots, b\}$.
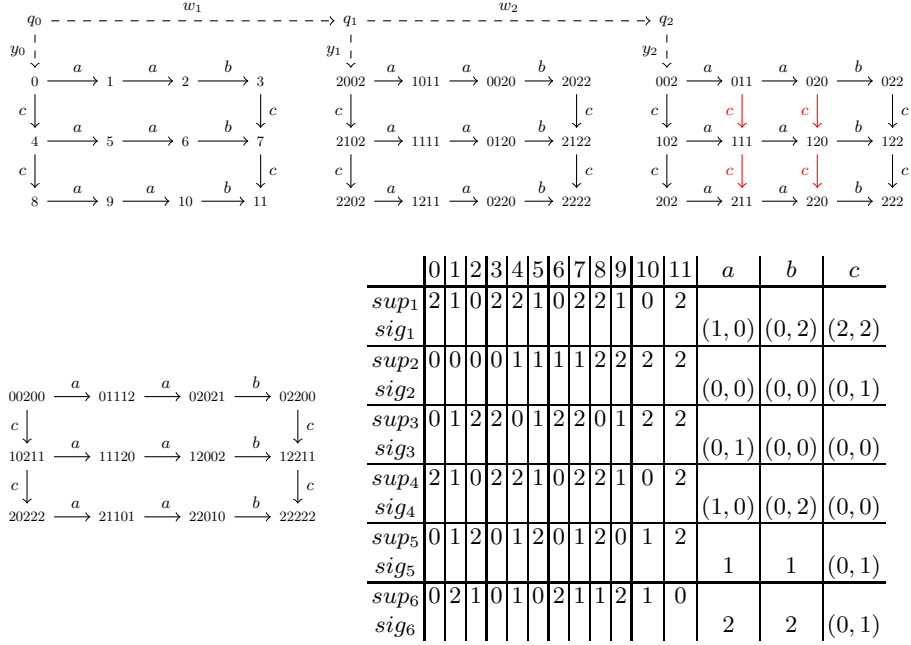
\begin{figure}[t!]
\centering
\begin{minipage}[t!]{1\textwidth}
\centering
\begin{tikzpicture}
\begin{scope}
\node (0) at (0,0) {\nscale{$0$}};
\node (1) at (1,0) {\nscale{$1$}};
\node (2) at (2,0) {\nscale{$2$}};
\node (3) at (3,0) {\nscale{$3$}};
\node (4) at (0,-0.8) {\nscale{$4$}};
\node (5) at (1,-0.8) {\nscale{$5$}};
\node (6) at (2,-0.8) {\nscale{$6$}};
\node (7) at (3,-0.8) {\nscale{$7$}};
\node (8) at (0,-1.6) {\nscale{$8$}};
\node (9) at (1,-1.6) {\nscale{$9$}};
\node (10) at (2,-1.6) {\nscale{$10$}};
\node (11) at (3,-1.6) {\nscale{$11$}};

\path (0) edge [->] node[pos=0.6,above] {\escale{$a$}} (1);
\path (1) edge [->] node[pos=0.6,above] {\escale{$a$}} (2);
\path (2) edge [->] node[pos=0.6,above] {\escale{$b$}} (3);
\path (4) edge [->] node[pos=0.6,above] {\escale{$a$}} (5);
\path (5) edge [->] node[pos=0.6,above] {\escale{$a$}} (6);
\path (6) edge [->] node[pos=0.6,above] {\escale{$b$}} (7);
\path (8) edge [->] node[pos=0.6,above] {\escale{$a$}} (9);
\path (9) edge [->] node[pos=0.6,above] {\escale{$a$}} (10);
\path (10) edge [->] node[pos=0.6,above] {\escale{$b$}} (11);
\path (0) edge [->] node[left] {\escale{$c$}} (4);
\path (4) edge [->] node[left] {\escale{$c$}} (8);
\path (3) edge [->] node[right] {\escale{$c$}} (7);
\path (7) edge [->] node[right] {\escale{$c$}} (11);

\end{scope}

\node (q0) at (0,0.8) {\scalebox{0.7}{$q_0$}};
\path (q0) edge [->, dashed] node[left] {\escale{$y_0$}} (0);

\begin{scope}[xshift=4.2cm]
\node (0) at (0,0) {\nscale{$2002$}};
\node (1) at (1,0) {\nscale{$1011$}};
\node (2) at (2,0) {\nscale{$0020$}};
\node (3) at (3,0) {\nscale{$2022$}};
\node (4) at (0,-0.8) {\nscale{$2102$}};
\node (5) at (1,-0.8) {\nscale{$1111$}};
\node (6) at (2,-0.8) {\nscale{$0120$}};
\node (7) at (3,-0.8) {\nscale{$2122$}};
\node (8) at (0,-1.6) {\nscale{$2202$}};
\node (9) at (1,-1.6) {\nscale{$1211$}};
\node (10) at (2,-1.6) {\nscale{$0220$}};
\node (11) at (3,-1.6) {\nscale{$2222$}};

\path (0) edge [->] node[pos=0.6,above] {\escale{$a$}} (1);
\path (1) edge [->] node[pos=0.6,above] {\escale{$a$}} (2);
\path (2) edge [->] node[pos=0.6,above] {\escale{$b$}} (3);
\path (4) edge [->] node[pos=0.6,above] {\escale{$a$}} (5);
\path (5) edge [->] node[pos=0.6,above] {\escale{$a$}} (6);
\path (6) edge [->] node[pos=0.6,above] {\escale{$b$}} (7);
\path (8) edge [->] node[pos=0.6,above] {\escale{$a$}} (9);
\path (9) edge [->] node[pos=0.6,above] {\escale{$a$}} (10);
\path (10) edge [->] node[pos=0.6,above] {\escale{$b$}} (11);
\path (0) edge [->] node[left] {\escale{$c$}} (4);
\path (4) edge [->] node[left] {\escale{$c$}} (8);
\path (3) edge [->] node[right] {\escale{$c$}} (7);
\path (7) edge [->] node[right] {\escale{$c$}} (11);
\end{scope}

\node (q1) at (4.2,0.8) {\scalebox{0.7}{$q_1$}};
\path (q1) edge [->, dashed] node[left] {\escale{$y_1$}} (0);

\begin{scope}[xshift=8.4cm]
\node (0) at (0,0) {\nscale{$002$}};
\node (1) at (1,0) {\nscale{$011$}};
\node (2) at (2,0) {\nscale{$020$}};
\node (3) at (3,0) {\nscale{$022$}};
\node (4) at (0,-0.8) {\nscale{$102$}};
\node (5) at (1,-0.8) {\nscale{$111$}};
\node (6) at (2,-0.8) {\nscale{$120$}};
\node (7) at (3,-0.8) {\nscale{$122$}};
\node (8) at (0,-1.6) {\nscale{$202$}};
\node (9) at (1,-1.6) {\nscale{$211$}};
\node (10) at (2,-1.6) {\nscale{$220$}};
\node (11) at (3,-1.6) {\nscale{$222$}};

\path (0) edge [->] node[pos=0.6,above] {\escale{$a$}} (1);
\path (1) edge [->] node[pos=0.6,above] {\escale{$a$}} (2);
\path (2) edge [->] node[pos=0.6,above] {\escale{$b$}} (3);
\path (1) edge [->, red!=20] node[left] {\escale{$c$}} (5);
\path (2) edge [->, red!=20] node[left] {\escale{$c$}} (6);
\path (5) edge [->, red!=20] node[left] {\escale{$c$}} (9);
\path (6) edge [->, red!=20] node[left] {\escale{$c$}} (10);
\path (4) edge [->] node[pos=0.6,above] {\escale{$a$}} (5);
\path (5) edge [->] node[pos=0.6,above] {\escale{$a$}} (6);
\path (6) edge [->] node[pos=0.6,above] {\escale{$b$}} (7);
\path (8) edge [->] node[pos=0.6,above] {\escale{$a$}} (9);
\path (9) edge [->] node[pos=0.6,above] {\escale{$a$}} (10);
\path (10) edge [->] node[pos=0.6,above] {\escale{$b$}} (11);
\path (0) edge [->] node[left] {\escale{$c$}} (4);
\path (4) edge [->] node[left] {\escale{$c$}} (8);
\path (3) edge [->] node[right] {\escale{$c$}} (7);
\path (7) edge [->] node[right] {\escale{$c$}} (11);
\end{scope}
\node (q2) at (8.4,0.8) {\scalebox{0.7}{$q_2$}};
\path (q2) edge [->, dashed] node[left] {\escale{$y_2$}} (0);
\path (q0) edge [->, dashed] node[above] {\escale{$w_1$}} (q1);
\path (q1) edge [->, dashed] node[above] {\escale{$w_2$}} (q2);

\end{tikzpicture}
\end{minipage}
\vspace{0.5cm}

\begin{minipage}{0.32\textwidth}
\centering
\begin{tikzpicture}
\node (0) at (0,0) {\nscale{$00200$}};
\node (1) at (1.2,0) {\nscale{$01112$}};
\node (2) at (2.4,0) {\nscale{$02021$}};
\node (3) at (3.6,0) {\nscale{$02200$}};
\node (4) at (0,-0.8) {\nscale{$10211$}};
\node (5) at (1.2,-0.8) {\nscale{$11120$}};
\node (6) at (2.4,-0.8) {\nscale{$12002$}};
\node (7) at (3.6,-0.8) {\nscale{$12211$}};
\node (8) at (0,-1.6) {\nscale{$20222$}};
\node (9) at (1.2,-1.6) {\nscale{$21101$}};
\node (10) at (2.4,-1.6) {\nscale{$22010$}};
\node (11) at (3.6,-1.6) {\nscale{$22222$}};
\path (0) edge [->] node[pos=0.6,above] {\escale{$a$}} (1);
\path (1) edge [->] node[pos=0.6,above] {\escale{$a$}} (2);
\path (2) edge [->] node[pos=0.6,above] {\escale{$b$}} (3);
\path (4) edge [->] node[pos=0.6,above] {\escale{$a$}} (5);
\path (5) edge [->] node[pos=0.6,above] {\escale{$a$}} (6);
\path (6) edge [->] node[pos=0.6,above] {\escale{$b$}} (7);
\path (8) edge [->] node[pos=0.6,above] {\escale{$a$}} (9);
\path (9) edge [->] node[pos=0.6,above] {\escale{$a$}} (10);
\path (10) edge [->] node[pos=0.6,above] {\escale{$b$}} (11);
\path (0) edge [->] node[left] {\escale{$c$}} (4);
\path (4) edge [->] node[left] {\escale{$c$}} (8);
\path (3) edge [->] node[right] {\escale{$c$}} (7);
\path (7) edge [->] node[right] {\escale{$c$}} (11);
\end{tikzpicture}

\end{minipage}\hfill
\hspace{0.5cm}
\begin{minipage}[b]{0.6\textwidth}
\begin{tabular}{c|c|c|c|c|c|c|c|c|c|c|c|c|c|c|c}
& $0$ & $1$& $2$ & $3$ & $4$ & $5$ & $6$ & $7$ & $8$ & $9$&   $10$ & $11$ & $a$ & $b$ & $c$ \\ \hline
$sup_1$ & $2$ & $1$ & $0$ & $2$ & $2$ & $1$ & $0$ & $2$ & $2$ & $1$&   $0$ & $2$ & $ $ & $ $ & $ $ \\
$sig_1$ & $ $ & $ $ & $ $ & $ $ & $ $ & $ $ & $ $ & $ $ & $ $ & $ $&  & $ $   & $(1,0)$ & $(0,2)$ & $(2,2)$ \\ \hline 
$sup_2$ & $0$ & $0$ & $0$ & $0$ & $1$ & $1$ & $1$ & $1$ & $2$ & $2$&   $2$ & $2$ & $ $ & $ $ & $ $ \\
$sig_2$ & $ $ & $ $ & $ $ & $ $ & $ $ & $ $ & $ $ & $ $ & $ $ & $ $&  & $ $   & $(0,0)$ & $(0,0)$ & $(0,1)$ \\ \hline 
$sup_3$ & $0$ & $1$ & $2$ & $2$ & $0$ & $1$ & $2$ & $2$ & $0$ & $1$&   $2$ & $2$ & $ $ & $ $ & $ $ \\
$sig_3$ & $ $ & $ $ & $ $ & $ $ & $ $ & $ $ & $ $ & $ $ & $ $ & $ $&  & $ $   & $(0,1)$ & $(0,0)$ & $(0,0)$ \\ \hline 
$sup_4$ & $2$ & $1$ & $0$ & $2$ & $2$ & $1$ & $0$ & $2$ & $2$ & $1$&   $0$ & $2$ & $ $ & $ $ & $ $ \\
$sig_4$ & $ $ & $ $ & $ $ & $ $ & $ $ & $ $ & $ $ & $ $ & $ $ & $ $&  & $ $   & $(1,0)$ & $(0,2)$ & $(0,0)$ \\ \hline 
$sup_5$ & $0$ & $1$ & $2$ & $0$ & $1$ & $2$ & $0$ & $1$ & $2$ & $0$&   $1$ & $2$ & $ $ & $ $ & $ $ \\
$sig_5$ & $ $ & $ $ & $ $ & $ $ & $ $ & $ $ & $ $ & $ $ & $ $ & $ $&  & $ $   & $1$ & $1$ & $(0,1)$ \\ \hline 
$sup_6$ & $0$ & $2$ & $1$ & $0$ & $1$ & $0$ & $2$ & $1$ & $1$ & $2$&   $1$ & $0$ & $ $ & $ $ & $ $ \\
$sig_6$ & $ $ & $ $ & $ $ & $ $ & $ $ & $ $ & $ $ & $ $ & $ $ & $ $&  & $ $   & $2$ & $2$ & $(0,1)$ 
\end{tabular}
\end{minipage}
\caption{%
Upper left, bold lines: Input TS $A$.
Bottom right: 
$\tau^2_0$ regions $\mathcal{R}_1=\{R_1,R_2,R_3,R_4\}$ of $A$, where $R_i=(sup_i, sig_i)$ for $i\in \{1,\dots,4\}$.
At the same time $\tau^2_1$-regions $\mathcal{R}_2=\{R_2,R_3,R_4\}$ and $\tau^2_3$-regions $\mathcal{R}_3=\{R_2,R_3,R_4, R_5, R_6\}$ where for $\tau^2_3$ we simply replace every signature $(0,0)$ by $0$.
$\mathcal{R}_1$ $\tau^2_0$-solves all ESSP and SSP atoms.
While $\mathcal{R}_2$ does solve all SSP atoms it fails to solve the ESSP atoms $\{(c,1),(c,2),(c,5),(c,6)\}$.
By $\tau^2_1$'s lack of event $(2,2)$, these atoms are not $\tau^2_1$-solvable at all.
Obviously, $\tau^2_3$ compensates this deficiency by using the $\mathbb{Z}_3$-events $1,2$: $\mathcal{R}_3$ solves all atoms of $A$.
Upper middle, bold lines: The state graph $A_{N^{\mathcal{R}_1}_A}$ of the synthesized $\tau^2_0$-net $N^{\mathcal{R}_1}_A=(\mathcal{R}_1, \{a,b,c\}, f_1, 2002)$ with flow function $f_1(R_i,e)=sig_i(e)$ for $i\in \{1,\dots, 4\}$ and initial marking $M^1_0(R_1)=2, M^1_0(R_2)=0,M^1_0(R_3)=0$ and $M^1_0(R_4)=2$.
Every marking of $N^{\mathcal{R}_1}_A$'s places $R_1,R_2,R_3,R_4$ is denoted as a 4-tuple.
As $\mathcal{R}_1$ proves $A$'s $\tau^2_0$-feasibility, $N^{\mathcal{R}_1}_A$'s state graph is isomorphic to $A$.
Upper right, bold lines: The state graph $A_{N^{\mathcal{R}_2}_A}$ of the synthesized $\tau^2_1$-net $N^{\mathcal{R}_2}_A=(\mathcal{R}_2, \{a,b,c\}, f_2, 002)$ with flow function $f_1(R_i,e)=sig_i(e)$ for $i\in \{1,\dots, 4\}$ and initial marking $002$.
$A$ has no $\tau^2_1$-ESSP and, hence, $A_{N^{\mathcal{R}_2}_A}$ is not isomorphic to $A$.
Bottom left: The state graph $A_{N^{\mathcal{R}_3}_A}$ of the synthesized $\tau^2_3$-net $N^{\mathcal{R}_3}_A=(\mathcal{R}_3, \{a,b,c\}, f_3, 00200)$ with flow function $f_3(R_i,e)=sig_i(e)$ for $i\in \{2,\dots, 6\}$ and initial marking $00200$.
Again, $N^{\mathcal{R}_3}_A$'s state graph is isomorphic to $A$.
Top, bold and dashed lines:
The joining TS $A(U)=(S_U\cup \{q_0,q_1,q_2\}, E_U\cup\{y_0,y_1,y_2,w_0,w_1\}, \delta_{A(U)}, q_0)$ of the union $U=(A, A_{N^{\mathcal{R}_1}_A}, A_{N^{\mathcal{R}_2}_A})$. 
}\label{fig:example_prelis}
\end{figure}
The observations of the next Lemma are used to simplify our proofs:
\begin{lemma}\label{lem:observations} Let $\tau \in \{\tau^b_0, \tau^b_1, \tau^b_2, \tau^b_3\}$ and $A$ be a TS.
\begin{enumerate}
\item\label{lem:sig_summation_along_paths}
Two mappings $sup: S_A\longrightarrow S_\tau$ and $sig: E_A\longrightarrow E_\tau$ define a $\tau$-region of $A$ if and only if for every word $w=e_1\dots e_\ell \in E_\tau^*$ and state $s_0\in S_A$ the following statement is true:
If $s_{0}\edge{e_1} \dots \edge{e_\ell} s_\ell$, then $sup(s_{i})=sup(s_{i-1})-sig^-(e_i)+sig^+(e_i)+\vert sig(e)\vert$ for $i\in \{1,\dots, \ell\}$, where for $ \tau\in \{\tau^b_2, \tau^b_3\}$ this equation is considered modulo $(b+1)$.
That is, every region $(sup, sig)$ is implicitly completely defined by the signature $sig$ and the support of the initial state: $sup(s_{0,A})$.
\item\label{lem:absolute_value}
If $s_{0}, s_{1},\dots, s_{b}\in S_A$, $e\in E_A$ and $s_{0}\edge{e} \dots \edge{e} s_b$ then a $\tau$-region $(sup, sig)$ of $A$ satisfies $sig(e)= (m,n)$ with $m\not=n$ if and only if $(m,n) \in \{(1,0),(0,1)\}$.
If $sig(e)=(0,1)$ then $sup(s_{0})=0$ and $sup(s_b)=b$.
If $sig(e)=(1,0)$ then $sup(s_0)=b$ and $sup(s_b)=0$.
\end{enumerate}
\end{lemma}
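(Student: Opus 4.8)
The plan is to derive both statements directly from the explicit transition function $\delta_\tau$, splitting on whether the signature $sig(e)$ of an event is a pair $(m,n)\in E_{\tau^b_0}$ or a group element $g\in\{0,\dots,b\}$. The common engine is the single-step identity: for an edge $s\edge{e}s'$ of $A$, the region requirement $sup(s)\edge{sig(e)}sup(s')$ unfolds, by the definition of $\delta_\tau$, into the equation $sup(s')=sup(s)-sig^-(e)+sig^+(e)+\vert sig(e)\vert$ (read modulo $b+1$ for $\tau\in\{\tau^b_2,\tau^b_3\}$) subject to the event being enabled at $sup(s)$ and the result not exceeding $b$. Granting this, the forward direction of the first statement is immediate, since each edge of a region satisfies the identity and a routine induction on the length $\ell$ propagates it along any path $s_0\edge{e_1}\cdots\edge{e_\ell}s_\ell$.

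For the converse I would specialise the hypothesis to one-letter words, recovering the single-step identity on every edge, and then certify that $sup(s)\edge{sig(e)}sup(s')$ is an actual transition of $\tau$. The upper bound $sup(s')\le b$ is automatic because $sup$ ranges in $S_\tau=\{0,\dots,b\}$, and the group events of $\tau^b_2,\tau^b_3$ carry total transitions, so there is nothing to verify for them. For a pair event in the pure types $\tau^b_1,\tau^b_3$ one coordinate of $sig(e)$ is $0$, whence enabledness $sup(s)\ge sig^-(e)$ follows from $sup(s')=sup(s)-sig^-(e)\ge 0$. The remaining case, an impure pair $(m,n)$ with $m,n\ge 1$ in $\tau^b_0$ or $\tau^b_2$, is the one where the arithmetic identity does not by itself exhibit the enabledness $sup(s)\ge m$; I expect this bookkeeping to be the main obstacle and would dispatch it by a case analysis on the admissible pairs $(m,n)$ against the possible support values in $\{0,\dots,b\}$. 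With the equivalence settled, the closing remark is free: as $A$ is initialised and all states are reachable, iterating the identity along a path from $s_{0,A}$ fixes $sup$ from $sig$ and $sup(s_{0,A})$ alone.

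For the second statement I would feed the $e$-path $s_0\edge{e}\cdots\edge{e}s_b$ into the first part. With $d=sig^+(e)-sig^-(e)=n-m$, the single-step identity yields the arithmetic progression $sup(s_i)=sup(s_0)+i\,d$ for $0\le i\le b$, and all $b+1$ of these values must lie in the $(b+1)$-element set $\{0,\dots,b\}$. Hence $\vert d\vert\,b=\vert sup(s_b)-sup(s_0)\vert\le b$, forcing $\vert d\vert\le 1$, so $m\neq n$ gives $\vert d\vert=1$; the reverse implication is trivial since $(1,0)$ and $(0,1)$ both have unequal coordinates. It then remains to pin down the coordinates and the boundary supports. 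If $d=+1$ then $sup(s_b)=sup(s_0)+b\le b$ forces $sup(s_0)=0$ and $sup(s_b)=b$, and enabledness $sup(s_0)\ge m$ gives $m=0$, hence $(m,n)=(0,1)$; the case $d=-1$ is symmetric, giving $sup(s_0)=b$, $sup(s_b)=0$, and, from enabledness at the last step where $sup(s_{b-1})=b-(b-1)=1\ge m$ together with $n=m-1\ge 0$, the pair $(m,n)=(1,0)$. The only genuinely delicate point in the whole lemma is thus the enabledness accounting for impure pair events in the first statement; everything else is a direct unfolding of $\delta_\tau$ and elementary interval arithmetic.
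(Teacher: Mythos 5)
Your treatment of part~\ref{lem:absolute_value} is correct and essentially the paper's own argument: both derive the progression $sup(s_i)=sup(s_0)+i(n-m)$, force $\vert n-m\vert\le 1$ from the range $\{0,\dots,b\}$, and then pin down $(0,1)$ versus $(1,0)$ from the boundary values and enabledness (the paper does this via the estimate $n<m\le sup(s_{b-1})\le 1$, you via enabledness at $s_0$ resp.\ $s_{b-1}$ — the same computation packaged differently). The forward direction and the induction in part~\ref{lem:sig_summation_along_paths} are likewise fine.

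The genuine gap is exactly the point you flagged and then deferred: the converse of part~\ref{lem:sig_summation_along_paths} for events whose signature $(m,n)$ has both coordinates positive (possible in $\tau^b_0$ and $\tau^b_2$). There the promissory ``case analysis on the admissible pairs against the possible support values'' cannot be cashed, because the arithmetic identity is strictly weaker than enabledness. Concretely, take $b=2$, a single transition $s\edge{e}s'$ of $A$, $sig(e)=(2,1)$, $sup(s)=1$, $sup(s')=0$: the identity $sup(s')=sup(s)-2+1$ holds and both supports lie in $\{0,1,2\}$, yet $\delta_{\tau^2_0}(1,(2,1))$ is undefined since $1<2$, so $(sup,sig)$ is not a region. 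The same failure occurs for $(m,m)$ with $m\ge 1$ and $sup(s)=sup(s')<m$. So the ``if'' direction of the biconditional, read literally, is false for impure events; it is only valid for the pure types $\tau^b_1,\tau^b_3$, for group events, and for pairs with $\min(m,n)=0$, where your observation that $sup(s')\ge 0$ yields $sup(s)\ge sig^-(e)$ does close the argument. For comparison, the paper disposes of part~\ref{lem:sig_summation_along_paths} with the single sentence that it ``follows directly from the definitions,'' i.e.\ it silently elides the same issue; you were more careful in isolating it, but neither you nor the paper actually resolves it, and as stated it cannot be resolved — the equivalence would have to be restated with the additional side condition $sup(s_{i-1})\ge sig^-(e_i)$ (enabledness) to be correct for $\tau^b_0$ and $\tau^b_2$.
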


\section{The Concept of Unions}\label{sec:unions}%

For our reductions, we use the technique of \emph{component design} \cite{DBLP:books/fm/GareyJ79}.
Every implemented constituent is a TS locally ensuring the satisfaction of some constraints. 
Commonly, all constituents are finally joined together in a target instance (TS) such that all required constraints are properly globally translated.
However, the concept of unions saves us the need to actually create the target instance:

If $A_0, \dots, A_n$ are TSs with pairwise disjoint states (but not necessarily disjoint events) then $U(A_0, \dots, A_n)$ is their \emph{union} with set of states $S_U=\bigcup_{i=0}^n S_{A_i}$ and set of events $E_U=\bigcup_{i=0}^n E_{A_i}$.
For a flexible formalism, we allow to build unions recursively:
Firstly, we identify every TS $A$ with the union containing only $A$, that is, $A = U(A)$.   
Next, if $U_1= U(A^1_0,\dots,A^1_{n_1}), \dots, U_m=(A^m_0,\dots,A^n_{n_m})$ are unions then $U(U_1, \dots, U_m)$ is the evolved union $U(A^1_0, \dots, A^1_{n_1},\dots, A^m_0, \dots, A^n_{n_m})$.

The concepts of regions, SSP, and ESSP are transferred to unions $U = U(A_0, \dots, A_n)$ as follows:
A $\tau$-region $(sup, sig)$ of $U$ consists of $sup: S_U \rightarrow S_\tau$ and $sig: E_U \rightarrow E_\tau$ such that, for all $i \in \{0, \dots, n\}$, the projection $sup_i(s) = sup(s), s \in S_{A_i}$ and $sig_i(e) = sig(e), e \in E_{A_i}$ defines a region $(sup_i, sig_i)$ of $A_i$.
Then, $U$ has the $\tau$-SSP if for all distinct states $s, s' \in S_U$ of the \emph{same} TS $A_i$ there is a $\tau$-region $(sup,sig)$ of $U$ with $sup(s) \not= sup(s')$.
Moreover, $U$ has the $\tau$-ESSP if for all events $e \in E_U$ and all states $s \in S_U$ with $\neg s\edge{e}$ there is a $\tau$-region $(sup,sig)$ of $U$ where $sup(s) \edge{sig(e)}$ does not hold.
We say $U$ is $\tau$-feasible if it has the $\tau$-SSP and the $\tau$-ESSP.
In the same way, $\tau$-SSP and $\tau$-ESSP are translated to the state and event sets $S_U$ and $E_U$.

To merge a union $U = U(A_0, \dots, A_n)$ into a single TS, we define the joining $A(U)$ as the TS $A(U) = (S_U \cup Q, E_U \cup W \cup Y , \delta, q_0 )$ with additional connector states $Q=\{ q_0, \dots, q_n\}$ and fresh events $W=\{w_1, \dots, w_n\}$, $Y=\{ y_0, \dots, y_n\}$ connecting the individual TSs of $U$ by
\[\delta(s,e) = 
\begin{cases}
s_{0,A_i}, & \text{if } s = q_i \text{ and } e=y_i \text{ and } 0 \leq i \leq n,\\
q_{i+1}, & \text{if } s = q_i \text{ and } e=w_{i+1} \text{ and } 0 \leq i \leq n-1,\\
\delta_i(s,e), & \text{if } s \in S_{A_i} \text{ and } e \in E_{A_i} \text{ and } 0 \leq i \leq n
\end{cases}
\]
Hence, $A(U)$ puts the connector states into a chain with the events from $W$ and links the initial states of TSs from $U$ to this chain using events from $Y$.
For example, the upper part of Figure~\ref{fig:example_prelis} shows $A(U)$ where $U=(A, A_{N^{\mathcal{R}_1}_A},A_{N^{\mathcal{R}_2}_A})$.

In \cite{DBLP:conf/apn/TredupRW18,DBLP:conf/concur/TredupR18}, we have shown that a union $U$ is a useful vehicle to investigate if $A(U)$ has the $\tau$-feasibility, $\tau$-ESSP and $\tau$-SSP if $\tau=\tau^1_1$.
The following lemma generalizes this observation for $\tau\in \{ \tau^b_0, \tau^b_1, \tau^b_2, \tau^b_3\}$: 
\begin{lemma}\label{lem:union_validity}
Let $\tau \in \{\tau^b_0, \tau^b_1, \tau^b_2, \tau^b_3\}$.
If $U = U(A_0, \dots, A_n)$ of TSs $A_0, \dots, A_n$ is a union such that for every event $e\in E_U$ there is a state $s\in S_U$ with $\neg s\edge{e}$ then $U$ has the $\tau$-ESSP, respectively the $\tau$-SSP, if and only if $A(U)$ has the $\tau$-ESSP, respectively the $\tau$-SSP.
\end{lemma}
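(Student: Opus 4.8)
The plan is to relate the $\tau$-regions of $A(U)$ to those of $U$ by \emph{restriction} and \emph{extension}, and then to sort the separation atoms of $A(U)$ into those that already live inside $U$ and those that additionally involve a connector state of $Q$ or a fresh event of $W\cup Y$. The restriction direction settles the ``only if'' parts. If $(sup,sig)$ is a $\tau$-region of $A(U)$, then forgetting $Q$ and $W\cup Y$ yields a $\tau$-region of $U$, since every transition of a component $A_i$ is also one of $A(U)$, so the region condition is inherited (and each projection to $A_i$ is a region of $A_i$). Because the events of $E_U$ occur in $A(U)$ only inside the components while the fresh events occur only at connector states, an ESSP atom $(e,s)$ with $e\in E_U$, $s\in S_U$ is an atom of $A(U)$ iff it is one of $U$, and an SSP atom $(s,s')$ of the same $A_i$ is an atom of both; restricting a witness thus produces a witness in $U$ for exactly these atoms.

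For the converse I need an extension tool. From the definitions of $\tau^b_0,\dots,\tau^b_3$ one checks that for any two support values $x,y\in\{0,\dots,b\}$ there is an event $g\in E_\tau$ with $x\edge{g}y$ in $\tau$: take $(x-y,0)$ if $y\le x$ and $(0,y-x)$ if $y>x$, and the identity event ($(0,0)$ for $\tau^b_0,\tau^b_1$, the group event $0$ for $\tau^b_2,\tau^b_3$) when $x=y$. All of these lie in the pure and in the group-extended event sets. Since every fresh event of $W\cup Y$ occurs exactly once in $A(U)$, this lets me extend any $\tau$-region $(sup,sig)$ of $U$ to one of $A(U)$ with \emph{prescribed} supports $sup(q_0),\dots,sup(q_n)$: keep $sup,sig$ on $U$, fix the $sup(q_i)$, and choose for each $y_i$ and each $w_{i+1}$ a connecting event realizing the single transition it labels. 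Validity follows from Lemma~\ref{lem:observations}(\ref{lem:sig_summation_along_paths}), and the extension alters neither $sup$ nor $sig$ on $U$, so any blocking or separation achieved inside $U$ survives.

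Now the ``if'' parts. Atoms $(e,s)$ with $e\in E_U,\,s\in S_U$, and within-component SSP pairs, are solved by taking a witness in $U$ (available by hypothesis) and extending it. The atoms newly appearing in $A(U)$ are handled explicitly and need neither $U$'s ESSP nor $U$'s SSP: for a fresh event $e\in W\cup Y$ with unique source $p\in Q$, the region assigning support $b$ to $p$, support $0$ to all other states (with identity signatures on $E_U$ and connecting events on the remaining fresh events), and signature $(b,0)$ to $e$, is valid and enables $e$ only at support $b$; hence it blocks $e$ at every state other than $p$ and solves all atoms $(e,s)$, $s\ne p$, at once. Cross-component SSP pairs and pairs involving connector states are separated by the analogous constant/extreme regions that simply assign two different support values to the two states in question.

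I expect the ESSP atoms $(e,q_k)$ with $e\in E_U$ to be the main obstacle, and this is exactly where the hypothesis enters: such an atom always exists, because $E_U$-events never occur at connector states, yet $U$-ESSP asserts nothing about $q_k$. The assumption that every $e\in E_U$ is missing at some $s'\in S_U$ guarantees that $(e,s')$ is an ESSP atom of $U$; a $U$-witness $(sup,sig)$ for it satisfies $\neg sup(s')\edge{sig(e)}$, and extending it with the prescribed value $sup(q_k):=sup(s')$ yields $\neg sup(q_k)\edge{sig(e)}$, solving the atom. Beyond this conceptual point, the only genuinely delicate work is the uniform bookkeeping across the four types—reading the identity event as $(0,0)$ for $\tau^b_0,\tau^b_1$ and as $0$ for $\tau^b_2,\tau^b_3$, and confirming that the extreme signatures $(b,0),(0,b)$ and the connecting events used above indeed belong to each of $E_{\tau^b_0},\dots,E_{\tau^b_3}$.
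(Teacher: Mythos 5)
Your proof is correct and follows essentially the same route as the paper's: restrict regions of $A(U)$ to obtain regions of $U$ for the ``only if'' direction, extend regions of $U$ with prescribed connector supports via connecting events for the converse, invoke the hypothesis exactly for the atoms $(e,q_k)$ with $e\in E_U$, and dispatch the atoms involving fresh events and connector states by explicit extreme ($0$ versus $b$) regions. The only noticeable difference is that you also explicitly separate cross-component SSP pairs of $A(U)$, a case the paper's proof passes over silently but which is indeed settled by the same constant-per-component regions you describe.
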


\section{Main Result}\label{sec:hardness_results}%

\begin{theorem}\label{the:hardness_results}
\begin{enumerate}
\item\label{the:hardness_results_essp}
If $\tau \in \{\tau^b_0, \tau^b_1,\tau^b_2, \tau^b_3\}$ then to decide if a TS $A$ is $\tau$-feasible or has the $\tau$-ESSP is NP-complete.
\item\label{the:hardness_results_ssp}
If $\tau \in \{\tau^b_0, \tau^b_1\}$ then deciding whether a TS $A$ has the $\tau$-SSP is NP complete. 
\end{enumerate}
\end{theorem}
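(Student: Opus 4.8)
The plan is to establish both membership in NP and NP-hardness for each property and each type. For membership, I would use Lemma~\ref{lem:observations}(\ref{lem:sig_summation_along_paths}): a $\tau$-region is completely determined by $sig$ and the support $sup(s_{0,A})$ of the initial state, so a single region admits a representation of size polynomial in $|A|$, and checking that a guessed pair $(sup,sig)$ is a $\tau$-region reduces to verifying the local condition $sup(s)\edge{sig(e)}sup(s')$ for every transition of $A$, which is polynomial. Since $A$ has at most $|E_A|\cdot|S_A|$ ESSP atoms and at most $\binom{|S_A|}{2}$ SSP atoms, a witness set $\mathcal{R}$ solving all atoms of the relevant kind needs only polynomially many regions; I would therefore nondeterministically guess such an $\mathcal{R}$ and verify it in polynomial time, placing $\tau$-ESSP, $\tau$-SSP, and (by their conjunction) $\tau$-feasibility in NP.

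For hardness I would reduce from the monotone one-in-three 3-SAT problem. Given $\varphi$ over variables $V$ and clauses $C$, I would apply component design: build one small TS gadget per clause together with some global bookkeeping gadgets, collect them into a union $U_\varphi$, and set $A_\varphi = A(U_\varphi)$. Lemma~\ref{lem:union_validity} then lets me argue entirely at the level of $U_\varphi$, since the connector chain introduced by the joining only adds events $e$ for which some state $s$ satisfies $\neg s\edge{e}$. The events of $U_\varphi$ would include one event per variable of $\varphi$, and the reduction would be engineered so that the one-in-three satisfiability of $\varphi$ is equivalent to the $\tau$-solvability of a single distinguished \emph{key} atom. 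For item~\ref{the:hardness_results_essp} I would take a key ESSP atom $(k,s)$ and additionally design the gadgets so that \emph{all} SSP atoms are solvable unconditionally; then $\tau$-feasibility and $\tau$-ESSP both collapse to solvability of the key atom, which handles ``feasible or has the $\tau$-ESSP'' with one construction. For item~\ref{the:hardness_results_ssp} I would use a separate construction, valid only for $\tau^b_0,\tau^b_1$, built around a key SSP atom $(s,s')$, consistent with SSP being tractable for the group-extended types.

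The heart of the construction is to force any witness for the key atom to encode a truth assignment. Here I would exploit Lemma~\ref{lem:observations}(\ref{lem:absolute_value}): by routing each variable event $v$ along a path on which $v$ fires $b$ times in a row, its signature is pinned to $\{(1,0),(0,1)\}$ (respectively to a $\mathbb{Z}_{b+1}$-element in the $\tau^b_2,\tau^b_3$ cases), turning $sig(v)$ into a Boolean choice and simultaneously fixing the endpoint supports to $0$ and $b$. Each clause gadget for $\{x,y,z\}$ would then be shaped so that a region compatible with both the gadget and the key atom exists precisely when exactly one of $x,y,z$ receives the ``true'' signature, thereby encoding the one-in-three constraint. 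Proving the two directions of the resulting equivalence, that a satisfying assignment yields an explicit witness region and that conversely any witness region induces a one-in-three assignment, completes the reduction, and verifying that $A_\varphi$ is built in polynomial time is routine.

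The main obstacle I anticipate is the soundness (converse) direction: showing that every witness for the key atom must arise from a legitimate one-in-three assignment, with the genuinely delicate case being the group-extended types $\tau^b_2$ and $\tau^b_3$. There the extra $\mathbb{Z}_{b+1}$-events enlarge the admissible signature space, so the gadgets must be robust enough to exclude ``cheating'' regions that exploit modular addition to solve the key atom without respecting the intended Boolean semantics; pinning signatures via Lemma~\ref{lem:observations}(\ref{lem:absolute_value}) and tracking supports modulo $(b+1)$ along the gadget paths is what I expect to carry this step. A secondary design concern is uniformity: the unconditionally solvable auxiliary atoms should have witnesses already available in the most restrictive type $\tau^b_1$, so that a single family of gadgets serves all four types in item~\ref{the:hardness_results_essp}, while the key atom behaves identically across them.
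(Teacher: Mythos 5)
Your overall architecture coincides with the paper's: NP-membership by guessing and verifying polynomially many regions, and NP-hardness by a component-design reduction from monotone one-in-three 3-SAT that builds a union with a key ESSP atom (resp.\ a key SSP atom for $\tau^b_0,\tau^b_1$), transfers to the joining TS via Lemma~\ref{lem:union_validity}, pins variable signatures using Lemma~\ref{lem:observations}.\ref{lem:absolute_value} by making each variable event occur $b$ times in a row, and handles the group-extended types by a modular-arithmetic argument that excludes cheating regions.

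The one genuine gap is in your claim that, once all SSP atoms are made ``solvable unconditionally,'' $\tau$-feasibility and $\tau$-ESSP ``collapse to solvability of the key atom.'' That collapse additionally requires that \emph{every other ESSP atom} $(e,s)$ of $U_\tau$ be solvable whenever the key atom is; your plan never addresses these non-key ESSP atoms. This is not a formality: it is the content of the paper's Objective~\ref{int:suffiency} and the bulk of the proof of Lemma~\ref{lem:liaison}, which consists of explicitly exhibiting, for every event $e$ of $U_\tau$ and every state at which $e$ does not occur, a solving region (the long tables in the appendix), and it feeds back into the gadget design itself --- constructions rigid enough to force the completeness direction tend to create auxiliary ESSP atoms that are hard to solve, which is precisely why the paper uses different keys and translators for the different types. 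Without this step the reduction only shows that deciding the solvability of one fixed atom is NP-hard, not that deciding the $\tau$-ESSP or $\tau$-feasibility of $A_\varphi$ is. A minor secondary point: the paper reduces from the \emph{cubic} variant of monotone one-in-three 3-SAT (every variable occurs in exactly three clauses), which keeps the translator gadgets uniform; your unqualified variant would also work, but the case analysis for the non-key variable-event atoms relies on knowing exactly how often each variable reappears.
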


The proof of Theorem~\ref{the:hardness_results} bases on polynomial time reductions of the cubic monotone one-in-three $3$-SAT problem to $\tau$-ESSP, $\tau$-feasibility and $\tau$-SSP, respectively. 
The input for this decision problem is a boolean expression $\varphi=\{C_0, \dots, C_{m-1}\}$ with $3$-clauses $C_i = \{X_{i,0}, X_{i,1}, X_{i,2}\}$ containing unnegated boolean variables $X_{i,0}, X_{i,1}, X_{i,2}$. 
$V(\varphi)$ denotes the set of all variables of $\varphi$.
Every element $X\in V(\varphi)$ occurs in exactly three clauses implying that $V(\varphi)=\{X_0,\dots, X_{m-1}\}$.
Given $\varphi$, cubic monotone one-in-three $3$-SAT asks if there is a one-in-three model $M$ of $\varphi$.
$M$ is a subset of $V(\varphi)$ such that $\vert M \cap C_i \vert =1$ for all $i\in \{0,\dots,m-1\}$.

For Theorem~\ref{the:hardness_results}.\ref{the:hardness_results_essp}, we let $\tau\in \{\tau^b_0,\tau^b_1,\tau^b_2,\tau^b_3\}$ and reduce $\varphi$ to a union $U_\tau=(K_\tau, T_\tau)$ which consists of the \emph{key} $K_{\tau}$ and the \emph{translator} $T_\tau$, both unions of TSs.
The index $\tau$ emphasizes that the components actual peculiarity depends on $\tau$. 

For Theorem~\ref{the:hardness_results}.\ref{the:hardness_results_ssp} the reduction starts from $\varphi$ and results in a union  $W=(K ,T)$ consisting of \emph{key} $K$ and \emph{translator} $T$, both unions.
$W$ needs no index as it has the same shape for $\tau^b_0$ and $\tau^b_1$.

The key $K_\tau$ provides a key ESSP atom $\alpha_\tau=(k,s_\tau)$ with event $k$ and state $s_\tau$.
The key $K$ supplies a key SSP atom $\alpha=(s, s')$ with states $s,s'$.
The translators $T_\tau$ and $T$ represent $\varphi$ by using the variables of $\varphi$ as events.
The unions $K_\tau$ and $T_\tau$ as well as $W$ and $T$ share events which define their \emph{interface} $I_\tau=E_{K_\tau} \cap E_{T_\tau}$ and $I=E_{K} \cap E_{T}$.
The construction ensures via the interface that $K_\tau$ and $T_\tau$ just as $K$ and $T$ interact in way that satisfies the following objectives of \emph{completeness}, \emph{existence} and \emph{sufficiency}:

\begin{objective}[Completeness]\label{int:completeness}
Let $(sup, sig)$ be a region of $K_\tau$ ($K$) solving the key atom.
If $(sup', sig')$ is a region of $T_\tau$ ($T$) satisfying $sup'(e)=sup(e)$ for $e\in I_\tau$ ($e\in I$) then the signature of the variable events reveal a one-in-three model of $\varphi$.
\end{objective}

\begin{objective}[Existence]\label{int:existence}
There is a region $(sup_K, sig_K)$ of $K_\tau$ ($K$) which solves the key atom.
If $\varphi$ is one-in-three satisfiable then there is a region $(sup_T, sig_T)$ of  $T_\tau$ ($T$) such that $sig_T(e)=sig_K(e)$ for $e\in I_\tau$ ($e\in I$) 
\end{objective}

\begin{objective}[Suffiency]\label{int:suffiency}
If the key atom is $\tau$-solvable in $U_\tau$, respectively $W$, then $U_\tau$ has the $\tau$-ESSP and the $\tau$-SSP and $W$ has the $\tau$-SSP.
 \end{objective}

Objective~\ref{int:completeness} ensures that the $\tau$-ESSP just as the $\tau$-feasibility of $U_\tau$ implies the one-in-three satisfiability of $\varphi$, respectively.
More exactly, if $U_\tau$ has the $\tau$-ESSP or the $\tau$-feasibility then there is a $\tau$-region $(sup, sig)$ of $U_\tau$ that solves $\alpha_\tau$.
By definition, this yields corresponding regions $(sup_K , sig_K)$ of $K_\tau$ and $(sup_T, sig_T)$ of $T_\tau$:
$sup_{K_\tau}(s)=sig(s)$ and $sig_{K_\tau}(e)=sig(e)$ if $s\in S_{K_\tau}, e\in E_{K_\tau}$ and $sup_{T_\tau}(s)=sig(s)$ and $sig_{T_\tau}(e)=sig(e)$ if $s\in S_{T_\tau}, e\in E_{T_\tau}$.
Similarly, the $\tau$-SSP of $W$ implies proper regions of $K$ and $T$ by a region $(sup, sig)$ of $W$ which solves $\alpha$.
As $(sup, sig)$ solves $\alpha_\tau$ in $U_\tau$ ($\alpha$ in $W$) the region $(sup_K, sig_K)$ solves $\alpha_\tau$ in $K_\tau$ ($\alpha$ in $K$).
Hence, by Objective~\ref{int:completeness}, the region $(sup_T, sig_T)$ of $T_\tau$ ($T$) reveals a one-in-three model of $\varphi$.

Reversely, Objective~\ref{int:existence} ensures that a one-in-three model of $\varphi$ defines a region $(sup, sig)$ of $U_\tau=(K_\tau, T_\tau)$ solving the key atom $\alpha_\tau$:  
$sup(s)=sup_K(s)$ if $s\in S_{K_\tau}$ and $sup(s)=sup_T(s)$ if $s\in S_{T_\tau}$ as well as $sig(e)=sig_K(e)$ if  $e\in E_{K_\tau}$ and $sig(e)=sig_T(e)$ if $e\in E_{T_\tau}\setminus E_{K_\tau}$.
Similarly, we get a region of $W$ that solves $\alpha$.

Objective~\ref{int:suffiency} guarantees that the solvability of the key atom $\alpha_\tau$ in $U_\tau$ ($\alpha$ in $K$) implies the solvability of all ESSP atoms and SSP atoms of $U_\tau$ (SSP atoms of $W$). 
Hence, by objective~\ref{int:existence}, if $\varphi$ has a one-in-three model then $U_\tau$ has the $\tau$-ESSP and is $\tau$-feasible just as $W$ has the $\tau$-SSP.

The unions $U_\tau$ and $W$ satisfy the conditions of Lemma~\ref{lem:union_validity}.
Therefore, the joining TS $A(U_\tau)$ has the $\tau$-ESSP and is $\tau$-feasible if and only if $\varphi$ is one-in-three satisfiable.
Likewise, the TS $A(W)$ has the $\tau$-SSP if and only if there is a one-in-three model for $\varphi$.
By definition, every TS $A$ has at most $\vert S_A\vert ^2$ SSP, respectively $\vert S_A\vert \cdot \vert E_A\vert$ ESSP atoms.
Consequently, a non-deterministic Turing machine can verify a guessed proof of $\tau$-SSP, $\tau$-ESSP and $\tau$-feasibility in polynomial time in the size of $A$.
Hence, all decision problems are in NP.
All reductions are doable in polynomial time and deciding the one-in-three satisfiability of $\varphi$ is NP-complete.
Thus, our approach proves Theorem~\ref{the:hardness_results}.

In order to prove the functionality of the constituents and to convey the corresponding intuition without becoming too technical, we proceed as follows.
On the one hand, we precisely define the constituents of the unions for arbitrary bound $b$ and input instance $\varphi=\{C_0,\dots, C_{m-1}\}$, $C_i=\{X_{i,0}, X_{i,1}, X_{i,2}\}$, $i\in \{0,\dots, m-1\}$, $V(\varphi)=\{X_0,\dots, X_{m-1}\}$, and prove their functionality.
On the other hand, we provide for comprehensibility full examples for the types $\tau\in \{\tau^b_0,\tau^b_1\}$ and the unions  $U_\tau$ and $W$.
The illustrations also provide a $\tau$-region solving the corresponding key atom.
For a running example, the input instance is $\varphi_0=\{C_0,\dots, C_{5}\}$ with clauses $C_0=\{X_0,X_1,X_2\},\ C_1= \{X_2,X_0,X_3\},\ C_2= \{X_1,X_3,X_0\},\ C_3= \{X_2,X_4,X_5\},\ C_4=\{X_1,X_5,X_4\},\ C_5= \{X_4,X_3,X_5\}$ that allows the one-in-three model $\{X_0,X_4\}$.
A full example for $\tau\in \{\tau^b_2,\tau^b_3\}$ is given in \cite{T2019b}.
For further simplification, we reuse gadgets for several unions as far as possible.
This is not always possible as small differences between two types of nets imply huge differences in the possibilities to build corresponding regions:
The more complex (the transition function of) the considered types, the more difficult the task to connect the solvability of the key atom with the signature of the interface events, respectively to connect the signature of the interface events with an implied model.
Moreover, the more difficult these tasks, the more complex the corresponding gadgets.
Hence, less complex gadgets are useless for more complex types.
Reversely, the more complex the gadgets the more possibilities to solve all ESSP atoms and all SSP atoms are needed.
Hence, more complex gadgets are not useful for less complex types.
At the end, some constituents may differ only slightly at first glance but their differences have a crucial and necessary impact. 

Note, that some techniques of the proof of Theorem~\ref{the:hardness_results} are very general advancements of our previous work \cite{DBLP:conf/apn/TredupRW18,DBLP:conf/concur/TredupR18}.
For example, like in \cite{DBLP:conf/apn/TredupRW18,DBLP:conf/concur/TredupR18} the proof of Theorem~\ref{the:hardness_results} bases on reducing cubic monotone one-in-three $3$-SAT.
Moreover, we apply unions as part of \emph{component design} \cite{DBLP:books/fm/GareyJ79}.
However, the reductions in \cite{DBLP:conf/apn/TredupRW18,DBLP:conf/concur/TredupR18} fit only for the basic type $\tau^1_1$ and they are already useless for $\tau^1_0$.
They fit even less for $\tau^b_0$ and $\tau^b_1$ if $b\geq 2$ and certainly not for their group extensions. 

We proceed as follows. 
Section~\ref{sec:keys_1} and Section~\ref{sec:translators_1} introduce the keys $K_{\tau^b_0}, K_{\tau^b_1}, K$ and translators $T_{\tau^b_0}, T_{\tau^b_1} , T$ and prove their functionality.
Section~\ref{sec:group_extensions_keys} and Section~\ref{sec:group_extensions_translators} present $K_{\tau^b_2}, K_{\tau^b_3}$ and $T_{\tau^b_2}, T_{\tau^b_3}$ and carry out how they work.
Section~\ref{sec:liaison} proves that the keys and translators collaborate properly.

\subsection{The Unions $K_{\tau^b_0}$ and $K_{\tau^b_1}$ and $K$.}\label{sec:keys_1}%

Let $\tau\in \{\tau^b_0, \tau^b_1\}$. 
The aim of $K_\tau$ and $K$ is summarized by the next lemma:
\begin{lemma}\label{lem:key_unions_1}
The keys $K_{\tau}$ and $K$ implement the interface events $k_{0},\dots, k_{6m-1}$ and provide a key atom $a_{\tau}$ and $\alpha$, respectively, such that the following is true:

\begin{enumerate}
\item\label{lem:key_unions_1_completeness}\emph{(Completeness)}
If $(sup_K, sig_K)$ is $\tau$-region of $K_\tau$, respectively of $K$, that solves $\alpha_\tau$, respectively $\alpha$, then $sig_K(k_0)=\dots=sig_K(k_{6m-1})=(0,b)$ or $sig_K(k_0)=\dots=sig_K(k_{6m-1})=(b,0)$.
\item\label{lem:key_unions_1_existence}\emph{(Existence)}
There is a $\tau$-region $(sup_K, sig_K)$ of $K_\tau$, respectively of $K$, that solves $a_{\tau}$, respectively $\alpha$, such that $sig_K(k_0)=\dots=sig_K(k_{6m-1})=(0,b)$.
\end{enumerate}
\end{lemma}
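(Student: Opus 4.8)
The plan is to construct the key unions $K_{\tau^b_0}$, $K_{\tau^b_1}$ and $K$ as small gadgets built around a single event, say the key event $k$ (for the ESSP case) together with the interface events $k_0,\dots,k_{6m-1}$, so that any region solving the key atom is forced to assign each $k_j$ a signature of the form $(0,b)$ or $(b,0)$, and moreover all of them uniformly. The central tool for this is Lemma~\ref{lem:observations}.\ref{lem:absolute_value}: if I place the events $k_0,\dots,k_{6m-1}$ (or $k$ itself) along a directed $b$-step path $s_0\edge{k_j}\dots\edge{k_j}s_b$ inside the gadget, then any $\tau$-region with $sig_K(k_j)=(m,n)$ and $m\neq n$ is forced into $(1,0)$ or $(0,b)$-type behaviour. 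To get the stronger conclusion $sig_K(k_j)\in\{(0,b),(b,0)\}$ rather than $(0,1)/(1,0)$, I would instead arrange a chain of length $b$ along which a \emph{single} occurrence of $k_j$ must move the support from $0$ to $b$ (or $b$ to $0$) in one step; this pins $sig^+(k_j)=b$, $sig^-(k_j)=0$ (or vice versa), which is exactly what the lemma delivers when $0\edge{k_j}b$ is realised directly.

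First I would fix the precise shape of the key atom. For $K_\tau$ the atom is the ESSP atom $\alpha_\tau=(k,s_\tau)$, so I design the gadget so that $k$ does occur at some state but \emph{not} at $s_\tau$; solving $\alpha_\tau$ means finding a region with $\neg sup(s_\tau)\edge{sig(k)}$. I would choose $s_\tau$ and the local transitions at $k$ so that the only way to disable $sig(k)$ at $sup(s_\tau)$ is to have $sup(s_\tau)$ and the signature sit at the extreme values $0$ and $b$ — this is where the boundedness of the type is exploited, since $(m,n)$ with $m=0,n=b$ is enabled only at support $0$, and $(b,0)$ only at support $b$. Next I would \emph{couple} each interface event $k_j$ to $k$ by short synchronising paths inside the key, so that the value forced on $k$ propagates to every $k_j$; concretely, I chain the states that $k_0,\dots,k_{6m-1}$ traverse so that summing signatures along the chain (Lemma~\ref{lem:observations}.\ref{lem:sig_summation_along_paths}) leaves no room for $sig_K(k_j)$ other than the common extreme value. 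The completeness direction (part~\ref{lem:key_unions_1_completeness}) then follows by reading off these forced equalities; the existence direction (part~\ref{lem:key_unions_1_existence}) follows by exhibiting the explicit region with $sup_K(s_{0,K})$ and all $sig_K(k_j)=(0,b)$ and checking with Lemma~\ref{lem:observations}.\ref{lem:sig_summation_along_paths} that it is a genuine $\tau$-region solving the atom.

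For the SSP key $K$ the atom is $\alpha=(s,s')$, so solvability means a region with $sup(s)\neq sup(s')$; I would place $s,s'$ at the two ends of a $b$-step $k_j$-path and argue, again via Lemma~\ref{lem:observations}.\ref{lem:absolute_value}, that separating them forces the traversing events to the boundary signatures. Since $K$ must work uniformly for both $\tau^b_0$ and $\tau^b_1$, I would avoid any transition whose signature $(m,n)$ has both coordinates positive (those are absent from $\tau^b_1$), which is automatic once every forced signature is of the pure form $(0,b)$ or $(b,0)$.

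The main obstacle I expect is the \emph{uniformity} claim, namely forcing $sig_K(k_0)=\dots=sig_K(k_{6m-1})$ to be the \emph{same} one of the two extreme values rather than a mix of $(0,b)$'s and $(b,0)$'s on different $k_j$. A single $b$-path only pins each $k_j$ individually to $\{(0,b),(b,0)\}$; to synchronise their \emph{orientation}, I would route all the $k_j$ through a common backbone of states so that their signatures sum (modulo the boundedness constraint) along one directed walk, and a sign flip on any single $k_j$ would violate the support bounds $0\le sup\le b$ forced by the path. Getting this backbone right — small enough to keep the reduction polynomial, yet rigid enough to rule out mixed orientations for \emph{every} solving region — is the delicate part, and it is precisely where the gadget for $\tau^b_0/\tau^b_1$ will have to be tailored to the bound $b$.
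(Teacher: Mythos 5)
Your core idea for pinning an individual $k_j$ is the right one and matches the paper: make $k_j$ occur exactly once, sandwiched between states whose supports are forced to the extremes $0$ and $b$, so that $sig_K(k_j)\in\{(0,b),(b,0)\}$. You also correctly identify uniformity of orientation across $k_0,\dots,k_{6m-1}$ as the crux. However, there are two genuine gaps.

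First, your proposed uniformity mechanism --- routing all the $k_j$ through a common backbone so that ``a sign flip on any single $k_j$ would violate the support bounds'' --- does not work as stated. If the $k_j$ sit consecutively on one directed walk and each has signature in $\{(0,b),(b,0)\}$, then the supports are forced to \emph{alternate} between $0$ and $b$, so consecutive $k_j$ are forced to \emph{opposite} orientations, the exact opposite of what you need; and once you interleave other events to reset the support, the orientations of the $k_j$ are no longer coupled by the walk at all. The paper avoids chaining the $k_j$ entirely: each $k_j$ lives in its own small component $D_{j,0}$ (resp.\ $D_{j,1}$) of the form $\edge{o_0}\,\cdot\,\edge{k_j}\,\cdot\,\edge{o_1}\cdots$, and uniformity comes from the fact that all these components share the \emph{same} auxiliary events $o_0$ and $o_1$ (resp.\ $o_0,o_2$), whose signatures are globally pinned to $(0,b)/(0,1)$ or $(b,0)/(1,0)$ by a single head gadget $H_0$ (resp.\ $H_1$, $H_2$) containing the key atom. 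The shared-event coupling is what makes every $k_j$ inherit the same orientation.

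Second, for $\tau^b_0$ your claim that ``the only way to disable $sig(k)$ at $sup(s_\tau)$ is to sit at the extreme values'' overlooks the test events $(m,m)$, which are disabled at every state $s<m$. A region solving $(k,h_{0,4b+1})$ may well have $sig(k)=(m,m)$, and the paper must (and does) run a separate case analysis for this: from $sig(k)=(m,m)$ and the structure of $H_0$ (the $b$-fold runs of $z$ and the placement of $o_0,o_1$) it derives $sig(z)=(1,0)$, $sig(o_1)=(0,1)$ and $sig(o_0)=(0,b)$. Without this case your completeness argument is incomplete for $\tau^b_0$; it is precisely the reason the paper needs the more elaborate gadget $H_0$ for $\tau^b_0$ while $H_1$ and $H_2$ suffice for $\tau^b_1$ and for the SSP key $K$, where Lemma~\ref{lem:observations}.\ref{lem:absolute_value} immediately yields $sig(k)\in\{(1,0),(0,1)\}$.
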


Firstly, we introduce the keys $K_{\tau^b_0}, K_{\tau^b_1}$ and $K$ and show that they satisfy Lemma~\ref{lem:key_unions_1}.\ref{lem:key_unions_1_completeness}.
Secondly, we present corresponding $\tau$-regions which prove Lemma~\ref{lem:key_unions_1}.\ref{lem:key_unions_1_existence}.

\textbf{The union $\mathbf{K_{\tau^b_0}}$} contains the following TS $H_0$ which provides the ESSP atom $(k, h_{0, 4b+1})$:

\begin{tikzpicture}
\node (init) at (-0.75,0) {$H_{0}=$};
\node (h0) at (0,0) {\nscale{$h_{0,0}$}};
\node (h1) at (1,0) {};
\node (dots1) at (1.25,0) {\nscale{$\dots$}};
\node (h_b_1) at (1.5,0) {};
\node (h_b) at (2.5,0) {\nscale{$h_{0,b}$}};
\node (h_b+1) at (3.5,0) {};
\node (dots_2) at (3.75,0) {\nscale{$\dots$}};
\node (h_2b_1) at (4,0) {};
\node (h_2b) at (5,0) {\nscale{$h_{0,2b}$}};
\node (h_2b+1) at (6.5,0) {\nscale{$h_{0,2b+1}$}};
\node (h_2b+2) at (7.5,0) {};
\node (dots_3) at (7.75,0) {\nscale{$\dots$}};
\node (h_3b) at (8,0) {};
\node (h_3b+1) at (9,0) { \nscale{$h_{0,3b+1}$} };
\node (h_3b+2) at (9,-1) {};
\node (dots_4) at (8.75,-1) {\nscale{$\dots$}};
\node (h_4b) at (8.5,-1) { };
\node (h_4b+1) at (7.5,-1) { \nscale{$h_{0,4b+1}$} };
\node (h_4b+2) at (6.5,-1) {};
\node (dots_5) at (6.25,-1) {\nscale{$\dots$}};
\node (h_5b) at (6,-1) { };
\node (h_5b+1) at (5,-1) { \nscale{$h_{0,5b+1}$} };
\node (h_5b+2) at (4,-1) { };
\node (dots_5) at (3.75,-1) {\nscale{$\dots$}};
\node (h_6b) at (3.5,-1) { };
\node (h_6b+1) at (2.5,-1) { \nscale{$h_{0,6b+1}$} };
\graph { 
(h0) ->["\escale{$k$}"] (h1); 
(h_b_1)->["\escale{$k$}"] (h_b) ->["\escale{$z$}"] (h_b+1);
(h_2b_1)->["\escale{$z$}"] (h_2b)->["\escale{$o_0$}"] (h_2b+1)->["\escale{$k$}"] (h_2b+2);
(h_3b)->["\escale{$k$}"] (h_3b+1)->["\escale{$z$}"] (h_3b+2);
(h_4b)->[swap, "\escale{$z$}"] (h_4b+1)->[swap, "\escale{$o_1$}"] (h_4b+2);
(h_5b)->[swap, "\escale{$o_1$}"] (h_5b+1)->[swap, "\escale{$k$}"] (h_5b+2);
(h_6b)->[swap, "\escale{$k$}"] (h_6b+1);
};
\end{tikzpicture}

\noindent
$K_{\tau^b_0}$ also installs for $j\in \{0,\dots, 6m-1\}$ the TS $D_{j,0}$ providing interface event $k_{j}$:

\noindent
\begin{tikzpicture}[yshift=-5cm]
\node (init) at (-1,0) {$D_{0,j}=$};
\foreach \i in {0,...,3} {\coordinate (\i) at (\i*1.2,0);}
\foreach \i in {0,...,3} {\node (p\i) at (\i) {\nscale{$d_{j,0,\i}$}};}
\node (hdots_3) at (4.2,0) {\nscale{$\dots$}};
\node (db+2) at (4.4,0) {};
\node (db+3) at (5.5,0) { \nscale{$d_{j,0,b+3}$} };
\graph { (p0) ->["\escale{$o_{0}$}"] (p1) ->["\escale{$k_j$}"] (p2) ->["\escale{$o_{1}$}"] (p3);
(db+2) ->["\escale{$o_{1}$}"] (db+3);
};
\end{tikzpicture}

\noindent
Overall, $K_{\tau^b_0}=(H_0,D_{0,0}, \dots, D_{6m-1,0})$. 
\begin{proof}[Proof of Lemma~\ref{lem:key_unions_1}.\ref{lem:key_unions_1_completeness} for $\tau^b_0$]
For $j\in \{0,\dots, 6m-1\}$ the TSs $H_0$ and $D_{j,0}$ interact as follows:
If $(sup_K, sig_K)$ is a region of $K_{\tau^b_0}$ solving $(k, h_{0, 4b+1})$ then either $sig_K(o_0)=(0,b)$ \emph{and} $sig_K(o_1)=(0,1)$ or $sig_K(o_0)=(b,0)$ \emph{and} $sig_K(o_1)=(1,0)$.
By $\edge{o_0}d_{j,0,1}$, $d_{j,0,2}\edge{o_1}$ and Lemma~\ref{lem:observations}, 
if $sig_K(o_0)=(0,b), sig_K(o_1)=(0,1)$ then $sup_K(d_{j,0,1})=b$ and $sup_K(d_{j,0,2})=0$.
This implies $sig_K(k_j)=(b,0)$.
Similarly, $sig_K(o_0)=(b,0), sig_K(o_1)=(1,0)$ implies $sup_K(d_{j,0,1})=0$ and $sup_K(d_{j,0,2})=b$ yielding $sig_K(k_j)=(0,b)$.
Hence, it is $sig_K(k_0)=\dots=sig_K(k_{6m-1})=(b,0)$ or $sig_K(k_0)=\dots=sig_K(k_{6m-1})=(0,b)$.

To prove Lemma~\ref{lem:key_unions_1}.\ref{lem:key_unions_1_completeness} for $K_{\tau^b_0}$ it remains to argue that a $\tau^b_0$-region $(sig, sup)$ of $K_{\tau^b_0}$ solving $(k, h_{0, 4b+1})$ satisfies $sig_K(o_0)=(0,b), sig_K(o_1)=(0,1)$ or $sig_K(o_0)=(b,0), sig_K(o_1)=(1,0)$. 
Let $ E_{0}=\{ (m,m) \mid 0\le m \leq b\}$.

By definition, if $sig(k)=(m,m) \in E_0$ then $sup(h_{0,3b+1}),sup(h_{0,5b+1})\geq m$.
Event $(m,m)$ occurs at every state $s$ of $\tau^b_0$ satisfying $s\geq m$.
Hence, by $\neg h_{0,4b+1}\edge{(m,m)}$, we get $sup(h_{0,4b+1}) < m$. 
Observe, that $z$ occurs always $b$ times in a row.
Therefore, by $sup(h_{0,3b+1}) \geq m$, $sup(h_{0,4b+1}) < m$ and Lemma~\ref{lem:observations}, we have $sup(z)=(1,0)$, $sig(o_1)=(0,1)$ and immediately obtain $sup(h_{0,2b})=0$ and $sup(h_{0,3b+1})=b$.
Moreover, by $sig(k)=(m,m) $ and $sup(h_{0,3b+1})=b$ we get $sup(h_{0,2b+1})=b$ implying with $sup(h_{0,2b})=0$ that $sig(o_0)=(0,b)$.
Thus, we have $sig(o_0)=(0,b)$ and $sig(o_1)=(0,1)$.  

Otherwise, if $sig(k)\not\in E_0$, then Lemma~\ref{lem:observations} ensures $sig(k)\in \{(1,0), (0,1)\}$.
If $sig(k)=(0,1)$ then, by $s\edge{(0,1)}$ for every state $s\in \{0,\dots, b-1\}$ of $\tau^b_0$, we have $sup(h_{0, 4b+1})=b$.
Moreover, again by $sig(k)=(0,1)$ we have $sup(h_{0,b})=sup(h_{0,3b+1})=b$ and $sup(h_{0,2b+1})=sup(h_{0, 5b+1})=0$.
By $sup(h_{0, 3b+1})=sup(h_{0,4b+1})=b$ we have $sig(z)\in E_0$ which together with $sup(h_{0,b})=b$ implies $sup(h_{0, 2b})=b$.
Thus, by $sup(h_{0,2b})=b$ and $sup(h_{0, 2b+1})=0$, it is $sig(o_0)=(b,0)$.
Moreover, by $sup(h_{0, 4b+1})=b$ and $sup(h_{0, 5b+1 })=0$, we conclude $sig(o_1)=(1,0)$.
Hence, we have $sig(o_0)=(b,0)$ and $sig(o_1)=(1,0)$. 
Similar arguments show that $sig_K(k)=(1,0)$ implies $sig(o_0)=(0,b)$ and $sig(o_1)=(0,1)$. 
Overall, this proves the announced signatures of $o_0$ and $o_1$.
Hence, $K_{\tau^b_0}$ satisfies Lemma~\ref{lem:key_unions_1}.\ref{lem:key_unions_1_completeness}.
\end{proof}

\textbf{The union $\mathbf{K_{\tau^b_1}}$} uses the next TS $H_1$ to provide the key atom $(k, h_{1,2b+4})$:

\noindent
\begin{tikzpicture}
\node at (-0.75,0) {$H_1=$};
\node (h0) at (0,0) {\nscale{$h_{1,0}$}};
\node (h1) at (1,0) {\nscale{}};
\node (h_2_dots) at (1.25,0) {\nscale{$\dots$}};
\node (h_k_1) at (1.5,0) {};
\node (h_k) at (2.5,0) {\nscale{$h_{1,b}$}};
\node (h_k+1) at (3.7,0) {\nscale{$h_{1,b+1}$}};
\node (h_k+2) at (4.9,0) {\nscale{$h_{1,b+2}$}};
\node (h_k+3) at (6.1,0) {\nscale{}};
\node (h_k+4_dots) at (6.35,0) {\nscale{$\dots$}};
\node (h_2k+1) at (6.6,0) {\nscale{}};
\node (h_2k+2) at (7.8,0) {\nscale{$h_{1,2b+2}$}};
\node (h_2k+3) at (9.2,0) {\nscale{$h_{1,2b+3}$}};
\node (h_2k+4) at (10.6,0) {\nscale{$h_{1,2b+4}$}};
\node (h_2k+5) at (10.6,-1) {\nscale{$h_{1,2b+5}$}};
\node (h_2k+6) at (9.4,-1) {};
\node (h_k+5_dots) at (9.15,-1) {\nscale{$\dots$}};
\node (h_3k+4) at (8.9,-1) {};
\node (h_3k+5) at (7.7,-1) {\nscale{$h_{1,3b+5}$}};
\graph { (h0) ->["\escale{$k$}"] (h1) 
(h_k_1)->["\escale{$k$}"] (h_k) ->["\escale{$z_0$}"] (h_k+1)->["\escale{$o_0$}"] (h_k+2)->["\escale{$k$}"] (h_k+3);
(h_2k+1)->["\escale{$k$}"] (h_2k+2)->["\escale{$z_1$}"] (h_2k+3)->["\escale{$z_0$}"] (h_2k+4)->["\escale{$o_2$}"] (h_2k+5)->[swap, "\escale{$k$}"] (h_2k+6);
(h_3k+4)->[swap, "\escale{$k$}"] (h_3k+5);
;};
\end{tikzpicture}

\noindent
Furthermore, $K_{\tau^b_1}$ contains for $j\in \{0,\dots, 6m-1\}$ the TS $D_{j,1}$ which provides the interface event $k_j$:
\noindent
\begin{tikzpicture}[baseline=-2pt]
\node at (-1,0) {$D_{j,1}=$};
\foreach \i in {0,...,3} {\coordinate (\i) at (\i*1.4,0);}
\foreach \i in {0,...,3} {\node (p\i) at (\i) {\nscale{$d_{j,1,\i}$}};}
\graph { (p0) ->["\escale{$o_{0}$}"] (p1) ->["\escale{$k_j$}"] (p2) ->["\escale{$o_{2}$}"] (p3);};
\end{tikzpicture}

\noindent
Altogether, $K_{\tau^b_1}=U(H_1,D_{0,1},\dots, D_{6m-1,1})$.
\begin{proof}[Proof of Lemma~\ref{lem:key_unions_1}.\ref{lem:key_unions_1_completeness} for $\tau^b_1$]
For $j\in \{0,\dots, 6m-1\}$ the TSs $H_1$ and $D_{j,1}$ interact as follows:
If $(sup_K, sig_K)$ is a $\tau^b_1$-region of $K_{\tau^b_1}$ solving $(k, h_{1,2b+4})$ then either $sig_K(o_0)=sig_K(o_2)=(b,0)$ or $sig_K(o_0)=sig_K(o_2)=(0,b)$.
Clearly, $sig_K(o_0)=sig_K(o_2)=(b,0)$, respectively $sig_K(o_0)=sig_K(o_2)=(0,b)$, implies $sig_K(k_0)=\dots=sig_K(k_{6m-1})=(0,b)$, respectively $sig_K(k_0)=\dots=sig_K(k_{6m-1})=(b,0)$.

We argue that the $\tau^b_1$-solvability of $(k, h_{1,2b+4})$ implies the announced signatures of $o_0,o_2$.
If $(sup_K, sig_K)$ is a $\tau^b_1$-region that solves $(k, h_{1,2b+4})$ then, by definition of $\tau^b_1$ and Lemma~\ref{lem:observations}, we get $sig_K(k)\in \{(1,0), (0,1)\}$.
Let $sig_K(k)=(0,1)$.
The event $(0,1)$ occurs at every $s\in \{0,\dots, b-1\}$ of $\tau^b_1$.
Hence, $\neg sup_K(h_{1,2b+4})\edge{(0,1)}$ implies $sup_K(h_{1,2b+4})=b$.
Moreover, $k$ occurs $b$ times in a row.
Thus, by $sig_K(k)=(0,1)$ and Lemma~\ref{lem:observations}, we obtain $sup_K(h_{1,b})=b$ and $sup_K(h_{1,b+2})=sup_K(h_{1,2b+5})=0$.
This implies, by $h_{1,2b+4}\edge{o_2}h_{1,2b+5}$, $sup(h_{1,2b+4})=b$ and $sup(h_{1,2b+5})=0$, that $sig(o_2)=(b,0)$.
Hence, by $sup_K(h_{1,b})=sup_K(h_{1,2b+4})=b$, $h_{1,b}\edge{z_0}$ and $\edge{z_0}h_{1,2b+4}$, we get $sig(z_0)=(0,0)$.
Finally, by $sup(h_{1,b})=b$, $h_{1,b}\edge{z_0}$ and $sig(z_0)=(0,0)$ we deduce $sup(h_{1,b+1})=b$.
Hence, by $h_{1,b+1}\edge{o_0}h_{1,b+2}$, $sup(h_{1,b+1})=b$ and $sup(h_{1,b+1})=0$ we have $sig(o_0)=(b,0)$.
Altogether, we have that $sig(o_0)=sig(o_2)=(b,0)$.
Similarly, one verifies that $sig_K(k)=(1,0)$ results in $sig(o_0)=sig(o_2)=(0,b)$.
This proves Lemma~\ref{lem:key_unions_1}.\ref{lem:key_unions_1_completeness} for $K_{\tau^b_1}$.
\end{proof}

\textbf{The union $\mathbf{K}$} uses the following TS $H_2$ to provide the key atom $(h_{2,0}, h_{2,b})$:

\noindent
\begin{tikzpicture}
\node at (-0.75,0) {$H_2=$};
\node (h0) at (0,0) {\nscale{$h_{2,0}$}};
\node (h1) at (1,0) {\nscale{}};
\node (hdots_1) at (1.25,0) {\nscale{$\dots$}};
\node (hb_1) at (1.5,0) {};
\node (hb) at (2.5,0) {\nscale{$h_{2,b}$}};
\node (hb+1) at (3.75,0) {\nscale{$h_{2,b+1}$}};
\node (hb+2) at (5,0) {};
\node (hdots_2) at (5.25,0) {\nscale{$\dots$}};
\node (h2b) at (5.5,0) {};
\node (h2b+1) at (6.5,0) {\nscale{$h_{2,2b+1}$}};
\node (h2b+2) at (7.75,0) {\nscale{$h_{2,2b+2}$}};
\node (h2b+3) at (9,0) {};
\node (hdots_3) at (9.25,0) {\nscale{$\dots$}};
\node (h3b+1) at (9.5,0) {};
\node (h3b+2) at (10.5,0) {\nscale{$h_{2,3b+2}$}};
\graph { 
(h0) ->["\escale{$k$}"] (h1); 
(hb_1)->["\escale{$k$}"] (hb) ->["\escale{$o_0$}"] (hb+1)->["\escale{$k$}"] (hb+2);
(h2b) ->["\escale{$k$}"] (h2b+1)->["\escale{$o_2$}"] (h2b+2)->["\escale{$k$}"] (h2b+3);
(h3b+1) ->["\escale{$k$}"] (h3b+2);
};
\end{tikzpicture}

\noindent
$K$ also contains the TSs $D_{0,1},\dots, D_{6m-1,1}$, thus $K=(H_2, D_{0,1},\dots, D_{6m-1,1})$.

\begin{proof}[Proof of Lemma~\ref{lem:key_unions_1}.\ref{lem:key_unions_1_completeness} for $\tau^b_2$]
$K$ works as follows:
The event $k$ occurs $b$ times in a row at $h_{2,0}$.
Therefore, by Lemma~\ref{lem:observations}, a region $(sup_K, sig_K)$ solving $(h_{2,0}, h_{2,b})$ satisfies $sig_K(k)\in \{(1,0), (0,1)\}$.
If $sig_K(k)=(1,0)$ then $sup_K(h_{2,b})=sup_K(h_{2,2b+1})=b$ and $sup_K(h_{2,b+1})=sup_K(h_{2,2b+2})=0$ implying $sig_k(o_0)=sig_k(o_2)=(b,0)$.
Otherwise, if $sig_K(k)=(0,1)$ then $sup_K(h_{2,b})=sup_K(h_{2,2b+1})=0$ and $sup_K(h_{2,b+1})=sup_K(h_{2,2b+2})=b$ which implies $sig_k(o_0)=sig_k(o_2)=(0,b)$.
As already discussed for $K_{\tau^b_1}$, we have that $sig_k(o_0)=sig_k(o_2)=(b,0)$ ($sig_k(o_0)=sig_k(o_2)=(0,b)$) implies $sig_K(k_j)=(0,b)$ ($sig_K(k_j)=(b,0)$) for $j\in \{0,\dots, 6m-1\}$.
Hence, Lemma~\ref{lem:key_unions_1}.\ref{lem:key_unions_1_completeness} is true for $K$.
\end{proof}
It remains to show that $K_{\tau^b_0}, K_{\tau^b_1}$ and $K$ satisfy the objective of \emph{existence}:
\begin{proof}[Proof of Lemma~\ref{lem:key_unions_1}.\ref{lem:key_unions_1_existence}]
We present corresponding regions.
Let $S$ and $E$ be the set of all states and of all events of $K,K_{\tau^b_0}$ and $K_{\tau^b_1}$, respectively.
We define mappings $sig:E\longrightarrow E_{\tau^b_1}$ and $sup: S \longrightarrow S_{\tau^b_1}$ by:

\[sig(e)=
\begin{cases}
(0,b), & \text{if } e\in \{k_{0},\dots, k_{6m-1}\} \\
(0,1), & \text{if } e = k\\
(0,0), &  \text{if } e \in \{z, z_0, z_1 \} \\
(1,0), & \text{if } e = o_1\\
(b,0), & \text{if } e\in \{o_0, o_2 \} \\
\end{cases}
sup(s)=
\begin{cases}
0 , & \text{if } s\in \{h_{0,0},h_{1,0}, h_{2,0}\} \\
b , & \text{if } s\in \{d_{j,0,0}, d_{j,1,0}\}\\
     &  \text{and } 0 \leq j \leq 6m-1
\end{cases}
\]
By $sig_{K}$, $sig_{K_{\tau^b_0}}$ and $sig_{K_{\tau^b_1}}$ ($sup_{K}$, $sup_{K_{\tau^b_0}}$ and $sup_{K_{\tau^b_1}}$) we denote the restriction of $sig$ ($sup$) to the events (states) of $K$, $K_{\tau^b_0}$ and $K_{\tau^b_1}$, respectively.
As $sup$ defines the support of every corresponding initial state, by Lemma~\ref{lem:observations}, we obtain fitting regions $(sup_{K}, sig_{K})$, $(sup_{K_{\tau^b_0}},sig_{K_{\tau^b_0}})$ and $(sup_{K_{\tau^b_1}},sig_{K_{\tau^b_1}})$ that solve the corresponding key atom.
Figure~\ref{fig:example_1} sketches this region for $K^0_{\tau^2_1}$ and $K$.

\end{proof}

\begin{figure}[t!]

\begin{tikzpicture}
\begin{scope}
\foreach \i in {0,...,7} {\coordinate (d\i) at (\i*1cm,0);}
\foreach \i in {0,...,7} {\node (t\i) at (d\i) {\nscale{$t_{0,0,\i}$}};}
\foreach \i in {0,...,7} {\coordinate (s\i) at (\i*1cm,-0.25cm);}
\foreach \i in {0,3,4,5,6} {\node[opacity=0.7](s\i) at (s\i) {\nscale{$[0]$}};}
\foreach \i in {2} {\node[opacity=0.7](s\i) at (s\i) {\nscale{$[1]$}};}
\foreach \i in {1,7} {\node[opacity=0.7](s\i) at (s\i) {\nscale{$[2]$}};}
\foreach \i in {0,...,6} {\coordinate (g\i) at (\i*1cm+0.5cm ,-0.25cm);}
\foreach \i in {0,6} {\node[opacity=0.7](g\i) at (g\i) {\nscale{$(0,2)$}};}
\foreach \i in {1,2} {\node[opacity=0.7](g\i) at (g\i) {\nscale{$(1,0)$}};}
\foreach \i in {3,4,5} {\node[opacity=0.7](g\i) at (g\i) {\nscale{$(0,0)$}};}
\graph { (t0) ->["\escale{$k_0$}"] (t1) ->["\escale{$X_0$}"] (t2) ->["\escale{$X_0$}"] (t3) ->["\escale{$x_0$}"] (t4) ->["\escale{$X_2$}"] (t5)->["\escale{$X_2$}"](t6)->["\escale{$k_1$}"](t7);

};
\end{scope}
\begin{scope}[yshift=-0.9cm]
\foreach \i in {0,...,5} {\coordinate (\i) at (\i*1cm,0);}
\foreach \i in {0,...,5} {\node (t\i) at (\i) {\nscale{$t_{0,1,\i}$}};}
\foreach \i in {0,...,5} {\coordinate (s\i) at (\i*1cm,-.25cm);}
\foreach \i in {0,4} {\node[opacity=0.7] (s\i) at (s\i) {\nscale{$[0]$}};}
\foreach \i in {1,2,3,5} {\node[opacity=0.7] (s\i) at (s\i) {\nscale{$[2]$}};}
\foreach \i in {0,...,4} {\coordinate (g\i) at (\i*1cm+0.5cm ,-0.25cm);}
\foreach \i in {0,4} {\node[opacity=0.7](g\i) at (g\i) {\nscale{$(0,2)$}};}
\foreach \i in {1,2} {\node[opacity=0.7](g\i) at (g\i) {\nscale{$(0,0)$}};}
\foreach \i in {3} {\node[opacity=0.7](g\i) at (g\i) {\nscale{$(2,0)$}};}
\graph { (t0) ->["\escale{$k_2$}"] (t1) ->["\escale{$X_1$}"] (t2) ->["\escale{$X_1$}"] (t3) ->["\escale{$p_0$}"] (t4) ->["\escale{$k_3$}"] (t5);
};
\end{scope}
\begin{scope}[xshift=6cm, yshift=-0.9cm]
\foreach \i in {0,...,4} {\coordinate (\i) at (\i*1cm,0);}
\foreach \i in {0,...,4} {\node (t\i) at (\i) {\nscale{$t_{0,2,\i}$}};}
\foreach \i in {0,...,4} {\coordinate (s\i) at (\i*1cm,-0.25cm);}
\foreach \i in {0,3} {\node[opacity=0.7] (s\i) at (s\i) {\nscale{$[0]$}};}
\foreach \i in {1,2,4} {\node[opacity=0.7] (s\i) at (s\i) {\nscale{$[2]$}};}
\foreach \i in {0,...,3} {\coordinate (g\i) at (\i*1cm+0.5cm ,-0.25cm);}
\foreach \i in {0,3} {\node[opacity=0.7](g\i) at (g\i) {\nscale{$(0,2)$}};}
\foreach \i in {1} {\node[opacity=0.7](g\i) at (g\i) {\nscale{$(0,0)$}};}
\foreach \i in {2} {\node[opacity=0.7](g\i) at (g\i) {\nscale{$(2,0)$}};}
\graph { (t0) ->["\escale{$k_4$}"] (t1) ->["\escale{$x_0$}"] (t2) ->["\escale{$p_0$}"] (t3) ->["\escale{$k_5$}"] (t4);
};
\end{scope}
\begin{scope}[yshift=-1.8cm]

\foreach \i in {0,...,7} {\coordinate (\i) at (\i*1cm,0);}
\foreach \i in {0,...,7} {\node (t\i) at (\i) {\nscale{$t_{1,0,\i}$}};}
\foreach \i in {0,...,7} {\coordinate (s\i) at (\i*1cm,-0.25cm);}
\foreach \i in {0,4,5,6} {\node[opacity=0.7](s\i) at (s\i) {\nscale{$[0]$}};}
\foreach \i in {1,2,3,7} {\node[opacity=0.7](s\i) at (s\i) {\nscale{$[2]$}};}
\foreach \i in {0,...,6} {\coordinate (g\i) at (\i*1cm+0.5cm ,-0.25cm);}
\foreach \i in {0,6} {\node[opacity=0.7](g\i) at (g\i) {\nscale{$(0,2)$}};}
\foreach \i in {3} {\node[opacity=0.7](g\i) at (g\i) {\nscale{$(2,0)$}};}
\foreach \i in {1,2,4,5} {\node[opacity=0.7](g\i) at (g\i) {\nscale{$(0,0)$}};}
\graph { (t0) ->["\escale{$k_6$}"] (t1) ->["\escale{$X_2$}"] (t2) ->["\escale{$X_2$}"] (t3) ->["\escale{$x_1$}"] (t4) ->["\escale{$X_3$}"] (t5)->["\escale{$X_3$}"](t6)->["\escale{$k_7$}"](t7);
};
\end{scope}
\begin{scope}[yshift=-2.7cm]

\foreach \i in {0,...,5} {\coordinate (\i) at (\i*1cm,0);}
\foreach \i in {0,...,5} {\node (t\i) at (\i) {\nscale{$t_{1,1,\i}$}};}
\foreach \i in {0,...,5} {\coordinate (s\i) at (\i*1cm,-0.25cm);}
\foreach \i in {0,3,4} {\node[opacity=0.7](s\i) at (s\i) {\nscale{$[0]$}};}
\foreach \i in {2} {\node[opacity=0.7](s\i) at (s\i) {\nscale{$[1]$}};}
\foreach \i in {1,5} {\node[opacity=0.7](s\i) at (s\i) {\nscale{$[2]$}};}
\foreach \i in {0,...,4} {\coordinate (g\i) at (\i*1cm+0.5cm ,-0.25cm);}
\foreach \i in {0,4} {\node[opacity=0.7](g\i) at (g\i) {\nscale{$(0,2)$}};}
\foreach \i in {1,2} {\node[opacity=0.7](g\i) at (g\i) {\nscale{$(1,0)$}};}
\foreach \i in {3} {\node[opacity=0.7](g\i) at (g\i) {\nscale{$(0,0)$}};}
\graph { (t0) ->["\escale{$k_8$}"] (t1) ->["\escale{$X_0$}"] (t2) ->["\escale{$X_0$}"] (t3) ->["\escale{$p_1$}"] (t4) ->["\escale{$k_9$}"] (t5);
};
\end{scope}
\begin{scope}[xshift=6cm, yshift=-2.7cm]

\foreach \i in {0,...,4} {\coordinate (\i) at (\i*1cm,0);}
\foreach \i in {0,...,4} {\node (t\i) at (\i) {\nscale{$t_{1,2,\i}$}};}
\foreach \i in {0,...,4} {\coordinate (s\i) at (\i*1cm,-0.25cm);}
\foreach \i in {0,2,3} {\node[opacity=0.7](s\i) at (s\i) {\nscale{$[0]$}};}
\foreach \i in {1,4} {\node[opacity=0.7](s\i) at (s\i) {\nscale{$[2]$}};}
\foreach \i in {0,...,3} {\coordinate (g\i) at (\i*1cm+0.5cm ,-0.25cm);}
\foreach \i in {0,3} {\node[opacity=0.7](g\i) at (g\i) {\nscale{$(0,2)$}};}
\foreach \i in {1} {\node[opacity=0.7](g\i) at (g\i) {\nscale{$(2,0)$}};}
\foreach \i in {2} {\node[opacity=0.7](g\i) at (g\i) {\nscale{$(0,0)$}};}
\graph { (t0) ->["\escale{$k_{10}$}"] (t1) ->["\escale{$x_1$}"] (t2) ->["\escale{$p_1$}"] (t3) ->["\escale{$k_{11}$}"] (t4);
};
\end{scope}
\begin{scope}[yshift=-3.6cm]

\foreach \i in {0,...,7} {\coordinate (\i) at (\i*1cm,0);}
\foreach \i in {0,...,7} {\node (t\i) at (\i) {\nscale{$t_{2,0,\i}$}};}
\foreach \i in {0,...,7} {\coordinate (s\i) at (\i*1cm,-0.25cm);}
\foreach \i in {0,6} {\node[opacity=0.7](s\i) at (s\i) {\nscale{$[0]$}};}
\foreach \i in {5} {\node[opacity=0.7](s\i) at (s\i) {\nscale{$[1]$}};}
\foreach \i in {1,2,3,4,7} {\node[opacity=0.7](s\i) at (s\i) {\nscale{$[2]$}};}
\foreach \i in {0,...,6} {\coordinate (g\i) at (\i*1cm+0.5cm ,-0.25cm);}
\foreach \i in {0,6} {\node[opacity=0.7](g\i) at (g\i) {\nscale{$(0,2)$}};}
\foreach \i in {4,5} {\node[opacity=0.7](g\i) at (g\i) {\nscale{$(1,0)$}};}
\foreach \i in {1,2,3} {\node[opacity=0.7](g\i) at (g\i) {\nscale{$(0,0)$}};}
\graph { (t0) ->["\escale{$k_{12}$}"] (t1) ->["\escale{$X_1$}"] (t2) ->["\escale{$X_1$}"] (t3) ->["\escale{$x_2$}"] (t4) ->["\escale{$X_0$}"] (t5)->["\escale{$X_0$}"](t6)->["\escale{$k_{13}$}"](t7);
};
\end{scope}
\begin{scope}[yshift=-4.5cm]

\foreach \i in {0,...,5} {\coordinate (\i) at (\i*1cm,0);}
\foreach \i in {0,...,5} {\node (t\i) at (\i) {\nscale{$t_{2,1,\i}$}};}
\foreach \i in {0,...,5} {\coordinate (s\i) at (\i*1cm,-0.25cm);}
\foreach \i in {0,4} {\node[opacity=0.7](s\i) at (s\i) {\nscale{$[0]$}};}
\foreach \i in {1,2,3,5} {\node[opacity=0.7](s\i) at (s\i) {\nscale{$[2]$}};}
\foreach \i in {0,...,4} {\coordinate (g\i) at (\i*1cm+0.5cm ,-0.25cm);}
\foreach \i in {0,4} {\node[opacity=0.7](g\i) at (g\i) {\nscale{$(0,2)$}};}
\foreach \i in {3} {\node[opacity=0.7](g\i) at (g\i) {\nscale{$(2,0)$}};}
\foreach \i in {1,2} {\node[opacity=0.7](g\i) at (g\i) {\nscale{$(0,0)$}};}
\graph { (t0) ->["\escale{$k_{14}$}"] (t1) ->["\escale{$X_3$}"] (t2) ->["\escale{$X_3$}"] (t3) ->["\escale{$p_2$}"] (t4) ->["\escale{$k_{15}$}"] (t5);
};
\end{scope}
\begin{scope}[xshift=6cm, yshift=-4.5cm]

\foreach \i in {0,...,4} {\coordinate (\i) at (\i*1cm,0);}
\foreach \i in {0,...,4} {\node (t\i) at (\i) {\nscale{$t_{2,2,\i}$}};}
\foreach \i in {0,...,4} {\coordinate (s\i) at (\i*1cm,-0.25cm);}
\foreach \i in {0,3} {\node[opacity=0.7](s\i) at (s\i) {\nscale{$[0]$}};}
\foreach \i in {1,2,4} {\node[opacity=0.7](s\i) at (s\i) {\nscale{$[2]$}};}
\foreach \i in {0,...,3} {\coordinate (g\i) at (\i*1cm+0.5cm ,-0.25cm);}
\foreach \i in {0,3} {\node[opacity=0.7](g\i) at (g\i) {\nscale{$(0,2)$}};}
\foreach \i in {2} {\node[opacity=0.7](g\i) at (g\i) {\nscale{$(2,0)$}};}
\foreach \i in {1} {\node[opacity=0.7](g\i) at (g\i) {\nscale{$(0,0)$}};}
\graph { (t0) ->["\escale{$k_{16}$}"] (t1) ->["\escale{$x_2$}"] (t2) ->["\escale{$p_2$}"] (t3) ->["\escale{$k_{17}$}"] (t4);
};
\end{scope}
\begin{scope}[yshift=-5.4cm]

\foreach \i in {0,...,7} {\coordinate (\i) at (\i*1cm,0);}
\foreach \i in {0,...,7} {\node (t\i) at (\i) {\nscale{$t_{3,0,\i}$}};}
\foreach \i in {0,...,7} {\coordinate (s\i) at (\i*1cm,-0.25cm);}
\foreach \i in {0,4,5,6} {\node[opacity=0.7](s\i) at (s\i) {\nscale{$[0]$}};}
\foreach \i in {1,2,3,7} {\node[opacity=0.7](s\i) at (s\i) {\nscale{$[2]$}};}
\foreach \i in {0,...,6} {\coordinate (g\i) at (\i*1cm+0.5cm ,-0.25cm);}
\foreach \i in {0,6} {\node[opacity=0.7](g\i) at (g\i) {\nscale{$(0,2)$}};}
\foreach \i in {3} {\node[opacity=0.7](g\i) at (g\i) {\nscale{$(2,0)$}};}
\foreach \i in {1,2,4,5} {\node[opacity=0.7](g\i) at (g\i) {\nscale{$(0,0)$}};}
\graph { (t0) ->["\escale{$k_{18}$}"] (t1) ->["\escale{$X_2$}"] (t2) ->["\escale{$X_2$}"] (t3) ->["\escale{$x_3$}"] (t4) ->["\escale{$X_5$}"] (t5)->["\escale{$X_5$}"](t6)->["\escale{$k_{19}$}"](t7);
};
\end{scope}
\begin{scope}[yshift=-6.3cm]

\foreach \i in {0,...,5} {\coordinate (\i) at (\i*1cm,0);}
\foreach \i in {0,...,5} {\node (t\i) at (\i) {\nscale{$t_{3,1,\i}$}};}
\foreach \i in {0,...,5} {\coordinate (s\i) at (\i*1cm,-0.25cm);}
\foreach \i in {0,3,4} {\node[opacity=0.7](s\i) at (s\i) {\nscale{$[0]$}};}
\foreach \i in {2} {\node[opacity=0.7](s\i) at (s\i) {\nscale{$[1]$}};}
\foreach \i in {1,5} {\node[opacity=0.7](s\i) at (s\i) {\nscale{$[2]$}};}
\foreach \i in {0,...,4} {\coordinate (g\i) at (\i*1cm+0.5cm ,-0.25cm);}
\foreach \i in {0,4} {\node[opacity=0.7](g\i) at (g\i) {\nscale{$(0,2)$}};}
\foreach \i in {1,2} {\node[opacity=0.7](g\i) at (g\i) {\nscale{$(1,0)$}};}
\foreach \i in {3} {\node[opacity=0.7](g\i) at (g\i) {\nscale{$(0,0)$}};}
\graph { (t0) ->["\escale{$k_{20}$}"] (t1) ->["\escale{$X_4$}"] (t2) ->["\escale{$X_4$}"] (t3) ->["\escale{$p_3$}"] (t4) ->["\escale{$k_{21}$}"] (t5);
};
\end{scope}
\begin{scope}[xshift=6cm, yshift=-6.3cm]

\foreach \i in {0,...,4} {\coordinate (\i) at (\i*1cm,0);}
\foreach \i in {0,...,4} {\node (t\i) at (\i) {\nscale{$t_{3,2,\i}$}};}
\foreach \i in {0,...,4} {\coordinate (s\i) at (\i*1cm,-0.25cm);}
\foreach \i in {0,2,3} {\node[opacity=0.7](s\i) at (s\i) {\nscale{$[0]$}};}
\foreach \i in {1,4} {\node[opacity=0.7](s\i) at (s\i) {\nscale{$[2]$}};}
\foreach \i in {0,...,3} {\coordinate (g\i) at (\i*1cm+0.5cm ,-0.25cm);}
\foreach \i in {0,3} {\node[opacity=0.7](g\i) at (g\i) {\nscale{$(0,2)$}};}
\foreach \i in {1} {\node[opacity=0.7](g\i) at (g\i) {\nscale{$(2,0)$}};}
\foreach \i in {2} {\node[opacity=0.7](g\i) at (g\i) {\nscale{$(0,0)$}};}
\graph { (t0) ->["\escale{$k_{22}$}"] (t1) ->["\escale{$x_3$}"] (t2) ->["\escale{$p_3$}"] (t3) ->["\escale{$k_{23}$}"] (t4);
};
\end{scope}
\begin{scope}[yshift=-7.2cm]

\foreach \i in {0,...,7} {\coordinate (\i) at (\i*1cm,0);}
\foreach \i in {0,...,7} {\node (t\i) at (\i) {\nscale{$t_{4,0,\i}$}};}
\foreach \i in {0,...,7} {\coordinate (s\i) at (\i*1cm,-0.25cm);}
\foreach \i in {0,6} {\node[opacity=0.7](s\i) at (s\i) {\nscale{$[0]$}};}
\foreach \i in {5} {\node[opacity=0.7](s\i) at (s\i) {\nscale{$[1]$}};}
\foreach \i in {1,2,3,4,7} {\node[opacity=0.7](s\i) at (s\i) {\nscale{$[2]$}};}
\foreach \i in {0,...,6} {\coordinate (g\i) at (\i*1cm+0.5cm ,-0.25cm);}
\foreach \i in {0,6} {\node[opacity=0.7](g\i) at (g\i) {\nscale{$(0,2)$}};}
\foreach \i in {4,5} {\node[opacity=0.7](g\i) at (g\i) {\nscale{$(1,0)$}};}
\foreach \i in {1,2,3} {\node[opacity=0.7](g\i) at (g\i) {\nscale{$(0,0)$}};}
\graph { (t0) ->["\escale{$k_{24}$}"] (t1) ->["\escale{$X_1$}"] (t2) ->["\escale{$X_1$}"] (t3) ->["\escale{$x_4$}"] (t4) ->["\escale{$X_4$}"] (t5)->["\escale{$X_4$}"](t6)->["\escale{$k_{25}$}"](t7);
};
\end{scope}
\begin{scope}[yshift=-8.1cm]

\foreach \i in {0,...,5} {\coordinate (\i) at (\i*1cm,0);}
\foreach \i in {0,...,5} {\node (t\i) at (\i) {\nscale{$t_{4,1,\i}$}};}
\foreach \i in {0,...,5} {\coordinate (s\i) at (\i*1cm,-0.25cm);}
\foreach \i in {0,4,6} {\node[opacity=0.7](s\i) at (s\i) {\nscale{$[0]$}};}
\foreach \i in {1,2,3,5} {\node[opacity=0.7](s\i) at (s\i) {\nscale{$[2]$}};}
\foreach \i in {0,...,4} {\coordinate (g\i) at (\i*1cm+0.5cm ,-0.25cm);}
\foreach \i in {0,4} {\node[opacity=0.7](g\i) at (g\i) {\nscale{$(0,2)$}};}
\foreach \i in {3} {\node[opacity=0.7](g\i) at (g\i) {\nscale{$(2,0)$}};}
\foreach \i in {1,2} {\node[opacity=0.7](g\i) at (g\i) {\nscale{$(0,0)$}};}
\graph { (t0) ->["\escale{$k_{26}$}"] (t1) ->["\escale{$X_5$}"] (t2) ->["\escale{$X_5$}"] (t3) ->["\escale{$p_4$}"] (t4) ->["\escale{$k_{27}$}"] (t5);
};
\end{scope}
\begin{scope}[xshift=6cm, yshift=-8.1cm]

\foreach \i in {0,...,4} {\coordinate (\i) at (\i*1cm,0);}
\foreach \i in {0,...,4} {\node (t\i) at (\i) {\nscale{$t_{4,2,\i}$}};}
\foreach \i in {0,...,4} {\coordinate (s\i) at (\i*1cm,-0.25cm);}
\foreach \i in {0,3} {\node[opacity=0.7](s\i) at (s\i) {\nscale{$[0]$}};}
\foreach \i in {1,2,4} {\node[opacity=0.7](s\i) at (s\i) {\nscale{$[2]$}};}
\foreach \i in {0,...,3} {\coordinate (g\i) at (\i*1cm+0.5cm ,-0.25cm);}
\foreach \i in {0,3} {\node[opacity=0.7](g\i) at (g\i) {\nscale{$(0,2)$}};}
\foreach \i in {2} {\node[opacity=0.7](g\i) at (g\i) {\nscale{$(2,0)$}};}
\foreach \i in {1} {\node[opacity=0.7](g\i) at (g\i) {\nscale{$(0,0)$}};}
\graph { (t0) ->["\escale{$k_{28}$}"] (t1) ->["\escale{$x_4$}"] (t2) ->["\escale{$p_4$}"] (t3) ->["\escale{$k_{29}$}"] (t4);
};
\end{scope}
\begin{scope}[yshift=-9cm]

\foreach \i in {0,...,7} {\coordinate (\i) at (\i*1cm,0);}
\foreach \i in {0,...,7} {\node (t\i) at (\i) {\nscale{$t_{5,0,\i}$}};}
\foreach \i in {0,...,7} {\coordinate (s\i) at (\i*1cm,-0.25cm);}
\foreach \i in {0,3,4,5,6} {\node[opacity=0.7](s\i) at (s\i) {\nscale{$[0]$}};}
\foreach \i in {2} {\node[opacity=0.7](s\i) at (s\i) {\nscale{$[1]$}};}
\foreach \i in {1,7} {\node[opacity=0.7](s\i) at (s\i) {\nscale{$[2]$}};}
\foreach \i in {0,...,6} {\coordinate (g\i) at (\i*1cm+0.5cm ,-0.25cm);}
\foreach \i in {0,6} {\node[opacity=0.7](g\i) at (g\i) {\nscale{$(0,2)$}};}
\foreach \i in {1,2} {\node[opacity=0.7](g\i) at (g\i) {\nscale{$(1,0)$}};}
\foreach \i in {3, 4,5} {\node[opacity=0.7](g\i) at (g\i) {\nscale{$(0,0)$}};}
\graph { (t0) ->["\escale{$k_{30}$}"] (t1) ->["\escale{$X_4$}"] (t2) ->["\escale{$X_4$}"] (t3) ->["\escale{$x_5$}"] (t4) ->["\escale{$X_5$}"] (t5)->["\escale{$X_5$}"](t6)->["\escale{$k_{31}$}"](t7);
};
\end{scope}
\begin{scope}[yshift=-9.9cm]

\foreach \i in {0,...,5} {\coordinate (\i) at (\i*1cm,0);}
\foreach \i in {0,...,5} {\node (t\i) at (\i) {\nscale{$t_{5,1,\i}$}};}
\foreach \i in {0,...,5} {\coordinate (s\i) at (\i*1cm,-0.25cm);}
\foreach \i in {0,4} {\node[opacity=0.7](s\i) at (s\i) {\nscale{$[0]$}};}
\foreach \i in {1,2,3,5} {\node[opacity=0.7](s\i) at (s\i) {\nscale{$[2]$}};}
\foreach \i in {0,...,4} {\coordinate (g\i) at (\i*1cm+0.5cm ,-0.25cm);}
\foreach \i in {0,4} {\node[opacity=0.7](g\i) at (g\i) {\nscale{$(0,2)$}};}
\foreach \i in {3} {\node[opacity=0.7](g\i) at (g\i) {\nscale{$(2,0)$}};}
\foreach \i in {1,2} {\node[opacity=0.7](g\i) at (g\i) {\nscale{$(0,0)$}};}
\graph { (t0) ->["\escale{$k_{32}$}"] (t1) ->["\escale{$X_3$}"] (t2) ->["\escale{$X_3$}"] (t3) ->["\escale{$p_5$}"] (t4) ->["\escale{$k_{33}$}"] (t5);
};
\end{scope}
\begin{scope}[xshift=6cm, yshift=-9.9cm]

\foreach \i in {0,...,4} {\coordinate (\i) at (\i*1cm,0);}
\foreach \i in {0,...,4} {\node (t\i) at (\i) {\nscale{$t_{5,2,\i}$}};}
\foreach \i in {0,...,4} {\coordinate (s\i) at (\i*1cm,-0.25cm);}
\foreach \i in {0,3} {\node[opacity=0.7](s\i) at (s\i) {\nscale{$[0]$}};}
\foreach \i in {1,2,4} {\node[opacity=0.7](s\i) at (s\i) {\nscale{$[2]$}};}
\foreach \i in {0,...,3} {\coordinate (g\i) at (\i*1cm+0.5cm ,-0.25cm);}
\foreach \i in {0,3} {\node[opacity=0.7](g\i) at (g\i) {\nscale{$(0,2)$}};}
\foreach \i in {2} {\node[opacity=0.7](g\i) at (g\i) {\nscale{$(2,0)$}};}
\foreach \i in {1} {\node[opacity=0.7](g\i) at (g\i) {\nscale{$(0,0)$}};}
\graph { (t0) ->["\escale{$k_{34}$}"] (t1) ->["\escale{$x_5$}"] (t2) ->["\escale{$p_5$}"] (t3) ->["\escale{$k_{35}$}"] (t4);
};
\end{scope}
\draw [decorate, decoration={brace, amplitude=3pt}]
(-0.5,-9.9)-- (-0.5,0) node [midway, left,  xshift=-0.1cm] {$T$};
\begin{scope}[yshift=-10.8cm]
\foreach \i in {0,...,11} {\coordinate (\i) at (\i*0.95cm,0);}
\foreach \i in {11} {\coordinate (c\i) at (\i*0.95cm,-0.9);}
\foreach \i in {10} {\coordinate (c\i) at (\i*0.95cm,-0.9);}
\foreach \i in {0,...,11} {\node (h\i) at (\i) {\nscale{$h_{2,\i}$}};}
\foreach \i in {11} {\node (c\i) at (c\i) {\nscale{$h_{2,12}$}};}
\foreach \i in {10} {\node (c\i) at (c\i) {\nscale{$h_{2,13}$}};}
\foreach \i in {0,...,11} {\coordinate (s\i) at (\i*0.95cm,-0.25);}
\foreach \i in {11} {\coordinate (s\i) at (\i*0.95cm,0.25);}
\foreach \i in {11} {\coordinate (sc\i) at (\i*0.95cm,-1.25);}
\foreach \i in {10} {\coordinate (sc\i) at (\i*0.95cm,-1.25);}
\foreach \i in {0,5,11} {\node[opacity=0.7] (s\i) at (s\i) {\nscale{$[0]$}};}
\foreach \i in {1,6,10, c11} {\node[opacity=0.7] (s\i) at (s\i) {\nscale{$[1]$}};}
\foreach \i in {2,3,4,7,8,9, c10} {\node[opacity=0.7] (s\i) at (s\i) {\nscale{$[2]$}};}
\foreach \i in {0,...,10} {\coordinate (g\i) at (\i*0.95cm+0.475cm ,-0.25cm);}
\foreach \i in {10} {\node[opacity=0.7] (sc\i) at (\i*0.95cm+0.475cm ,-1.25) {\nscale{$(0,1)$}} ;}
\foreach \i in {2,3,7,8} {\node[opacity=0.7](g\i) at (g\i) {\nscale{$(0,0)$}}    ;}
\foreach \i in {9,10} {\node[opacity=0.7](g\i) at (g\i) {\nscale{$(1,0)$}};}
\foreach \i in {0,1,5,6} {\node[opacity=0.7](g\i) at (g\i) {\nscale{$(0,1)$}};}
\foreach \i in {4} {\node[opacity=0.7](g\i) at (g\i) {\nscale{$(2,0)$}};}
\graph{ 
(h0) ->["\escale{$k$}"] (h1) ->["\escale{$k$}"] (h2) ->["\escale{$z$}"] (h3) ->["\escale{$z$}"] (h4) ->["\escale{$o_0$}"] (h5)->["\escale{$k$}"](h6)->["\escale{$k$}"](h7)->["\escale{$z$}"](h8)->["\escale{$z$}"](h9)->["\escale{$o_1$}"](h10)->["\escale{$o_1$}"](h11)->["\escale{$k$}"](c11)->[swap, "\escale{$k$}"](c10);
};
\end{scope}
\begin{scope}[yshift=-11.7cm]
\foreach \i in {0,...,4} {\coordinate (\i) at (\i*0.95cm,0);}
\foreach \i in {0,...,4} {\node (d\i) at (\i) {\nscale{$d_{0,0,\i}$}};}
\foreach \i in {0,...,4} {\coordinate (s\i) at (\i*0.95cm,-0.25);}
\foreach \i in {0,2} {\node[opacity=0.7] (s\i) at (s\i) {\nscale{$[2]$}};}
\foreach \i in {3} {\node[opacity=0.7] (s\i) at (s\i) {\nscale{$[1]$}};}
\foreach \i in {1,4} {\node[opacity=0.7] (s\i) at (s\i) {\nscale{$[0]$}};}
\foreach \i in {0,...,3} {\coordinate (g\i) at (\i*0.95cm+0.475cm ,-0.25cm);}
\foreach \i in {2,3} {\node[opacity=0.7](g\i) at (g\i) {\nscale{$(1,0)$}};}
\foreach \i in {0} {\node[opacity=0.7](g\i) at (g\i) {\nscale{$(2,0)$}};}
\foreach \i in {1} {\node[opacity=0.7](g\i) at (g\i) {\nscale{$(0,2)$}};}
\graph { (d0) ->["\escale{$o_0$}"] (d1) ->["\escale{$k_0$}"] (d2) ->["\escale{$o_2$}"] (d3)->["\escale{$o_2$}"] (d4);
};
\end{scope}

\begin{scope}[xshift=4.35cm, yshift=-11.7cm]
\node (dots) at (0,0) {\dots};
\end{scope}
\begin{scope}[xshift=4.9cm, yshift=-11.7cm]
\foreach \i in {0,...,4} {\coordinate (\i) at (\i*1cm,0);}
\foreach \i in {0,...,4} {\node (d\i) at (\i) {\nscale{$d_{35,0,\i}$}};}
\foreach \i in {0,...,4} {\coordinate (s\i) at (\i*1cm,-0.25);}
\foreach \i in {0,2} {\node[opacity=0.7] (s\i) at (s\i) {\nscale{$[2]$}};}
\foreach \i in {3} {\node[opacity=0.7] (s\i) at (s\i) {\nscale{$[1]$}};}
\foreach \i in {1} {\node[opacity=0.7] (s\i) at (s\i) {\nscale{$[0]$}};}
\foreach \i in {0,...,3} {\coordinate (g\i) at (\i*1cm+0.5cm ,-0.25cm);}
\foreach \i in {0} {\node[opacity=0.7](g\i) at (g\i) {\nscale{$(2,0)$}};}
\foreach \i in {2,3} {\node[opacity=0.7](g\i) at (g\i) {\nscale{$(1,0)$}};}
\foreach \i in {1} {\node[opacity=0.7](g\i) at (g\i) {\nscale{$(0,2)$}};}
\graph { (d0) ->["\escale{$o_0$}"] (d1) ->["\escale{$k_{35}$}"] (d2) ->["\escale{$o_2$}"] (d3)->["\escale{$o_2$}"] (d4);
};
\end{scope}
\draw [decorate, decoration={brace, amplitude=3pt}]
(-0.5,-11.7)-- (-0.5,-10.8) node [midway, left,  xshift=-0.1cm] {$K_{\tau^2_0}$};

\begin{scope}[yshift=-12.6cm]
\foreach \i in {0,...,11} {\coordinate (\i) at (\i*0.95cm,0);}
\foreach \i in {0,...,11} {\node (h\i) at (\i) {\nscale{$h_{1,\i}$}};}
\foreach \i in {0,...,11} {\coordinate (s\i) at (\i*0.95cm,-0.25);}
\foreach \i in {0,4,9} {\node[opacity=0.7] (s\i) at (s\i) {\nscale{$[0]$}};}
\foreach \i in {1,5,10} {\node[opacity=0.7] (s\i) at (s\i) {\nscale{$[1]$}};}
\foreach \i in {2,3,6,7,8,11} {\node[opacity=0.7] (s\i) at (s\i) {\nscale{$[2]$}};}
\foreach \i in {0,...,10} {\coordinate (g\i) at (\i*0.95cm+0.475cm ,-0.25cm);}
\foreach \i in {0,1,4,5,9,10} {\node[opacity=0.7](g\i) at (g\i) {\nscale{$(0,1)$}};}
\foreach \i in {2,6,7} {\node[opacity=0.7](g\i) at (g\i) {\nscale{$(0,0)$}};}
\foreach \i in {3,8} {\node[opacity=0.7](g\i) at (g\i) {\nscale{$(2,0)$}};}
\graph { (h0) ->["\escale{$k$}"] (h1) ->["\escale{$k$}"] (h2) ->["\escale{$z_0$}"] (h3) ->["\escale{$o_0$}"] (h4) ->["\escale{$k$}"] (h5)->["\escale{$k$}"](h6)->["\escale{$z_1$}"](h7)->["\escale{$z_0$}"](h8)->["\escale{$o_1$}"](h9)->["\escale{$k$}"](h10)->["\escale{$k$}"](h11);
};
\end{scope}

\begin{scope}[yshift=-13.5cm]
\foreach \i in {0,...,3} {\coordinate (\i) at (\i*1.1cm,0);}
\foreach \i in {0,...,3} {\node (d\i) at (\i) {\nscale{$d_{0,1,\i}$}};}
\foreach \i in {0,...,3} {\coordinate (s\i) at (\i*1.1cm,-0.25);}
\foreach \i in {0,2} {\node[opacity=0.7] (s\i) at (s\i) {\nscale{$[2]$}};}
\foreach \i in {1,3} {\node[opacity=0.7] (s\i) at (s\i) {\nscale{$[0]$}};}
\foreach \i in {0,...,2} {\coordinate (g\i) at (\i*1.1cm+0.55cm ,-0.25cm);}
\foreach \i in {0,2} {\node[opacity=0.7](g\i) at (g\i) {\nscale{$(2,0)$}};}
\foreach \i in {1} {\node[opacity=0.7](g\i) at (g\i) {\nscale{$(0,2)$}};}
\graph { (d0) ->["\escale{$o_0$}"] (d1) ->["\escale{$k_0$}"] (d2) ->["\escale{$o_2$}"] (d3);
};
\end{scope}

\begin{scope}[xshift=4cm, yshift=-13.5cm]
\node (dots) at (0,0) {\dots};
\end{scope}
\begin{scope}[xshift=4.8cm, yshift=-13.5cm]
\foreach \i in {0,...,3} {\coordinate (\i) at (\i*1.2cm,0);}
\foreach \i in {0,...,3} {\node (d\i) at (\i) {\nscale{$d_{35,1,\i}$}};}
\foreach \i in {0,...,3} {\coordinate (s\i) at (\i*1.2cm,-0.25);}
\foreach \i in {0,2} {\node[opacity=0.7] (s\i) at (s\i) {\nscale{$[2]$}};}
\foreach \i in {1,3} {\node[opacity=0.7] (s\i) at (s\i) {\nscale{$[0]$}};}
\foreach \i in {0,...,2} {\coordinate (g\i) at (\i*1.2cm+0.6cm ,-0.25cm);}
\foreach \i in {0,2} {\node[opacity=0.7](g\i) at (g\i) {\nscale{$(2,0)$}};}
\foreach \i in {1} {\node[opacity=0.7](g\i) at (g\i) {\nscale{$(0,2)$}};}
\graph { (d0) ->["\escale{$o_0$}"] (d1) ->["\escale{$k_{35}$}"] (d2) ->["\escale{$o_2$}"] (d3);
};
\end{scope}

\begin{scope}[yshift=-14.4cm]
\foreach \i in {0,...,8} {\coordinate (\i) at (\i*0.95cm,0);}
\foreach \i in {0,...,8} {\node (h\i) at (\i) {\nscale{$h_{2,\i}$}};}
\foreach \i in {0,...,11} {\coordinate (s\i) at (\i*0.95cm,-0.25);}
\foreach \i in {0,3,6} {\node[opacity=0.7] (s\i) at (s\i) {\nscale{$[0]$}};}
\foreach \i in {1,4,7} {\node[opacity=0.7] (s\i) at (s\i) {\nscale{$[1]$}};}
\foreach \i in {2,5,8} {\node[opacity=0.7] (s\i) at (s\i) {\nscale{$[2]$}};}
\foreach \i in {0,...,7} {\coordinate (g\i) at (\i*0.95cm+0.475cm ,-0.25cm);}
\foreach \i in {0,1,3,4,6,7} {\node[opacity=0.7](g\i) at (g\i) {\nscale{$(0,1)$}};}
\foreach \i in {2,5} {\node[opacity=0.7](g\i) at (g\i) {\nscale{$(2,0)$}};}
\graph { (h0) ->["\escale{$k$}"] (h1) ->["\escale{$k$}"] (h2) ->["\escale{$o_0$}"] (h3) ->["\escale{$k$}"] (h4) ->["\escale{$k$}"] (h5)->["\escale{$o_2$}"](h6)->["\escale{$k$}"](h7)->["\escale{$k$}"](h8);
};
\end{scope}
\draw [decorate, decoration={brace, amplitude=3pt}]
(-0.5,-13.4)-- (-0.5,-12.6) node [midway, left,  xshift=-0.1cm] {$K_{\tau^2_1}$};
\draw [decorate, decoration={brace, amplitude=3pt}]
(-0.5,-14.4)-- (-0.5,-13.6) node [midway, left,  xshift=-0.1cm] {$K$};
\end{tikzpicture}
\caption{%
Constituents $K_{\tau}, K, T$ for $\tau\in \{\tau^2_0,\tau^2_1\}$ and $\varphi_0$.
TSs are defined by bold drawn states, edges and events.
Labels with reduced opacity correspond to region $(sup, sig)$ defined in Sections~\ref{sec:keys_1}, \ref{sec:translators_1}: 
$sup(s)$ is presented in square brackets below state s and $sig(e)$ is depicted below every $e$-labeled transition.
The model of $\varphi_0$ is $\{X_0, X_4\}$.
}
\label{fig:example_1}
\end{figure}
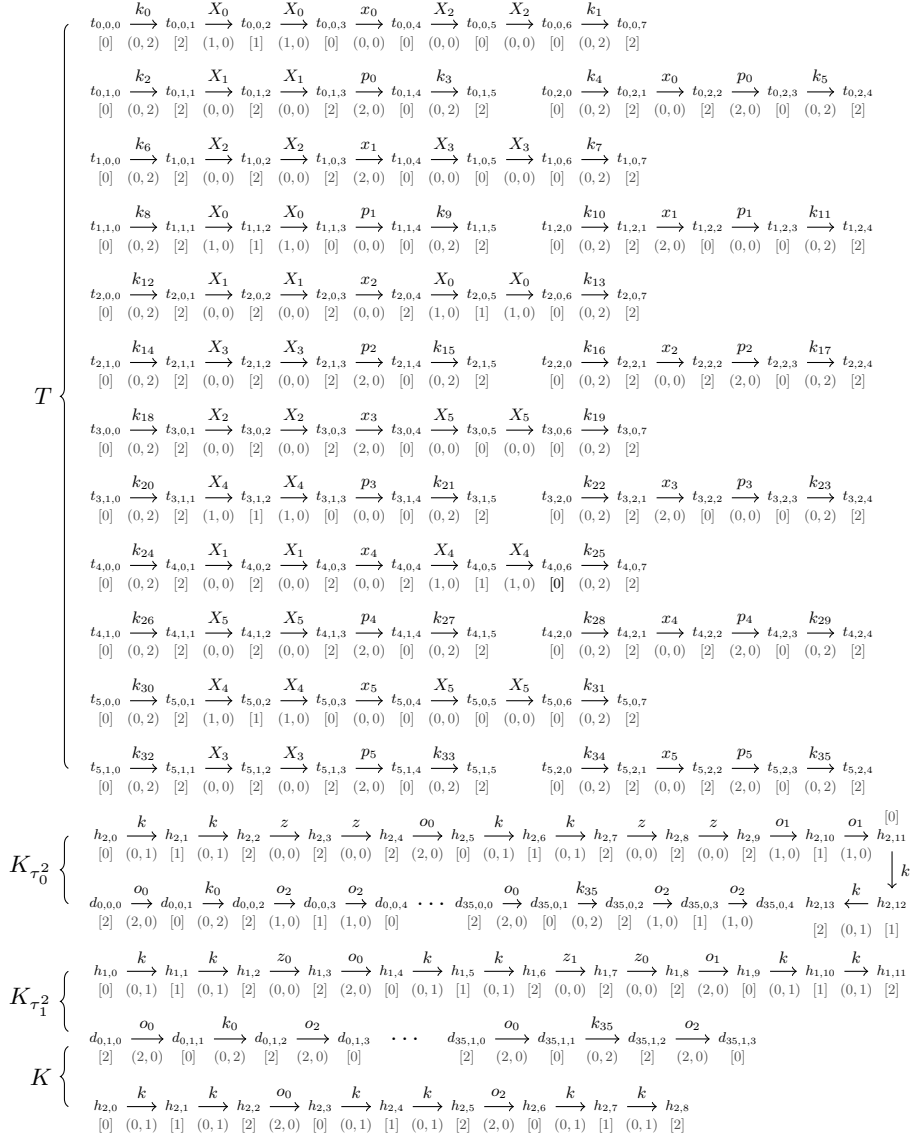

\subsection{The Translators $T_{\tau^b_0}$ and $T_{\tau^b_1}$ and $T$}\label{sec:translators_1}%

In this subsection, we present translator $T$, which we also use as $T_{\tau^b_0}$ and $T_{\tau^b_1}$, that is, $T_{\tau^b_0}=T_{\tau^b_1}=T$. 

For every $i\in \{0,\dots, m-1\}$ the clause $C_i=\{X_{i,0},X_{i,1}, X_{i,2}\}$ is translated into the following three TSs which use the variables of $C_i$ as events:

\noindent
\begin{tikzpicture}[scale=0.9]
\begin{scope}
\node at (-0.75,0) {\scalebox{0.8}{$T_{i,0}=$}};
\node (t0) at (0,0) {\nscale{$t_{i,0,0}$}};
\node (t1) at (1,0) {\nscale{$t_{i,0,1}$}};
\node (t2) at (2,0) {};
\node (h_2_dots) at (2.25,0) {\nscale{$\dots$}};
\node (tb) at (2.4,0) {};
\node (tb+1) at (3.5,0) {\nscale{$t_{i,0,b+1}$}};
\node (tb+2) at (4.9,0) {\nscale{$t_{i,0,b+2}$}};
\node (tb+3) at (6,0) {};
\node (h_k+4_dots) at (6.25,0) {\nscale{$\dots$}};
\node (t2b+1) at (6.4,0) {};
\node (t2b+2) at (7.5,0) {\nscale{$t_{i,0,2b+2}$}};
\node (t2b+3) at (9,0) {\nscale{$t_{i,0,2b+3}$}};
\graph { 
(t0) ->["\escale{$k_{6i}$}"] (t1) ->["\escale{$X_{i,0}$}"] (t2) ;
(tb) ->["\escale{$X_{i,0}$}"] (tb+1) ->["\escale{$x_{i}$}"] (tb+2)->["\escale{$X_{i,2}$}"] (tb+3);
(t2b+1)->["\escale{$X_{i,2}$}"] (t2b+2)->["\escale{$k_{6i+1}$}"] (t2b+3);
;};
\end{scope}
\begin{scope}[yshift = -0.8cm]
\node at (-0.75,0) {\scalebox{0.8}{$T_{i,1}=$}};
\node (t0) at (0,0) {\nscale{$t_{i,1,0}$}};
\node (t1) at (1.2,0) {\nscale{$t_{i,1,1}$}};
\node (t2) at (2.2,0) {};
\node (tdots) at (2.45,0) {\nscale{$\dots$}};
\node (tb) at (2.65,0) {};
\node (tb+1) at (3.7,0) {\nscale{$t_{i,1,b+1}$}};
\node (tb+2) at (5,0) {\nscale{$t_{i,1,b+2}$}};
\node (tb+3) at (6.5,0) {\nscale{$t_{i,1,b+3}$  }};
\graph { (t0) ->["\escale{$k_{6i+2}$}"] (t1) ->["\escale{$X_{i,1}$}"] (t2);
(tb) ->["\escale{$X_{i,1}$}"] (tb+1) ->["\escale{$p_i$}"] (tb+2)->["\escale{$k_{6i+3}$}"] (tb+3);};
\end{scope}
\begin{scope}[yshift = -1.6cm]
\node at (-0.75,0) {\scalebox{0.8}{$T_{i,2}=$}};
\foreach \i in {0,...,4} {\coordinate (\i) at (\i*1.2,0);}
\foreach \i in {0,...,4} {\node (p\i) at (\i) {\nscale{$t_{i,2,\i}$}};}
\graph { (p0) ->["\escale{$k_{6i+4}$}"] (p1) ->["\escale{$x_{i}$}"] (p2) ->["\escale{$p_i$}"] (p3)->["\escale{$k_{6i+5}$}"] (p4);};
\end{scope}

\end{tikzpicture}

\noindent
Altogether, $T=U(T_{0,0}, T_{0,1}, T_{0,2},\dots, T_{m-1,0}, T_{m-1,1}, T_{m-1,2})$.
Figure~\ref{fig:example_1} provides an example for $T$ where $b=2$ and $\varphi=\varphi_0$.
In accordance to our general approach and Lemma~\ref{lem:key_unions_1} the following lemma states the aim of $T$:

\begin{lemma}\label{lem:translator_1}
Let $\tau\in \{\tau^b_0, \tau^b_1\}$.
\begin{enumerate}
\item\label{lem:translator_1_completeness}\emph{(Completeness)}
If $(sup_T, sig_T)$ is a $\tau$-region of $T$ such that $sig_T(k_0)=\dots = sig_T(k_{6m-1})=(0,b)$ or $sig_T(k_0)=\dots = sig_T(k_{6m-1})=(b,0)$ then $\varphi$ has a one-in-three model.

\item\label{lem:translator_1_existence}\emph{(Existence)}
If $\varphi$ has a one-in-three model then there is a $\tau$-region $(sup_T, sig_T)$ of $T$ such that $sig_T(k_0)=\dots = sig_T(k_{6m-1})=(0,b)$.
\end{enumerate}
\end{lemma}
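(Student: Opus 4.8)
The plan is to treat the two directions of Lemma~\ref{lem:translator_1} separately and to base everything on the additive behaviour of the support. For $\tau\in\{\tau^b_0,\tau^b_1\}$ there are no $\mathbb{Z}_{b+1}$-events, so by Lemma~\ref{lem:observations}.\ref{lem:sig_summation_along_paths} the support of a region changes along any path by the sum of the quantities $d(e):=sig^+(e)-sig^-(e)$ over its edges; thus $d(e)=-1,0,+1$ for $sig(e)=(1,0)$, a diagonal event, $(0,1)$ respectively, and $d(k_j)=b$ for $sig(k_j)=(0,b)$.

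For \emph{completeness} I first treat the case $sig_T(k_0)=\dots=sig_T(k_{6m-1})=(0,b)$. In $\tau$ the signature $(0,b)$ admits only the step from support $0$ to support $b$; hence every $k_j$-edge of $T$ leads from a state of support $0$ to one of support $b$, which fixes the first and last state of each path $T_{i,0},T_{i,1},T_{i,2}$ to supports $0$ and $b$. Summing $d$ along the three paths of clause $i$ with $d(k_j)=b$ gives
\[
b\,d(X_{i,0})+d(x_i)+b\,d(X_{i,2})=-b,\qquad
b\,d(X_{i,1})+d(p_i)=-b,\qquad
d(x_i)+d(p_i)=-b .
\]
Adding the first two equations and substituting the third eliminates the clause-local events $x_i,p_i$ and, after dividing by $b\ge 1$, leaves $d(X_{i,0})+d(X_{i,1})+d(X_{i,2})=-1$.

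The decisive step is to bound the $d(X_{i,j})$. Each variable event occurs $b$ times in a row inside its gadget, so by Lemma~\ref{lem:observations}.\ref{lem:absolute_value} its signature is diagonal (then $d=0$) or one of $(1,0),(0,1)$. The value $(0,1)$ is impossible at every occurrence: the batches $X_{i,0}$ and $X_{i,1}$ begin at support $b\ne 0$, while the batch $X_{i,2}$ ends at support $0\ne b$, each contradicting the boundary values that Lemma~\ref{lem:observations}.\ref{lem:absolute_value} attaches to $(0,1)$. Thus $d(X_{i,j})\in\{-1,0\}$, and together with the sum $-1$ this forces \emph{exactly one} variable of $C_i$ to carry the signature $(1,0)$. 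Consequently $M:=\{X\in V(\varphi)\mid sig_T(X)=(1,0)\}$ satisfies $\vert M\cap C_i\vert=1$ for all $i$, i.e.\ it is a one-in-three model. The case $sig_T(k_j)=(b,0)$ is symmetric (exchange $0\leftrightarrow b$ and the two coordinates): the equation becomes $\sum_j d(X_{i,j})=+1$ with $d(X_{i,j})\in\{0,+1\}$, and the model is read off from the events of signature $(0,1)$.

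For \emph{existence}, given a one-in-three model $M$, I would define a region directly via Lemma~\ref{lem:observations}.\ref{lem:sig_summation_along_paths}: set the support of every initial state $t_{i,\ell,0}$ to $0$, put $sig(k_j)=(0,b)$ for all $j$, and $sig(X)=(1,0)$ if $X\in M$ and $sig(X)=(0,0)$ otherwise; for clause $i$ with its unique model variable in position $j^\ast$ set $(sig(x_i),sig(p_i))=((b,0),(0,0))$ if $j^\ast=1$ and $((0,0),(b,0))$ if $j^\ast\in\{0,2\}$. All these signatures belong to $E_{\tau^b_1}\subseteq E_{\tau^b_0}$, so the assignment is legal for both types, and propagating the support along each of the three paths (a routine check) shows that all values stay in $\{0,\dots,b\}$ and all required transitions exist; by construction $sig_T(k_j)=(0,b)$. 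The main obstacle is the completeness direction, specifically ruling out spurious solutions such as $(-1,-1,+1)$ of $\sum_j d(X_{i,j})=-1$: these cannot be excluded by the additive bookkeeping alone, but are killed by the realizability statement of Lemma~\ref{lem:observations}.\ref{lem:absolute_value}, which forbids $(0,1)$ at every variable occurrence and so confines each $d(X_{i,j})$ to $\{-1,0\}$.
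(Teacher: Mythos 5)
Your proof is correct, and the completeness direction follows a genuinely different route from the paper's. The paper first establishes a monotonicity property (every transition labelled by $X_{i,0},X_{i,1},X_{i,2},x_i$ or $p_i$ is non-increasing in support) and then performs an explicit three-way case analysis --- assuming in turn $sig_T(X_{i,0})=(1,0)$, $sig_T(X_{i,2})=(1,0)$, $sig_T(X_{i,1})=(1,0)$ --- propagating supports through $T_{i,0},T_{i,1},T_{i,2}$ to show that the other two variables of $C_i$ must carry diagonal signatures; the existence of at least one such variable is shown separately by deriving the contradiction $sup_T(t_{i,2,2})=0$ and $sup_T(t_{i,2,2})=b$. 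You instead encode the three gadgets as linear equations over $\mathbb{Z}$ in the increments $d(e)=sig^+(e)-sig^-(e)$, eliminate $x_i$ and $p_i$, and obtain $\sum_{j}d(X_{i,j})=-1$ directly; combined with the bound $d(X_{i,j})\in\{-1,0\}$, which you correctly extract from the boundary conditions of Lemma~\ref{lem:observations}.\ref{lem:absolute_value} applied to the $b$-fold repetitions of the variable events, this yields existence and uniqueness of the $(1,0)$-labelled variable in one stroke, with no case distinction and no need for the monotonicity condition. The algebraic version is shorter and makes the role of the auxiliary events $x_i,p_i$ (they cancel under elimination) transparent, while the paper's version stays closer to the region semantics it reuses for the other types; note also that your elimination is sound precisely because for $\tau\in\{\tau^b_0,\tau^b_1\}$ the path equation of Lemma~\ref{lem:observations}.\ref{lem:sig_summation_along_paths} holds over the integers rather than modulo $b+1$. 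Your existence region coincides with the one given in the paper.
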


\begin{proof}
To fulfill its destiny, $T$ works as follows.
By definition, if $(sup_T, sig_T)$ is a region of $T$ then $\pi_{i,0},\pi_{i,1},\pi_{i,2}$, defined by

\noindent
\begin{tikzpicture}[scale=0.89]
\begin{scope}
\node at (-1,0) {\scalebox{0.8}{$\pi_{i,0}=$}};

\node (t1) at (0,0) {\nscale{$sup_T(t_{i,0,1})$}};
\node (t2) at (2,0) {};
\node (h_2_dots) at (2.25,0) {\nscale{$\dots$}};
\node (tb) at (2.5,0) {};
\node (tb+1) at (4.5,0) {\nscale{$sup_T(t_{i,0,b+1})$}};
\node (tb+2) at (7,0) {\nscale{$sup_T(t_{i,0,b+2})$}};
\node (tb+3) at (9,0) {};
\node (h_k+4_dots) at (9.25,0) {\nscale{$\dots$}};
\node (t2b+1) at (9.5,0) {};
\node (t2b+2) at (11.5,0) {\nscale{$sup_T(t_{i,0,2b+2})$}};

\graph { 
(t1) ->["\escale{$sig_T(X_{i,0}$})"] (t2) ;
(tb) ->["\escale{$sig_T(X_{i,0})$}"] (tb+1) ->["\escale{$sig_T(x_{i})$}"] (tb+2)->["\escale{$sig_T(X_{i,2})$}"] (tb+3);
(t2b+1)->["\escale{$sig_T(X_{i,2})$}"] (t2b+2);
;};
\end{scope}
\begin{scope}[yshift = -0.8cm]
\node at (-1,0) {\scalebox{0.8}{$\pi_{i,1}=$}};
\node (t1) at (0,0) {\nscale{$sup_T(t_{i,1,1})$}};
\node (t2) at (2,0) {};
\node (tdots) at (2.25,0) {\nscale{$\dots$}};
\node (tb) at (2.5,0) {};
\node (tb+1) at (4.5,0) {\nscale{$sup_T(t_{i,1,b+1})$}};
\node (tb+2) at (6.5,0) {\nscale{$sup_T(t_{i,1,b+2})$}};

\graph { (t1) ->["\escale{$sig_T(X_{i,1}$})"] (t2);
(tb) ->["\escale{$sig_T(X_{i,1})$}"] (tb+1) ->["\escale{$sig_T(p_i$})"] (tb+2);};
\end{scope}

\begin{scope}[yshift = -1.6cm]
\node at (-1,0) {\scalebox{0.8}{$\pi_{i,2}=$}};
\foreach \i in {1,...,3} {\coordinate (\i) at (\i*2-2,0);}
\foreach \i in {1,...,3} {\node (p\i) at (\i) {\nscale{$sup_T(t_{i,2,\i})$}};}
\graph {  (p1) ->["\escale{$sig_T(x_{i})$}"] (p2) ->["\escale{$sig_T(p_i)$}"] (p3);};
\end{scope}

\end{tikzpicture}
\newline
are directed labeled paths of $\tau$.
For every $i\in \{0,\dots, m-1\}$, the events $k_{6i}, \dots, k_{6i+5}$ belong to the interface.
By Lemma~\ref{lem:key_unions_1}.\ref{lem:key_unions_1_completeness}, $K_\tau$ and $K$ ensure the following:
If $(sup_K, sig_K)$ is a region of $K_\tau$, respectively $K$, that solves the key atom $a_\tau$, respectively $\alpha$, then either $sig_K(k_0)=\dots =sig_K(k_{6m-1})=(0,b)$ or $sig_K(k_0)=\dots =sig_K(k_{6m-1})=(b, 0)$.
For every transition $s\edge{k_j}s'$, the first case implies $sup(s)=0$ and $sup(s')=b$ while the second case implies $sup(s)=b$ and $sup(s')=0$, where $j\in \{0,\dots, 6m-1\}$.
Hence, a $\tau$-region $(sup_T, sig_T)$ of $T$ being compatible with $(sup_K, sig_K)$ satisfies exactly one of the next conditions:
\begin{enumerate}
\item[(1)] \label{item:T_from_b_to_0}
$sig_T(k_0)=\dots =sig_T(k_{6m-1})=(0,b)$ and for every $i\in \{0,\dots, m-1\}$ the paths $\pi_{i,0},\pi_{i,1},\pi_{i,2}$ start at $b$ and terminate at $0$.
\item[(2)]\label{item:T_from_0_to_b} 
$sig_T(k_0)=\dots =sig_T(k_{6m-1})=(b,0)$ and for every $i\in \{0,\dots, m-1\}$ the paths $\pi_{i,0},\pi_{i,1},\pi_{i,2}$ start at $0$ and terminate at $b$.
\end{enumerate}
The construction of $T$ ensures that if (1), respectively if (2), is satisfied then there is for every $i\in \{0,\dots, m-1\}$ exactly one variable event $X\in \{X_{i,0}, X_{i,1}, X_{i,2}\}$ such that $sig(X)=(1,0)$, respectively $sig(X)=(0,1)$.
Each triple $T_{i,0}, T_{i,1}, T_{i,2}$ corresponds exactly to the clause $C_i$. 
Hence, $M=\{X\in V(\varphi) \vert sig_T(X)=(1,0)\}$ or $M=\{X\in V(\varphi) \vert sig_T(X)=(0,1)\}$, is a one-in-three model of $\varphi$, respectively.
Having sketched the plan to satisfy Lemma~\ref{lem:translator_1}.\ref{lem:translator_1_completeness}, it remains to argue that the deduced conditions (1), (2) have the announced impact on the variable events.

For a start, let (2) be satisfied and $i\in \{0,\dots, m-1\}$.
By $sig_T(k_{6i})=\dots=sig_T(k_{6i+5})=(0,b)$ we have that $sup_T(t_{i,0,1})=sup_T(t_{i,1,1})=sup_T(t_{i,1,1})=b$ and $sup_T(t_{i,0,2b+2})=sup_T(t_{i,1,b+2})=sup_T(t_{i,1,3})=0$.
Notice, for every event $e\in \{X_{i,0}, X_{i,1}, X_{i,2}, x_i, p_i\}$ there is a state $s$ such that $s\edge{e}$ and $sup_T(s)=b$ or such that $\edge{e}s$ and $sup_T(s)=0$.
Consequently, if $(m,n)\in E_\tau$ and $m < n$ then $sig(e)\not=(m,n)$.
This implies the following condition:
\begin{enumerate}
\item[(3)]\label{item:T_greater_or_equal_sup} 
If $e\in \{X_{i,0}, X_{i,1}, X_{i,2}, x_i, p_i\}$ and $s\edge{e}s'$ then $sup_T(s)\geq sup_T(s')$.
\end{enumerate}
Moreover, every variable event $X_{i,0}, X_{i,1}, X_{i,2}$ occurs $b$ times consecutively in a row. 
Hence, by Lemma~\ref{lem:observations}, we have:
\begin{enumerate}

\item[(4)]\label{item:T_state_changing_sig_is_from_1_to_0} 
If $X\in \{X_{i,0}, X_{i,1}, X_{i,2}\}$, $sig_T(X)=(m,n)$ and $m\not=n$ then $(m,n)=(1,0)$.
\end{enumerate}
The paths $\pi_{i,0}, \pi_{i,1}, \pi_{i,2}$ of $\tau$ start at $b$ and terminate at $0$.
Hence, by definition of $\tau$, for every $\pi\in \{ \pi_{i,0}, \pi_{i,1}, \pi_{i,2} \}$ there has to be an event $e_\pi$, which occurs at $\pi$, such that $sig_T(e_\pi)=(m,n)$ with $m > n$.

If for $\pi\in \{ \pi_{i,0}, \pi_{i,1}\}$ it is true that $e_{\pi}\not\in\{X_{i,0}, X_{i,1}, X_{i,2}\}$ then for $X\in \{X_{i,0}, X_{i,1}, X_{i,2}\}$ we have $sig_T(X)=(m,m)$ for some $m\in \{0,\dots, b\}$.
This yields $sup(t_{i,0,b+1})=sup(t_{i,1,b+1})=b$ and $sup(t_{i,0,b+2})=0$ which with $sup(t_{i,1,b+2})=0$ implies $sig_T(x_i)=sig_T(p_i)=(b,0)$.
By $sig_T(x_i)=(b,0)$, we obtain $sup(t_{i,2,2})=0$ and, by  $sig_T(p_i)=(b,0)$, we obtain $sup(t_{i,2,2})=b$, a contradiction.
Consequently, by Condition~$4$, there has to be an event $X\in \{X_{i,0}, X_{i,1}, X_{i,2}\}$ such that $sig_T(X)=(1,0)$.
We discuss all possible cases to show that $X$ is unambiguous.

If $sig_T(X_{i,0})=(1,0)$ then, by Lemma~\ref{lem:observations}, we have that $sup_T(t_{i,0,b+1}) = 0$.
By (3), this implies that $sup_T(t_{i,0,b+2}) = \dots = sup_T(t_{i,0,2b+1})  = 0 $ and $sig_T(x_i)=sig_T(X_{i,2})=(0,0)$.
Moreover, $sig_T(x_i)=(0,0)$ and $sup(t_{i,2,1})=b$ imply $sup(t_{i,2,2})=b$ which with $sup(t_{i,2,3})=0$ implies $sig_T(p_i)=(b,0)$.
By $sig_T(p_i)=(b,0)$ we obtain $sup(t_{i,1,b+1})=b$ which, by Lemma~\ref{lem:observations} and contraposition shows that $sig_T(X_{i,1})\not=(1,0)$.
Hence, we have $sig_T(X_{i,1})\not=(1,0)$.

If $sig_T(X_{i,2})=(1,0)$ then, by Lemma~\ref{lem:observations}, we have that $sup_T(t_{i,0,b+2}) = b$.
Again by (3), this implies that $sup_T(t_{i,0,1}) = \dots = sup_T(t_{i,0,b+2})  = b $ and $sig_T(x_i)=(m,m)$, $sig_T(X_{i,0})=(m',m')$ for some $m,m'\in \{0,\dots, b\}$.
Especially, we have that $sig_T(X_{i,0})\not=(1,0)$.
Moreover, by $sig_T(x_i)=(m,m)$, we obtain $sup_T(t_{i,2,2})=b$ implying with $sup_T(t_{i,2,3})=0$ that $sig_T(p_i)=(b,0)$.
As in the previous case this yields $sig_T(X_{i,1})\not=(1,0)$.

Finally, if $sig_T(X_{i,1})=(1,0)$ then, by Lemma~\ref{lem:observations}, we get $sup_T(t_{i,1,b+1}) = 0$.
By $sup_T(t_{i,1,b+1}) = sup_T(t_{i,1,b+2})= 0$ we conclude $sig_T(p_i)=(0,0)$ which with $sup_T(t_{i,2,3})=0$ implies $sup_T(t_{i,2,2})=0$.
Using $sup_T(t_{i,2,1})=b$ and $sup_T(t_{i,2,2})=0$ we obtain $sig_T(x_i)=(b,0)$ implying that $sup_T(t_{i,0,b+1})=b$ and $sup_T(t_{i,0,b+2})=0$.
By (3), this yields $sup_T(t_{i,0,1})=\dots =sup_T(t_{i,0,b+1})=b$ and $sup_T(t_{i,0,b+2})=\dots =sup_T(t_{i,0,2b+2})=b$ which, by Lemma~\ref{lem:observations}, implies $sig_T(X_{i,0})\not=(1,0)$ and $sig_T(X_{i,2})\not=(1,0)$. 

So far, we have proven that if (1) is satisfied then for every $i\in \{0,\dots, m-1\}$ there is exactly one variable event $X\in \{X_{i,0}, X_{i,1}, X_{i,2}\}$ such that $sig_T(X)=(1,0)$.
Consequently, the set $M=\{X\in V(\varphi) \vert sig_T(X)=(1,0)\}$ is a one-in-three model of $\varphi$.
One verifies, by analogous arguments, that (2) implies for every $i\in \{0,\dots, m-1\}$ that there is exactly one variable event $X\in \{X_{i,0}, X_{i,1}, X_{i,2}\}$ with $sig_T(X)=(0,1)$, which makes $M=\{X\in V(\varphi) \vert sig_T(X)=(0,1)\}$ a one-in-three model of $\varphi$.
Hence, a $\tau$-region of $T_\tau$ that satisfies (1) or (2) implies a one-in-three model of $\varphi$.

Reversely, if $M$ is a one-in-three model of $\varphi$ then there is a $\tau$-region $(sup_T, sig_T)$ satisfying (1) which, by Lemma~\ref{lem:observations}, is completely defined by $sup_T(t_{i,0,0})=sup_T(t_{i,1,0})=sup_T(t_{i,1,0})=0$ for $i\in \{0,\dots, m-1\}$ and 

\[sig_T(e)=
\begin{cases}
(0,b), & \text{if } e\in \{k_{0},\dots, k_{6m-1}\} \\
(0,0), & \text{if } e\in V(\varphi)\setminus M\\
(0,0), &  \text{if } (e=p_i, X_{i,1}\in M) \text{ or } (e= x_i , X_{i,1} \not\in M), 0\leq i\leq m-1 \\
(1,0), & \text{if } e \in M \\
(b,0), & \text{if } (e = x_i ,  X_{i,1} \in M) \text{ or }  (e = p_i , X_{i,1} \not\in M), 0\leq i\leq m-1 \\
\end{cases}
\]
See Figure~\ref{fig:example_1}, for a sketch of this region for $\tau\in \{\tau^2_0, \tau^2_1\}$, $\varphi_0$ and $M=\{X_0, X_4\}$.
This proves Lemma~\ref{lem:translator_1}.
\end{proof}

\subsection{The Key Unions $K_{\tau^b_2}$ and $K_{\tau^b_3}$ }\label{sec:group_extensions_keys} %

The unions $U_{\tau^b_2}, U_{\tau^b_3}$ install the same key.
More exactly, if $\tau\in \{\tau^b_2,\tau^b_3\}$ then $K_\tau$ uses only the TS $H_3$ to provide key atom $(k, h_{3,1,b-1})$ and the interface $k$ and $z$:

\noindent
\begin{tikzpicture}
\node (init) at (-0.75,0) {$H_3=$};
\node (h0) at (0,0) {\nscale{$h_{3,0,0}$}};
\node (h1) at (1,0) {\nscale{}};
\node (h_2_dots) at (1.25,0) {\nscale{$\dots$}};
\node (h_b_2) at (1.5,0) {};
\node (h_b_1) at (2.6,0) {\nscale{$h_{3,0,b-1}$}};
\node (h_b) at (4.0,0) {\nscale{$h_{3,0,b}$}};
\node (h_b+1) at (0,-1) {\nscale{$h_{3,1,0}$}};
\node (h_b+2) at (1,-1) {};
\node (h_b+2_dots) at (1.25,-1) {\nscale{$\dots$}};
\node (h_2b_3) at (1.5,-1) {};
\node (h_2b_2) at (2.6,-1) {\nscale{$h_{3,1,b-1}$}};
\graph{ 
(h0) ->["\escale{$k$}"] (h1); 
(h0) ->["\escale{$u$}", swap](h_b+1)->["\escale{$k$}"](h_b+2);
(h_b_2)->["\escale{$k$}"] (h_b_1)->["\escale{$k$}"] (h_b);
(h_2b_3)->["\escale{$k$}"] (h_2b_2);
(h_2b_2)->[swap, "\escale{$z$}"] (h_b);
};
\end{tikzpicture}

\noindent
The next lemma summarizes the intention behind $K_{\tau}$:
\begin{lemma}\label{lem:key_unions_2}
Let $\tau\in \{\tau^b_2, \tau^b_3\}$ and $E_0=\{(m,m) \vert 1 \leq m\leq b\}\cup \{  0 \}$.
\begin{enumerate}
\item\label{lem:key_unions_2_completeness}\emph{(Completeness)}
If $(sup_K, sig_K)$ is a $\tau$-region that solves $(k, h_{3,1,b-1})$ in $K_\tau$ then $sig(k)\in \{(1,0), (0,1)\}$ and $sig_K(z)\in E_0$.
\item\label{lem:key_unions_2_existence}\emph{(Existence)}
There is a $\tau$-region $(sup_K, sig_K)$ of $K_\tau$ solving $(k, h_{3,1,b-1})$ such that $sig(k)=(0,1)$ and $sig_K(z)=0$.
\end{enumerate}
\end{lemma}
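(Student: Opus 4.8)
The plan is to exploit the two $k$-labelled paths from $h_{3,0,0}$ to $h_{3,0,b}$ that $H_3$ offers: the direct path $h_{3,0,0}\edge{k}\dots\edge{k}h_{3,0,b}$, which visits the $b+1$ states $h_{3,0,0},\dots,h_{3,0,b}$, and the detour $h_{3,0,0}\edge{u}h_{3,1,0}\edge{k}\dots\edge{k}h_{3,1,b-1}\edge{z}h_{3,0,b}$ through the lower row. I would first pin down $sig_K(k)$ for a region $(sup_K,sig_K)$ solving the atom $(k,h_{3,1,b-1})$. The decisive new feature of $\tau\in\{\tau^b_2,\tau^b_3\}$ is that every group event $g\in\{0,\dots,b\}$ is enabled at \emph{every} state of $\tau$, so $\neg sup_K(h_{3,1,b-1})\edge{g}$ can never hold; hence $sig_K(k)$ cannot be a group event and must lie in $E_{\tau^b_0}$. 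Next I would rule out $sig_K(k)=(m,m)$ with $m\ge 1$: since $b\ge 2$ the lower row carries at least one $k$-edge, so firing $k$ at $h_{3,1,0}$ forces $sup_K(h_{3,1,0})\ge m$, and because $(m,m)$ leaves the support unchanged we get $sup_K(h_{3,1,b-1})=sup_K(h_{3,1,0})\ge m$, i.e.\ $sup_K(h_{3,1,b-1})\edge{(m,m)}$ is enabled --- contradicting solvability of the atom. As $(0,0)\notin E_\tau$ and $(m,m)$, $m\ge1$, has just been excluded, the signature $sig_K(k)=(m,n)$ must satisfy $m\neq n$; applying Lemma~\ref{lem:observations}.\ref{lem:absolute_value} to the upper $k$-path then yields $sig_K(k)\in\{(1,0),(0,1)\}$ together with the endpoint supports $sup_K(h_{3,0,0}),sup_K(h_{3,0,b})\in\{0,b\}$.

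Second, I would read off $sig_K(z)$ from these two facts. If $sig_K(k)=(0,1)$ then $(0,1)$ is disabled only at state $b$, so solvability forces $sup_K(h_{3,1,b-1})=b$, while Lemma~\ref{lem:observations}.\ref{lem:absolute_value} gives $sup_K(h_{3,0,b})=b$; symmetrically $sig_K(k)=(1,0)$ forces $sup_K(h_{3,1,b-1})=sup_K(h_{3,0,b})=0$. In either case the $z$-edge $h_{3,1,b-1}\edge{z}h_{3,0,b}$ runs between two states of equal support $v\in\{0,b\}$, so $sig_K(z)$ must have zero net effect. A normal event $(m,n)$ with this property satisfies $m=n$ and $v\ge m$, which for $v=b$ gives $(m,m)$ with $1\le m\le b$ (since $(0,0)\notin E_\tau$) and for $v=0$ is impossible, while a group event with zero net effect modulo $b+1$ must be $0$. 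Hence $sig_K(z)\in E_0$ in all cases, which proves the completeness part; for $\tau^b_3$ the normal options are already absent, so there $sig_K(z)=0$ is forced.

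For existence I would exhibit the region determined, via Lemma~\ref{lem:observations}.\ref{lem:sig_summation_along_paths}, by $sup_K(h_{3,0,0})=0$ and $sig_K(k)=sig_K(u)=(0,1)$, $sig_K(z)=0$. Propagating along $k$ gives $sup_K(h_{3,0,j})=j$ and, after the $u$-step $0\to 1$, $sup_K(h_{3,1,j})=1+j$, so that $sup_K(h_{3,1,b-1})=b$; both $k$-paths then reach $h_{3,0,b}$ at support $b$, and the group event $z=0$ closes the detour consistently since $(b+0)\bmod(b+1)=b$. Every individual transition is legal in $\tau$ (all events used, $(0,1)$ and the group event $0$, exist in both $\tau^b_2$ and $\tau^b_3$), and $(0,1)$ is disabled at $b=sup_K(h_{3,1,b-1})$, so the atom is solved with $sig_K(k)=(0,1)$ and $sig_K(z)=0$ as required.

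I expect the elimination of group signatures for $k$ to be the main obstacle, and it is exactly the place where the group extension genuinely changes the argument compared with $\tau^b_0,\tau^b_1$: the universal enabledness of the $\mathbb{Z}_{b+1}$-events is what prevents them from witnessing an ESSP atom, and this single observation is what drives the whole case distinction and then, through the equal-support endpoints of the two $k$-paths, the conclusion $sig_K(z)\in E_0$.
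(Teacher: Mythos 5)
Your proposal is correct and follows essentially the same route as the paper's proof: exclude group signatures for $k$ via their universal enabledness, exclude $(m,m)$ via the incoming $k$-edge at $h_{3,1,b-1}$, invoke Lemma~\ref{lem:observations}.\ref{lem:absolute_value} on the $b$-fold $k$-path to get $sig_K(k)\in\{(1,0),(0,1)\}$, force equal supports at both ends of the $z$-edge, and conclude $sig_K(z)\in E_0$. The only (immaterial) deviations are that you spell out the case analysis for $z$ in more detail than the paper and use $sig_K(u)=(0,1)$ in the existence region where the paper uses the group event $1$; both choices yield the same support values and a valid solving region.
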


\begin{proof}
For the first statement, we let $(sup_K, sig_K)$ be a region solving $\alpha_\tau$.
By $\edge{k}h_{3,1,b-1}$ and $\neg sup_K(h_{3,1,b-1})\edge{sig_K(k)}$ we immediately have $sig(K)\not\in E_0$.
Moreover, for every group event $e\in \{0,\dots, b\}$ and every state $s$ of $\tau$ we have that $s\edge{e}$.
Hence, by $\neg sup_K(h_{3,1,b-1})\edge{sig_K(k)}$ we have $sig_K(k)\not\in \{0,\dots, b\}$.
The event $k$ occurs $b$ times in a row. 
Therefore, by Lemma~\ref{lem:observations}, we have that $sig_K(k)\in \{(1,0), (0,1)\}$ and if $sig_K(k)= (1,0)$ then $sup_K(h_{3,0,b})=0$ and if $sig_K(k)= (0,1)$ then $sup_K(h_{3,0,b})=b$.
If $s\in \{0,\dots, b-1\}$ then $s\edge{(0,1)}$ is true.
Furthermore, every state $s\in \{1,\dots, b\}$ satisfies $s\edge{(1,0)}$.
Consequently, by $\neg sup_K(h_{3,1,b-1})\edge{sig_K(k)}$, if $sig_K(k)=(0,1)$ then $sup_K(h_{3,1,b-1})=b$ and if $sig_K(k)=(1,0)$ then $sup_K(h_{3,1,b-1})=0$.
This implies for $(sup_K, sig_K)$ that $sig_K(z)\in E_0$ and proves Lemma~\ref{lem:key_unions_2}.\ref{lem:key_unions_2_completeness}.
For Lemma~\ref{lem:key_unions_2}.\ref{lem:key_unions_2_existence} we easily verify that $(sup_K, sig_K)$ with $sig_K(k)=(0,1)$, $sig_K(u)=1$, $sig_K(z)=0$ and $sup_K(h_{3,0,0})=0$ properly defines a solving $\tau$-region.
\end{proof}

\subsection{The Translators $T_{\tau^b_2}$ and $T_{\tau^b_3}$}\label{sec:group_extensions_translators}%

In this section we introduce $T_{\tau^b_2}$ which is used for $U_{\tau^b_2}$ and $U_{\tau^b_3}$, that is, $T_{\tau^b_3}=T_{\tau^b_2}$. 
Let $\tau\in\{\tau^b_2, \tau^b_3\}$.
Firstly, the translator $T_{\tau}$ contains for every variable $X_j$ of $\varphi$, $j\in\{0,\dots, m-1\}$, the TSs $F_j, G_j$ below, that apply $X_j$ as  event:

\noindent
\begin{tikzpicture}
\begin{scope}
\node at (-0.75,0) {$F_j=$};
\node (f0) at (0,0) {\nscale{$f_{j,0,0}$}};
\node (f1) at (1,0) {\nscale{}};
\node (f_2_dots) at (1.25,0) {\nscale{$\dots$}};
\node (f_b_1) at (1.5,0) {};
\node (f_b) at (2.6,0) {\nscale{$f_{j,0,b}$}};
\node (f_b+1) at (0,-1) {\nscale{$f_{j,1,0}$}};
\node (f_b+2) at (1,-1) {};
\node (f_b+2_dots) at (1.25,-1) {\nscale{$\dots$}};
\node (f_2b_1) at (1.5,-1) {};
\node (f_2b) at (2.6,-1) {\nscale{$f_{j,1,b-1}$}};
\graph{ 
(f0) ->["\escale{$k$}"] (f1); 
(f0) ->["\escale{$v_j$}", swap](f_b+1)->["\escale{$k$}"](f_b+2);
(f_b_1)->["\escale{$k$}"] (f_b);
(f_2b_1)->["\escale{$k$}"] (f_2b);
(f_2b)->[swap, "\escale{$X_{j}$}"] (f_b);
};
\end{scope}
\begin{scope}[xshift= 5cm]
\node at (-0.75,0) {$G_j=$};
\node (f0) at (0,0) {\nscale{$g_{j,0}$}};
\node (f1) at (1,0) {\nscale{}};
\node (f_2_dots) at (1.25,0) {\nscale{$\dots$}};
\node (f_b_1) at (1.5,0) {};
\node (f_b) at (2.6,0) {\nscale{$g_{j,b}$}};
\node (f_b+1) at (3.8,0) {\nscale{$g_{j,b+1}$}};
\graph{ 
(f0) ->["\escale{$k$}"] (f1); 
(f_b_1)->["\escale{$k$}"] (f_b)->["\escale{$X_j$}"](f_b+1);
};
\end{scope}

\end{tikzpicture}

\noindent
Secondly, translator $T_{\tau}$ implements for every clause $C_i=\{X_{i,0}, X_{i,1}, X_{i,2}\}$ of $\varphi$, $i\in \{0,\dots, m-1\}$, the following TS $T_i$ that applies the variables of $C_i$ as events :

\noindent
\begin{tikzpicture}
\node at (-0.75,0) {$T_i=$};
\node (t0) at (0,0) {\nscale{$t_{i,0}$}};
\node (t1) at (1,0) {\nscale{}};
\node (t_2_dots) at (1.25,0) {\nscale{$\dots$}};
\node (t_b_1) at (1.5,0) {};
\node (t_b) at (2.5,0) {\nscale{$t_{i,b}$}};
\node (t_b+1) at (3.7,0) {\nscale{$t_{i,b+1}$}};
\node (t_b+2) at (4.9,0) {\nscale{$t_{i,b+2}$}};
\node (t_b+3) at (6.1,0) {\nscale{$t_{i,b+3}$}};
\node (t_b+4) at (7.3,0) {\nscale{$t_{i,b+4}$}};
\node (t_b+5) at (8.5,0) {};
\node (t_b+5_dots) at (8.75,0) {\nscale{$\dots$}};
\node (t_2b+3) at (9,0) {\nscale{}};
\node (t_2b+4) at (10.2,0) {\nscale{$t_{i, 2b+4}$}};
\graph{ 
(t0) ->["\escale{$k$}"] (t1); 
(t_b_1)->["\escale{$k$}"] (t_b) ->["\escale{$X_{i,0}$}"] (t_b+1)->["\escale{$X_{i,1}$}"] (t_b+2)->["\escale{$X_{i,2}$}"] (t_b+3)->["\escale{$z$}"] (t_b+4)->["\escale{$k$}"] (t_b+5);
(t_2b+3)->["\escale{$k$}"] (t_2b+4);
};
\end{tikzpicture} 

\noindent
Altogether, we have $T_{\tau}=(F_0,G_0,\dots, F_{m-1}, G_{m-1}, T_0,\dots,T_{m-1})$.

The next lemma summarizes the functionality of $T_{\tau}$:
\begin{lemma}\label{lem:translator_2}
If $\tau\in \{\tau^b_2, \tau^b_3\}$ then the following conditions are true:
\begin{enumerate}
\item\label{lem:translator_2_completeness}\emph{(Completeness)}
If $(sup_T, sig_T)$ is a $\tau$-region of $T_\tau$ such that $sig_T(z)\in E_0$ and $sig_T(k)=(0,1)$, respectively $sig_T(k)=(1,0)$, then $\varphi$ is one-and-three satisfiable.

\item\label{lem:translator_2_existence}\emph{(Existence)}
If $\varphi$ has a one-in-three model $M$ then there is a $\tau$-region $(sup_T, sig_T)$ of $T_\tau$ such that $sig_T(z)=0$ and $sig_T(k)=(0,1)$.
\end{enumerate}
\end{lemma}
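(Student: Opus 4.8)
The plan is to phrase everything in terms of the \emph{net effect} of an event. By Lemma~\ref{lem:observations}.\ref{lem:sig_summation_along_paths}, a $\tau$-region $(sup_T,sig_T)$ of $T_\tau$ satisfies $sup_T(s')\equiv sup_T(s)+\mathrm{eff}(e)\pmod{b+1}$ for every transition $s\xrightarrow{e}s'$, where $\mathrm{eff}(e):=-sig_T^-(e)+sig_T^+(e)+\vert sig_T(e)\vert$. As $sig_T(z)\in E_0$ always gives $\mathrm{eff}(z)=0$ and the whole argument is symmetric under exchanging $(0,1)$ and $(1,0)$, I would first treat $sig_T(k)=(0,1)$ and reduce completeness to statements about $\mathrm{eff}(X_j)$. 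Exploiting each $F_j$: its top path carries $b$ consecutive $k$'s, so by Lemma~\ref{lem:observations}.\ref{lem:absolute_value} we get $sup_T(f_{j,0,0})=0$ and $sup_T(f_{j,0,b})=b$; its lower path carries $b-1$ consecutive $k$'s, and since the event $(0,1)$ fires only from states $\le b-1$, every intermediate support must stay in $\{0,\dots,b\}$, which forces $sup_T(f_{j,1,0})\in\{0,1\}$ and hence $sup_T(f_{j,1,b-1})\in\{b-1,b\}$. The transition $f_{j,1,b-1}\xrightarrow{X_j}f_{j,0,b}$ then yields $\mathrm{eff}(X_j)=b-sup_T(f_{j,1,b-1})\in\{0,1\}$.

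Next I would read the clause constraint off $T_i$. Its two $k^b$-blocks give $sup_T(t_{i,b})=b$ and $sup_T(t_{i,b+4})=0$, so summing effects along $t_{i,b}\xrightarrow{X_{i,0}X_{i,1}X_{i,2}z}t_{i,b+4}$ and using $\mathrm{eff}(z)=0$ gives $\mathrm{eff}(X_{i,0})+\mathrm{eff}(X_{i,1})+\mathrm{eff}(X_{i,2})\equiv -b\equiv 1\pmod{b+1}$. Because each summand lies in $\{0,1\}$ and $b\ge 2$, exactly one of them equals $1$: for $b\ge 3$ the integer sum is at most $3<b+1$, while for $b=2$ only the value $1$ is $\equiv 1\pmod 3$ among $\{0,1,2,3\}$. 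Hence $M=\{X_j\mid \mathrm{eff}(X_j)=1\}$ meets every clause exactly once and is a one-in-three model. The case $sig_T(k)=(1,0)$ is symmetric, giving $\mathrm{eff}(X_j)\in\{0,b\}$, a per-clause sum $\equiv b$, and the model $M=\{X_j\mid \mathrm{eff}(X_j)=b\}$.

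For existence, given a one-in-three model $M$ I would set $sig_T(k)=(0,1)$, $sig_T(z)=0$, and, for each variable, $sig_T(X_j)=1,\ sig_T(v_j)=0$ if $X_j\in M$ and $sig_T(X_j)=0,\ sig_T(v_j)=1$ otherwise (so every event other than $k$ is a group event), together with $sup_T(f_{j,0,0})=sup_T(g_{j,0})=sup_T(t_{i,0})=0$. By Lemma~\ref{lem:observations}.\ref{lem:sig_summation_along_paths} this data extends uniquely to a candidate region, and the remaining task is to check that every transition is defined: group events fire from every state, $(0,1)$ fires along all $k$-blocks because the supports climb from $0$ inside $\{0,\dots,b\}$, and the single contributed $1$ per clause moves the path $X_{i,0}X_{i,1}X_{i,2}$ from $b$ to $0$ before $z$ and the trailing $k^b$. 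Since only the events $(0,1),0,1$ occur, all lying in $E_{\tau^b_3}\subseteq E_{\tau^b_2}$, the very same region is simultaneously a $\tau^b_2$- and a $\tau^b_3$-region, which settles both types at once.

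I expect the completeness direction to be the main obstacle, and within it the step $\mathrm{eff}(X_j)\in\{0,1\}$: one must argue rigorously that the length-$(b-1)$ $k$-block together with the firing condition of $(0,1)$ pins $sup_T(f_{j,1,0})$ down to $\{0,1\}$, and that the modular clause identity $\equiv 1\pmod{b+1}$ combined with $b\ge 2$ leaves exactly one variable of effect $1$ per clause — which is precisely where the hypothesis $b\ge 2$ is genuinely used.
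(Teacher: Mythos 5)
Your proof is correct and follows the same overall strategy as the paper's: use the $k$-blocks of $F_j$ to constrain the variable events, read the clause constraint off the path $t_{i,b}\edge{X_{i,0}}\cdots\edge{z}t_{i,b+4}$ in $T_i$, and finish with the modular counting argument that forces exactly one variable per clause out of $E_0$ (this is exactly where $b\geq 2$ enters, as you note). The one substantive difference is that by working entirely with the residue-class effect $\mathrm{eff}(e)=-sig_T^-(e)+sig_T^+(e)+\vert sig_T(e)\vert \bmod (b+1)$ you never need the gadgets $G_j$: the congruence $\mathrm{eff}(X_j)\equiv b-sup_T(f_{j,1,b-1})\in\{0,1\}$ already suffices for the clause count, since $\mathrm{eff}(X)\equiv 0$ iff $sig_T(X)\in E_0$. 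The paper instead uses $G_j$ (the transition $g_{j,b}\edge{X_j}g_{j,b+1}$ from support $b$, respectively $0$) to rule out pair-event signatures such as $(0,1)$ for the variables and thereby pin $sig_T(X_j)$ down to $E_0\cup\{1\}$, respectively $E_0\cup\{b\}$, exactly; that extra precision is not needed for this lemma, only the effect class is. Your existence region coincides with the paper's (all supports of initial states $0$, $sig_T(k)=(0,1)$, $z\mapsto 0$, $X_j\mapsto 1$ iff $X_j\in M$, $v_j$ complementary), and your observation that it uses only events of $E_{\tau^b_3}\subseteq E_{\tau^b_2}$ correctly settles both types at once.
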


\begin{proof}
Firstly, we argue for Lemma~\ref{lem:translator_2}.\ref{lem:translator_2_completeness}.
Let $(sup_T, sig_T)$ be a region of $T_\tau$ which satisfies $sig_T(z)\in E_0, sig_T(k)\in \{(1,0), (0,1)\}$.
By definition, $\pi_i$ defined by 

\noindent
\begin{tikzpicture}
\node (init) at (-1,0) {$\pi_i=$};
\node (t_b) at (0,0) {\nscale{$sup_T(t_{i,b})$}};
\node (t_b+1) at (2.3,0) {\nscale{$sup_T(t_{i,b+1})$}};
\node (t_b+2) at (4.6,0) {\nscale{$sup_T(t_{i,b+2})$}};
\node (t_b+3) at (6.9,0) {\nscale{$sup_T(t_{i,b+3})$}};

\graph{ 
 (t_b) ->["\escale{$sig_T(X_{i,0})$}"] (t_b+1)->["\escale{$sig_T(X_{i,1})$}"] (t_b+2)->["\escale{$sig_T(X_{i,2})$}"] (t_b+3);
 };
\end{tikzpicture} 

\noindent
is a directed labeled path in $\tau$.
By $sig_T(z)\in E_0$ and $t_{i,b+3}\edge{z}t_{i,b+4}$ we obtain that $sup_T(t_{i,b+3})=sup_T(t_{i,b+4})$.
Moreover, $k$ occurs $b$ times in a row at $t_{i,0}$ and $t_{i,b+4}$.
By Lemma~\ref{lem:observations}, this implies if $sig_T(k)=(1,0)$ then $sup_T(t_{i,b})=b$ and $sup_T(t_{i,b+4})=0$ and if $sig_T(k)=(0,1)$ then $sup_T(t_{i,b})=0$ and $sup_T(t_{i,b+4})=b$.
Altogether, we obtain that the following conditions are true:
If $sig_T(z)\in E_0, sig_T(k) = (1,0)$ then path $p_i$ starts a $0$ and terminates at $b$ and if $sig_T(z)\in E_0, sig_T(k) = (0,1)$ then the path $p_i$ starts a $b$ and terminates at $0$.

By definition of $\tau$, both conditions imply that there has to be at least one event $X\in \{X_{i,0}, X_{i,1}, X_{i,2}\}$ whose signature satisfies $sig_T(X)\not\in E_0$.
Again, our intention is to ensure that for exactly one such variable event the condition $sig_T(X)\not\in E_0$ is true.
Here, the TSs $F_0,G_0,\dots, F_{m-1},G_{m-1} $ come into play.
The aim of $F_0,G_0,\dots, F_{m-1},G_{m-1} $ is to restrict the possible signatures for the variable events as follows:
If $sig_T(k) = (1,0)$ then $X\in V(\varphi)$ implies $sig_T(X)\in E_0 \cup \{ b \}$ and if $sig_T(k) = (0,1)$ then $X\in V(\varphi)$ implies $sig_T(X)\in E_0 \cup \{ 1 \}$.

We now argue, that the introduced conditions ensure that there is exactly one variable event $X\in \{X_{i,0}, X_{i,1}, X_{i,2}\}$ with $sig_T(X)\not\in E_0$.
Remember that, by definition, if $sig_T(X)\in E_0$ then $sig^-_T(X) +sig^+_T(X) = \vert sig_T(X)\vert = 0$.

For a start, let $sig_T(z)\in E_0, sig_T(k) = (1,0)$, implying that $p_i$ starts at $b$ and terminates at $0$, and assume $sig_T(X)\in E_0 \cup \{ b \}$.
By Lemma~\ref{lem:observations}, we obtain:
\begin{equation}\label{eq:modulo=b}
(\vert sig_T(X_{i,0})\vert + \vert sig_T( X_{i,1} ) \vert +\vert sig_T( X_{i,2}) \vert)  \equiv b \text{ mod } (b+1) 
\end{equation}

Clearly, if $sig_T(X_{i,0}), sig_T(X_{i,1}), sig_T(X_{i,2})\in E_0$, then we obtain a contradiction to (\ref{eq:modulo=b}) by $\vert sig_T(X_{i,0})\vert = \vert sig_T( X_{i,1} ) \vert =\vert sig_T( X_{i,2}) \vert=0$.
Hence, there has to be at least one variable event $X\in \{ X_{i,0}, X_{i,1} , X_{i,2}  \}$ with $sig_T(X)=b$.

If there are two different variable events $X, Y\in \{ X_{i,0}, X_{i,1} , X_{i,2}  \}$ such that $sig_T(X)=sig_T(Y)=b$ and $sig_T(Z)\in E_0$ for  $Z \in \{ X_{i,0}, X_{i,1} , X_{i,2}  \}\setminus \{X, Y\}$ then, by symmetry and transitivity, we obtain: 
\begin{align}
& b \equiv  (\vert sig_T(X_{i,0})\vert + \vert sig_T( X_{i,1} ) \vert +\vert sig_T( X_{i,2}) \vert)  \text{ mod } (b+1)  && \vert (1) \\ 
& (\vert sig_T(X_{i,0})\vert + \vert sig_T( X_{i,1} ) \vert +\vert sig_T( X_{i,2}) \vert)  \equiv 2b \text{ mod } (b+1)  && \vert \text{assumpt.} \\
& b  \equiv 2b  \text{ mod } (b+1)  && \vert (2),(3) \\ 
& 2b  \equiv (b-1)  \text{ mod } (b+1)  && \vert \text{def. }  \equiv  \\ 
& b  \equiv  (b-1) \text{ mod } (b+1)  &&\vert  (4),(5)\\
& \exists m\in \mathbb{Z}: m(b+1)=1 && \vert  (6)
\end{align}
By $(7)$ we obtain $b=0$, a contradiction.
Similarly, if we assume that $\vert sig_T(X_{i,0})\vert = \vert sig_T( X_{i,1} ) \vert =\vert sig_T( X_{i,2}) \vert=b$ then we obtain 
\begin{align}
& (\vert sig_T(X_{i,0})\vert + \vert sig_T( X_{i,1} ) \vert +\vert sig_T( X_{i,2}) \vert)  \equiv 3b \text{ mod } (b+1)  && \vert \text{assumpt.} \\
& b  \equiv 3b  \text{ mod } (b+1)  && \vert (2),(8) \\ 
& 3b  \equiv (b-2)  \text{ mod } (b+1)  && \vert \text{def. }  \equiv \\ 
& b  \equiv  (b-2) \text{ mod } (b+1)  &&\vert  (9),(10)\\
& \exists m\in \mathbb{Z}: m(b+1)=2 && \vert (11) 
\end{align}
By $(12)$, we have $b\in \{0,1\}$ which contradicts $b\geq 2$.
Consequently, if $sig_T(z)\in E_0$ and $sig_T(k) = (1,0)$ and $sig_T(X)\in E_0 \cup \{ b \}$ then there is exactly one variable event $X\in \{X_{i,0}, X_{i,1}, X_{i,2}\}$ with $sig_T(X)\not\in E_0$.

If we continue with $sig_T(z)\in E_0$, $sig_T(k) = (0,1)$ and $sig_T(X)\in E_0 \cup \{ 1 \}$ then we find the following equation to be true:
\begin{equation}\label{eq:modulo=0}
(\vert sig_T(X_{i,0})\vert + \vert sig_T( X_{i,1} ) \vert +\vert sig_T( X_{i,2}) \vert)  \equiv 0 \text{ mod } (b+1) 
\end{equation}
Analogously to the former case one argues that the assumption that not exactly one variable event  $X\in \{X_{i,0}, X_{i,1}, X_{i,2}\}$ is equipped with the signature $1$, that is, $sig_T(X)\not\in E_0$, leads to the contradiction $b\in \{0,1\}$.
Altogether, we have shown that if $(sup_T, sig_T)$ is a region such that $sig_T(k)\in \{(0,1), (1,0)\}$ and $sig_T(z)\in E_0$ and if the TSs $F_0,G_0,\dots, F_{m-1}, G_{m-1}$ do as announced then there is exactly one variable event $X\in \{X_{i,0}, X_{i,1}, X_{i,2}\}$ for every $i\in \{0,\dots, m-1\}$ such that $sig_T(X)\not\in E_0$.
By other words, in that case we have that the set $M=\{X\in V(\varphi) \vert sig_T(X) \not\in E_0\}$ defines a one-in-three model of $\varphi$.

Hence, to complete the arguments for Lemma~\ref{lem:translator_2}.\ref{lem:translator_2_completeness}, it remains to argue for the announced functionality of $F_0,G_0,\dots, F_{m-1},G_{m-1} $.
Let $j\in \{0,\dots, m-1\}$.
We argue for $X_j$ that if $sup_T(k)=(1,0)$ then $sup_T(X_j)\in E_0\cup \{b\}$ and if $sup_T(k)=(0,1)$ then $sup_T(X_j)\in E_0\cup \{1\}$, respectively.

To begin with, let $sig_T(k)=(1,0)$.
The event $k$ occurs $b$ times in a row at $f_{j,0,0}$ and $g_{j,0}$ and $b-1$ times in a row at $f_{j,1,0}$.
By Lemma~\ref{lem:observations} this implies $sup_T(f_{j,0,b})=sup_T(g_{j,b})=0$ and $sup_T(f_{j,1,b-1})\in \{0,1\}$.
Clearly, if $sup_T(f_{j,0,b})=sup_T(f_{j,1,b-1})=0$ then $sig_T(X_j)\in E_0$.
We argue, $sup_T(f_{j,1,b-1})=1$ implies $sig_T(X_j) = b$.

Assume, for a contradiction, that $sig_T(X_j)\not=b $.
If $sig_T(X_j)=(m,m)$ for some $m\in \{1,\dots, b\}$ then $-sig^-_T(X_j)+sig^+_T(X_j)=\vert sig_T(X_j) \vert = 0$.
By Lemma~\ref{lem:observations} this contradicts $sup_T(f_{j,0,b})\not=sup_T(f_{j,1,b-1})$.
If $sig_T(X_j)=(m,n)$ with $m\not=n$ then the $\vert sig_T(X_j)\vert =0$.
By Lemma~\ref{lem:observations}, we have $sup_T(f_{j,0,b})=sup_T(f_{j,1,b-1})-sig^-_T(X_j)+sig^+_T(X_j)$ implying $sig_T(X_j)=(1,0)$.
But, by $sup_T(g_{j,b})=0$ and $\neg 0 \edge{(1,0)}$ in $\tau$, this contradicts $sup_T(g_{j,b})\edge{sig_T(X_j)}$.
Finally, if $sig_T(X_j) = e \in  \{0,\dots, b-1 \}$ then we have  $1 + e \not\equiv 0 \text{ mod } (b+1)$.
Again, this is a contradiction to $sup_T(f_{j,1,b-1})\edge{sig_T(X_j)}sup_T(f_{j,0,b})$.
Hence, we have $sig_T(X_j)=b$.
Overall, it is proven that if $sup_T(k)=(1,0)$ then $sup_T(X_j)\in E_0\cup \{b\}$. 

To continue, let $sig_T(k)=(0,1)$.
Similar to the former case, by Lemma~\ref{lem:observations}, we obtain that $sup_T(f_{j,0,b})=sup_T(g_{j,b})=b$ and $sup_T(f_{j,1,b-1})\in \{b-1,b\}$.
If $sup_T(f_{j,1,b-1})=b$ then $sig_T(X_j)\in E_0$.
We show that $sup_T(f_{j,1,b-1})=b-1$ implies $sig_T(X_j)=1$:
Assume $sig_T(X_j)=(m,n)\in E_\tau$.
If $m=n$ or if $m>n$ then, by $sup_T(f_{j,0,b})=sup_T(f_{j,1,b-1})-sig^-_T(X_j)+sig^+_T(X_j)$, we have $sup_T(f_{j,0,b}) < b$, a contradiction.
If $m < n$ then, by $sup_T(g_{j,b+1})=sup_T(g_{j,b})-sig^-_T(X_j)+sig^+_T(X_j)$, we get the contradiction $sup_T(g_{j,b+1}) >  b$.
Hence, $sig_T(X_j)\in \{0,\dots, b\}$.
Again, $sig_T(X_j)=e\in \{0,2\dots, b\}$ implies $(b-1 + e)\not \equiv b \text{ mod } (b+1)$ which contradicts $sup_T(f_{j,0,b})=sup_T(f_{j,1,b-1}) + \vert sig_T(X_j) \vert $.
Consequently, we obtain $sig_T(X_j)=1$ which shows that $sup_T(k)=(0,1)$ implies $sup_T(X_j)\in E_0\cup \{1\}$.
Altogether, this proves Lemma~\ref{lem:translator_2}.\ref{lem:translator_2_completeness}.

To complete the proof Lemma~\ref{lem:translator_2}, we show its second condition to be true.
To do so, we start from a one-in-three model $M\subseteq V(\varphi)$ of $\varphi$ and define the following $\tau$-region $(sup_T, sig_T)$ of $T_\tau$ that satisfies Lemma~\ref{lem:translator_2}.\ref{lem:translator_2_existence}:
For $e\in E_{T_\tau}$ we define $sig_T(e)=$
\[
\begin{cases}
(0,1), & \text{if } e = k\\
0, & \text{if } e\in \{z\}\cup (V(\varphi)\setminus M) \text{ or } e=v_j \text{ and } X_j\in M, 0\leq j\leq m-1 \\
1, & \text{if }  e \in M\cup\{u\} \text{ or } e=v_j \text{ and } X_j\not\in M, 0\leq j\leq m-1\\
\end{cases}
\]
By Lemma~\ref{lem:observations}, having $sig_T$, it is sufficient to define the values of the initial states of the constituent of $T_\tau$.
To do so, we define $sup_T(f_{j,0,0})=sup_T(g_{j,0})=t_{j,0}=0$ for $j\in \{0,\dots, m-1\}$. 
One easily verifies that $(sup_T, sig_T)$ is a well defined region of $T_\tau$.
See Figure~\ref{fig:example} which presents a concrete example of $(sup_T, sig_T)$ for $b=2$, $\varphi_0$ and $M=\{X_0, X_4\}$.
Finally, that proves Lemma~\ref{lem:translator_2}.
\end{proof}

\subsection{The Liaison of Key and Translator}\label{sec:liaison}%

The following lemma completes our reduction and finally proves Theorem~\ref{the:hardness_results}:

\begin{lemma}[Suffiency]\label{lem:liaison}
\begin{enumerate}
\item\label{lem:liaison_essp}
Let $\tau\in \{\tau^b_0,\tau^b_1, \tau^b_2,\tau^b_3\}$. 
$U_\tau$ is $\tau$-feasible, respectively has the $\tau$-ESSP, if and only if there is a $\tau$-region of $U_\tau$ solving its key atom $\alpha_\tau$ if and only if $\varphi$ has a one-in-three model.

\item\label{lem:liaison_ssp}
Let $\tau'\in \{\tau^b_0,\tau^b_1\}$. 
$W$ has the $\tau'$-SSP  if and only if there is a $\tau'$-region of $W$ solving its key atom $\alpha$ if and only if $\varphi$ has a one-in-three model.
\end{enumerate}
\end{lemma}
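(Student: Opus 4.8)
The overall strategy is to assemble the already-proven pieces---Lemma~\ref{lem:key_unions_1}, Lemma~\ref{lem:translator_1}, Lemma~\ref{lem:key_unions_2}, Lemma~\ref{lem:translator_2}, together with Lemma~\ref{lem:union_validity}---into the chain of equivalences asserted in each statement, and to close the one remaining gap, namely the \emph{sufficiency} direction (Objective~\ref{int:suffiency}): that solvability of the single key atom forces solvability of \emph{all} ESSP and SSP atoms of the union. I would prove each of the two statements by establishing three implications arranged cyclically. For Lemma~\ref{lem:liaison_essp}, the implications are: ($\tau$-feasibility or $\tau$-ESSP of $U_\tau$) $\Rightarrow$ (a $\tau$-region of $U_\tau$ solves $\alpha_\tau$) $\Rightarrow$ ($\varphi$ has a one-in-three model) $\Rightarrow$ ($\tau$-feasibility and $\tau$-ESSP of $U_\tau$). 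The first implication is immediate from the definitions of $\tau$-ESSP and $\tau$-feasibility, since $\alpha_\tau$ is itself an ESSP atom of $U_\tau$. The second implication is exactly the \emph{completeness} argument already spelled out in the excerpt: a region solving $\alpha_\tau$ restricts to compatible regions of $K_\tau$ and $T_\tau$; by Lemma~\ref{lem:key_unions_1}.\ref{lem:key_unions_1_completeness} (respectively Lemma~\ref{lem:key_unions_2}.\ref{lem:key_unions_2_completeness}) the key forces the interface signature, and by Lemma~\ref{lem:translator_1}.\ref{lem:translator_1_completeness} (respectively Lemma~\ref{lem:translator_2}.\ref{lem:translator_2_completeness}) the translator then yields a model $M$ of $\varphi$.

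The third implication is where the substantive new work lies, and it is the \textbf{main obstacle}. Here I would start from a one-in-three model $M$ and first invoke the \emph{existence} halves---Lemma~\ref{lem:key_unions_1}.\ref{lem:key_unions_1_existence} and Lemma~\ref{lem:translator_1}.\ref{lem:translator_1_existence} (respectively the group-extended analogues)---to build a single $\tau$-region $(sup,sig)$ of $U_\tau$ that agrees on the interface and solves $\alpha_\tau$, exactly as described in the paragraph following Objective~\ref{int:existence}. The hard part is not producing this one witness but showing that the key atom's solvability yields witnesses for every other atom. My plan is to exhibit, for each type $\tau$, an explicit family of $\tau$-regions of $U_\tau$ and verify by case analysis that this family separates every pair of distinct states within a common component (SSP) and inhibits every event at every state where it does not occur (ESSP). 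Concretely, I expect to combine translation/shift regions (varying $sup(s_{0})$ and using the freedom in $sig$ granted by Lemma~\ref{lem:observations}.\ref{lem:sig_summation_along_paths}) with the key-derived region, and to argue atom by atom over the gadget states $h_{\cdot}$, $d_{\cdot}$, $t_{\cdot}$, $f_{\cdot}$, $g_{\cdot}$. For the group-extended types $\tau^b_2,\tau^b_3$ the additional $\mathbb{Z}_{b+1}$-events give extra regions that handle atoms the pure P/T-events cannot, mirroring the role of events $1,2$ in Figure~\ref{fig:example_prelis}. This verification is routine in character but long, and I would relegate the full per-atom check to the technical report~\cite{T2019b}, presenting here only the representative cases and the construction of the separating family.

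Finally, I would pass from unions to transition systems. By construction every event $e\in E_{U_\tau}$ fails to occur at some state of $U_\tau$, so the hypothesis of Lemma~\ref{lem:union_validity} is met; hence $U_\tau$ has the $\tau$-ESSP (respectively $\tau$-SSP) if and only if the joining TS $A(U_\tau)$ does, and likewise $\tau$-feasibility transfers. Combining this with the cycle of equivalences gives that $A(U_\tau)$ is $\tau$-feasible, respectively has the $\tau$-ESSP, if and only if $\varphi$ is one-in-three satisfiable, completing Theorem~\ref{the:hardness_results}.\ref{the:hardness_results_essp}. The proof of Lemma~\ref{lem:liaison_ssp} is entirely parallel: I replace the key ESSP atom $\alpha_\tau$ by the key SSP atom $\alpha=(s,s')$ of $K$, use Lemma~\ref{lem:key_unions_1} and Lemma~\ref{lem:translator_1} in place of their counterparts, and verify that solvability of $\alpha$ suffices to solve all SSP atoms of $W$ (we need only SSP here, not ESSP). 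Together with Lemma~\ref{lem:union_validity} this yields that $A(W)$ has the $\tau'$-SSP if and only if $\varphi$ has a one-in-three model, establishing Theorem~\ref{the:hardness_results}.\ref{the:hardness_results_ssp}.
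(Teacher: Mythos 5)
Your proposal is correct and follows essentially the same route as the paper: the cycle of implications assembled from the completeness and existence halves of Lemmas~\ref{lem:key_unions_1}--\ref{lem:translator_2}, with the sufficiency direction discharged by exhibiting explicit families of solving regions per atom and deferring the full verification to the technical report, then transferring to $A(U_\tau)$ and $A(W)$ via Lemma~\ref{lem:union_validity}. The only small divergence is that for the SSP of $U_{\tau^b_0}$ and $U_{\tau^b_1}$ the paper avoids a per-atom SSP check by proving a general lemma that linear TSs with the $\tau$-ESSP automatically have the $\tau$-SSP, whereas you plan to verify state separation directly; both work.
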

\begin{proof}
By Lemma~\ref{lem:key_unions_1}, Lemma~\ref{lem:translator_1}, respectively Lemma~\ref{lem:key_unions_2}, Lemma~\ref{lem:translator_2}, the respective key atoms are solvable if and only if $\varphi$ is one-in-three satisfiable.
Clearly, if all corresponding atoms are solvable the key atom is, too.
Hence, it remains to prove that the $\tau$-solvability ($\tau'$-solvability) of the key atom $\alpha_\tau$ ($\alpha$) implies the $\tau$-ESSP and $\tau$-SSP for $U_\tau$ ($\tau'$-SSP for $W$).
Due to space limitation, the corresponding proofs are moved to the appendix.
\end{proof}

\section{Conclusions}%

In this paper, we show that deciding if a TS $A$ has the $\tau$-feasibility or the $\tau$-ESSP, $\tau\in \{\tau^b_0,\dots,\tau^b_3\}$, is NP-complete.
This makes their synthesis NP-hard.
Moreover, we argue that deciding whether $A$ has the $\tau$-SSP, $\tau'\in \{\tau^b_0,\tau^b_1\}$, is also NP-complete.
It remains for future work to investigate if there are superclasses of (pure) $b$-bounded P/T-nets or their extensions where synthesis becomes tractable.
Moreover, one may search for parameters of the net-types or the input TSs for which the decision problems are \emph{fixed parameter tractable}.

\subsubsection*{Acknowledgements} 
I would like to thank Christian Rosenke and Uli Schlachter for their precious remarks.
Also, I'm thankful to the anonymous reviewers.

\bibliography{myBibliography}%

\begin{thebibliography}{10}
\providecommand{\url}[1]{\texttt{#1}}
\providecommand{\urlprefix}{URL }
\providecommand{\doi}[1]{https://doi.org/#1}

\bibitem{DBLP:books/daglib/0027363}
van~der Aalst, W.M.P.: Process Mining - Discovery, Conformance and Enhancement
  of Business Processes. Springer (2011). \doi{10.1007/978-3-642-19345-3}

\bibitem{DBLP:conf/tapsoft/BadouelBD95}
Badouel, E., Bernardinello, L., Darondeau, P.: Polynomial algorithms for the
  synthesis of bounded nets. In: {TAPSOFT}. Lecture Notes in Computer Science,
  vol.~915, pp. 364--378. Springer (1995). \doi{10.1007/3-540-59293-8\_207}

\bibitem{DBLP:journals/tcs/BadouelBD97}
Badouel, E., Bernardinello, L., Darondeau, P.: The synthesis problem for
  elementary net systems is np-complete. Theor. Comput. Sci.
  \textbf{186}(1-2),  107--134 (1997). \doi{10.1016/S0304-3975(96)00219-8}

\bibitem{DBLP:series/txtcs/BadouelBD15}
Badouel, E., Bernardinello, L., Darondeau, P.: Petri Net Synthesis. Texts in
  Theoretical Computer Science. An {EATCS} Series, Springer (2015).
  \doi{10.1007/978-3-662-47967-4}

\bibitem{DBLP:journals/fac/BadouelCD02}
Badouel, E., Caillaud, B., Darondeau, P.: Distributing finite automata through
  petri net synthesis. Formal Asp. Comput.  \textbf{13}(6),  447--470 (2002).
  \doi{10.1007/s001650200022}

\bibitem{DBLP:journals/tcad/CortadellaKKLY97}
Cortadella, J., Kishinevsky, M., Kondratyev, A., Lavagno, L., Yakovlev, A.: A
  region-based theory for state assignment in speed-independent circuits.
  {IEEE} Trans. on {CAD} of Integrated Circuits and Systems  \textbf{16}(8),
  793--812 (1997). \doi{10.1109/43.644602}

\bibitem{DBLP:books/fm/GareyJ79}
Garey, M.R., Johnson, D.S.: Computers and Intractability: {A} Guide to the
  Theory of NP-Completeness. W. H. Freeman (1979)

\bibitem{DBLP:journals/tcs/Hiraishi94}
Hiraishi, K.: Some complexity results on transition systems and elementary net
  systems. Theor. Comput. Sci.  \textbf{135}(2),  361--376 (1994).
  \doi{10.1016/0304-3975(94)90112-0}

\bibitem{DBLP:journals/deds/HollowayKG97}
Holloway, L.E., Krogh, B.H., Giua, A.: A survey of petri net methods for
  controlled discrete event systems. Discrete Event Dynamic Systems
  \textbf{7}(2),  151--190 (1997). \doi{10.1023/A:1008271916548}

\bibitem{DBLP:journals/dcg/MooreR01}
Moore, C., Robson, J.M.: Hard tiling problems with simple tiles. Discrete {\&}
  Computational Geometry  \textbf{26}(4),  573--590 (2001).
  \doi{10.1007/s00454-001-0047-6}

\bibitem{DBLP:conf/concur/SchlachterW17}
Schlachter, U., Wimmel, H.: k-bounded petri net synthesis from modal transition
  systems. In: {CONCUR}. LIPIcs, vol.~85, pp. 6:1--6:15. Schloss Dagstuhl -
  Leibniz-Zentrum fuer Informatik (2017). \doi{10.4230/LIPIcs.CONCUR.2017.6}

\bibitem{DBLP:conf/stacs/Schmitt96}
Schmitt, V.: Flip-flop nets. In: {STACS}. Lecture Notes in Computer Science,
  vol.~1046, pp. 517--528. Springer (1996). \doi{10.1007/3-540-60922-9\_42}

\bibitem{T2019b}
Tredup, R.: Fixed parameter tractability and polynomial time results for the
  synthesis of $b$-bounded petri nets (2019), submitted for Petri Nets 2019

\bibitem{DBLP:conf/concur/TredupR18}
Tredup, R., Rosenke, C.: Narrowing down the hardness barrier of synthesizing
  elementary net systems. In: {CONCUR}. LIPIcs, vol.~118, pp. 16:1--16:15.
  Schloss Dagstuhl - Leibniz-Zentrum fuer Informatik (2018).
  \doi{10.4230/LIPIcs.CONCUR.2018.16}

\bibitem{TR2019a}
Tredup, R., Rosenke, C.: The complexity of synthesis for 43 boolean petri net
  types (2019), accepted for TAMC 2019

\bibitem{DBLP:conf/apn/TredupRW18}
Tredup, R., Rosenke, C., Wolf, K.: Elementary net synthesis remains np-complete
  even for extremely simple inputs. In: Petri Nets. Lecture Notes in Computer
  Science, vol. 10877, pp. 40--59. Springer (2018).
  \doi{10.1007/978-3-319-91268-4\_3}

\end{thebibliography}

\newpage
\begin{appendix}

\section{Example for $A(U_{\tau^b_2})$ and $A(U_{\tau^b_3})$}\label{sec:example}
\newcommand{\freezer}[4]{

\ifstrequal{#4}{0}{
\begin{scope}[nodes={set=import nodes}, xshift= #2cm, yshift=#3 cm]
\coordinate (c00) at (0,0);
\coordinate(c01) at (1,0) ;
\coordinate (c02) at (2,0) ;
\coordinate (c10) at (0,-1) ;
\coordinate (c11) at (2,-1) ;

\foreach \i in {c00} {\draw(\i) circle (0.3);}
\foreach \i in {c01, c10} {\draw[dashed] (\i) circle (0.3);}
\foreach \i in {c02, c11} {\draw[dotted, thick] (\i) circle (0.3);}

\node (f00) at (0,0) {\nscale{$f_{#1,0,0}$}};
\node (f01) at (1,0) {\nscale{$f_{#1,0,1}$}};
\node (f02) at (2,0) {\nscale{$f_{#1,0,2}$}};
\node (f10) at (0,-1) {\nscale{$f_{#1,1,0}$}};
\node (f11) at (2,-1) {\nscale{$f_{#1,1,1}$}};
\graph{ 
(f00) ->[thick,"\escale{$k$}"] (f01)->[thick,"\escale{$k$}"] (f02); 
(f10) ->[thick,"\escale{$k$}"] (f11);
(f00) ->[thick,swap, "\escale{$v_#1$}"] (f10);
(f11) ->[thick,swap, "\escale{$X_#1$}"] (f02);
};
\end{scope}
}{
\begin{scope}[nodes={set=import nodes}, xshift= #2cm, yshift=#3 cm]

\coordinate (c00) at (0,0);
\coordinate(c01) at (1,0) ;
\coordinate (c02) at (2,0) ;
\coordinate (c10) at (0,-1) ;
\coordinate (c11) at (2,-1) ;

\foreach \i in {c00,c10} {\draw(\i) circle (0.3);}
\foreach \i in {c01, c11} {\draw[dashed] (\i) circle (0.3);}
\foreach \i in {c02} {\draw[dotted, thick] (\i) circle (0.3);}

\node (f00) at (0,0) {\nscale{$f_{#1,0,0}$}};
\node (f01) at (1,0) {\nscale{$f_{#1,0,1}$}};
\node (f02) at (2,0) {\nscale{$f_{#1,0,2}$}};
\node (f10) at (0,-1) {\nscale{$f_{#1,1,0}$}};
\node (f11) at (2,-1) {\nscale{$f_{#1,1,1}$}};
\graph{ 
(f00) ->[thick,"\escale{$k$}"] (f01)->[thick,"\escale{$k$}"] (f02); 
(f10) ->[thick,"\escale{$k$}"] (f11);
(f00) ->[thick,swap, "\escale{$v_#1$}"] (f10);
(f11) ->[thick,swap, "\escale{$X_#1$}"] (f02);
};
\end{scope}
}
}
\newcommand{\generator}[4]{

\ifstrequal{#4}{0}{
\begin{scope}[nodes={set=import nodes}, xshift = #2cm, yshift= #3cm, ]

\coordinate (c00) at (0,0);
\coordinate(c01) at (1,0) ;
\coordinate(c02) at (2,0) ;
\coordinate(c03) at (3,0) ;
\foreach \i in {c00} {\draw(\i) circle (0.3);}
\foreach \i in {c01} {\draw[dashed] (\i) circle (0.3);}
\foreach \i in {c02,c03} {\draw[dotted, thick] (\i) circle (0.3);}
\node (g00) at (0,0) {\nscale{$g_{#1,0}$}};
\node (g01) at (1,0) {\nscale{$g_{#1,1}$}};
\node (g02) at (2,0) {\nscale{$g_{#1,2}$}};
\node (g03) at (3,0) {\nscale{$g_{#1,2}$}};
\node (g00) at (0,0) {\nscale{$g_{#1,0}$}};
\node (g01) at (1,0) {\nscale{$g_{#1,1}$}};
\node (g02) at (2,0) {\nscale{$g_{#1,2}$}};
\node (g03) at (3,0) {\nscale{$g_{#1,2}$}};
\graph{ 
(g00) ->[thick,"\escale{$k$}"] (g01)->[thick,"\escale{$k$}"] (g02)->[thick,"\escale{$X_#1$}"] (g03); 
};
\end{scope}
}{
\begin{scope}[nodes={set=import nodes}, xshift = #2cm, yshift= #3cm, ]
\coordinate (c00) at (0,0);
\coordinate(c01) at (1,0) ;
\coordinate(c02) at (2,0) ;
\coordinate(c03) at (3,0) ;
\foreach \i in {c00,c03} {\draw(\i) circle (0.3);}
\foreach \i in {c01} {\draw[dashed] (\i) circle (0.3);}
\foreach \i in {c02} {\draw[dotted, thick] (\i) circle (0.3);}
\node (g00) at (0,0) {\nscale{$g_{#1,0}$}};
\node (g01) at (1,0) {\nscale{$g_{#1,1}$}};
\node (g02) at (2,0) {\nscale{$g_{#1,2}$}};
\node (g03) at (3,0) {\nscale{$g_{#1,2}$}};
\node (g00) at (0,0) {\nscale{$g_{#1,0}$}};
\node (g01) at (1,0) {\nscale{$g_{#1,1}$}};
\node (g02) at (2,0) {\nscale{$g_{#1,2}$}};
\node (g03) at (3,0) {\nscale{$g_{#1,2}$}};
\graph{ 
(g00) ->[thick,"\escale{$k$}"] (g01)->[thick,"\escale{$k$}"] (g02)->[thick,"\escale{$X_#1$}"] (g03); 
};
\end{scope}
}

}
\newcommand{\translator}[7]{

\ifstrequal{#7}{1}{
\begin{scope}[nodes={set=import nodes}, xshift=#5 cm, yshift=#6 cm]

\coordinate (c0) at (0,0);
\coordinate (c1) at (1,0) ;
\coordinate (c2) at (2,0) ;
\coordinate (c3) at (3,0) ;
\coordinate (c4) at (4,0) ;
\coordinate (c5) at (5,0) ;
\coordinate(c6) at (6,0) ;
\coordinate (c7) at (7,0) ;
\coordinate (c8) at (8,0) ;

\foreach \i in {c0,c6, c3,c4,c5} {\draw(\i) circle (0.3);}
\foreach \i in {c1,c7} {\draw[dashed] (\i) circle (0.3);}
\foreach \i in {c2,c8} {\draw[dotted, thick] (\i) circle (0.3);}

\node (t0) at (0,0) {\nscale{$t_{#1,0}$}};
\node (t1) at (1,0) {\nscale{$t_{#1,1}$}};
\node (t2) at (2,0) {\nscale{$t_{#1,2}$}};
\node (t3) at (3,0) {\nscale{$t_{#1,3}$}};
\node (t4) at (4,0) {\nscale{$t_{#1,4}$}};
\node (t5) at (5,0) {\nscale{$t_{#1,5}$}};
\node (t6) at (6,0) {\nscale{$t_{#1,6}$}};
\node (t7) at (7,0) {\nscale{$t_{#1,7}$}};
\node (t8) at (8,0) {\nscale{$t_{#1,8}$}};
\graph{ 
(t0) ->[thick,"\escale{$k$}"] (t1)->[thick,"\escale{$k$}"] (t2)->[thick,"\escale{$X_#2$}"] (t3)->[thick,"\escale{$X_#3$}"] (t4)->[thick,"\escale{$X_#4$}"] (t5)->[thick,"\escale{$z$}"] (t6)->[thick,"\escale{$k$}"] (t7)->[thick,"\escale{$k$}"] (t8); 
};
\end{scope}
}{
\ifstrequal{#7}{2}{
\begin{scope}[nodes={set=import nodes}, xshift=#5 cm, yshift=#6 cm]

\coordinate (c0) at (0,0);
\coordinate (c1) at (1,0) ;
\coordinate (c2) at (2,0) ;
\coordinate (c3) at (3,0) ;
\coordinate (c4) at (4,0) ;
\coordinate (c5) at (5,0) ;
\coordinate(c6) at (6,0) ;
\coordinate (c7) at (7,0) ;
\coordinate (c8) at (8,0) ;

\foreach \i in {c0,c6, c4,c5} {\draw(\i) circle (0.3);}
\foreach \i in {c1,c7} {\draw[dashed] (\i) circle (0.3);}
\foreach \i in {c2,c8, c3} {\draw[dotted, thick] (\i) circle (0.3);}

\node (t0) at (0,0) {\nscale{$t_{#1,0}$}};
\node (t1) at (1,0) {\nscale{$t_{#1,1}$}};
\node (t2) at (2,0) {\nscale{$t_{#1,2}$}};
\node (t3) at (3,0) {\nscale{$t_{#1,3}$}};
\node (t4) at (4,0) {\nscale{$t_{#1,4}$}};
\node (t5) at (5,0) {\nscale{$t_{#1,5}$}};
\node (t6) at (6,0) {\nscale{$t_{#1,6}$}};
\node (t7) at (7,0) {\nscale{$t_{#1,7}$}};
\node (t8) at (8,0) {\nscale{$t_{#1,8}$}};
\graph{ 
(t0) ->[thick,"\escale{$k$}"] (t1)->[thick,"\escale{$k$}"] (t2)->[thick,"\escale{$X_#2$}"] (t3)->[thick,"\escale{$X_#3$}"] (t4)->[thick,"\escale{$X_#4$}"] (t5)->[thick,"\escale{$z$}"] (t6)->[thick,"\escale{$k$}"] (t7)->[thick,"\escale{$k$}"] (t8); 
};
\end{scope}

}{
\begin{scope}[nodes={set=import nodes}, xshift=#5 cm, yshift=#6 cm]

\coordinate (c0) at (0,0);
\coordinate (c1) at (1,0) ;
\coordinate (c2) at (2,0) ;
\coordinate (c3) at (3,0) ;
\coordinate (c4) at (4,0) ;
\coordinate (c5) at (5,0) ;
\coordinate(c6) at (6,0) ;
\coordinate (c7) at (7,0) ;
\coordinate (c8) at (8,0) ;

\foreach \i in {c0,c6,c5} {\draw(\i) circle (0.3);}
\foreach \i in {c1,c7} {\draw[dashed] (\i) circle (0.3);}
\foreach \i in {c2,c8, c3, c4} {\draw[dotted, thick] (\i) circle (0.3);}

\node (t0) at (0,0) {\nscale{$t_{#1,0}$}};
\node (t1) at (1,0) {\nscale{$t_{#1,1}$}};
\node (t2) at (2,0) {\nscale{$t_{#1,2}$}};
\node (t3) at (3,0) {\nscale{$t_{#1,3}$}};
\node (t4) at (4,0) {\nscale{$t_{#1,4}$}};
\node (t5) at (5,0) {\nscale{$t_{#1,5}$}};
\node (t6) at (6,0) {\nscale{$t_{#1,6}$}};
\node (t7) at (7,0) {\nscale{$t_{#1,7}$}};
\node (t8) at (8,0) {\nscale{$t_{#1,8}$}};
\graph{ 
(t0) ->[thick,"\escale{$k$}"] (t1)->[thick,"\escale{$k$}"] (t2)->[thick,"\escale{$X_#2$}"] (t3)->[thick,"\escale{$X_#3$}"] (t4)->[thick,"\escale{$X_#4$}"] (t5)->[thick,"\escale{$z$}"] (t6)->[thick,"\escale{$k$}"] (t7)->[thick,"\escale{$k$}"] (t8); 
};
\end{scope}
}
}
}
\begin{figure}[H]
\centering
\begin{tikzpicture}
\begin{scope}
\coordinate (c00) at (0,0);
\coordinate(c01) at (1,0) ;
\coordinate (c02) at (2,0) ;
\coordinate (c10) at (0,-1) ;
\coordinate (c11) at (2,-1) ;

\foreach \i in {c00} {\draw(\i) circle (0.3);}
\foreach \i in {c01,c10} {\draw[dashed] (\i) circle (0.3);}
\foreach \i in {c02,  c11} {\draw[dotted, thick] (\i) circle (0.3);}
\node (h00) at (0,0) {\nscale{$h_{3,0,0}$}};
\node (h01) at (1,0) {\nscale{$h_{3,0,1}$}};
\node (h02) at (2,0) {\nscale{$h_{3,0,2}$}};
\node (h10) at (0,-1) {\nscale{$h_{3,1,0}$}};
\node (h11) at (2,-1) {\nscale{$h_{3,1,1}$}};
%
\graph{ 
(h00) ->[thick,"\escale{$k$}"] (h01)->[thick,"\escale{$k$}"] (h02); 
(h10) ->[thick,swap, "\escale{$k$}"] (h11);
(h00) ->[thick,"\escale{$u$}", swap](h10);
(h11) ->[thick,"\escale{$z$}", swap](h02);
};
\end{scope}
\draw (0,1) circle (0.3);
\node (q_0) at (0,1) {\nscale{$q_0$}};
\draw[->, thick] (q_0)--(h00)node [pos=0.4, left ] {\escale{$y_0$}};
\freezer{0}{3}{0}{1}
\draw (3,1) circle (0.3);
\node (q_1) at (3,1) {\nscale{$q_1$}};
\draw[->, thick] (q_1)--(f00)node [pos=0.4, left ] {\escale{$y_1$}};
\draw[->, thick] (q_0)--(q_1)node [pos=0.5,above] {\escale{$w_1$}};
\freezer{1}{6}{0}{0}
\draw (6,1) circle (0.3);
\node (q_2) at (6,1) {\nscale{$q_2$}};
\draw[->, thick] (q_2)--(f00)node [pos=0.4, left ] {\escale{$y_2$}};
\draw[->, thick] (q_1)--(q_2)node [pos=0.5,above] {\escale{$w_2$}};
\freezer{2}{9}{0}{0}
\draw (9,1) circle (0.3);
\node (q_3) at (9,1) {\nscale{$q_3$}};
\draw[->, thick] (q_3)--(f00)node [pos=0.4, left ] {\escale{$y_3$}};
\draw[->, thick] (q_2)--(q_3)node [pos=0.5, above] {\escale{$w_3$}};
\freezer{3}{9}{-3}{0}
\coordinate (corner1) at (11.5,1) {};
\coordinate(corner2) at (11.5,-2) {};
\draw (9,-2) circle (0.3);
\node (q_4) at (9,-2) {\nscale{$q_4$}};
\draw[->, thick]  (q_3)--(corner1)--(corner2)->(q_4) node [pos=0.5, above] {\escale{$w_4$}};
\draw[->, thick] (q_4)--(f00)node [pos=0.4, left] {\escale{$y_4$}};
\freezer{4}{6}{-3}{1}
\draw (6,-2) circle (0.3);
\node (q_5) at (6,-2) {\nscale{$q_5$}};
\draw[->, thick]  (q_5)->(f00)node [pos=0.4, left] {\escale{$y_5$}};
\draw[->, thick]  (q_4)->(q_5) node [pos=0.5, above] {\escale{$w_5$}};
\freezer{5}{3}{-3}{0}
\draw (3,-2) circle (0.3);
\node (q_6) at (3,-2) {\nscale{$q_6$}};
\draw[->, thick]  (q_6)->(f00)node [pos=0.4, left] {\escale{$y_6$}};
\draw[->, thick]  (q_5)->(q_6) node [pos=0.5, above] {\escale{$w_6$}};
\translator{0}{0}{1}{2}{1}{-5}{1}
\coordinate (corner3) at (0,-2);
\draw (0,-5) circle (0.3);
\node (q_7) at (0,-5) {\nscale{$q_7$}};
\draw[->, thick] (q_6)--(corner3)->(q_7)node [pos=0.5, left] {\escale{$w_7$}};
\draw[->, thick] (q_7)--(t0)node [pos=0.4, above] {\escale{$y_7$}};
\translator{1}{2}{0}{3}{1}{-6}{2}
\draw (0,-6) circle (0.3);
\node (q_8) at (0,-6) {\nscale{$q_8$}};
\draw[->, thick]  (q_7)->(q_8)node [pos=0.5, left] {\escale{$w_8$}};
\draw[->, thick]  (q_8)->(t0)node [pos=0.4, above] {\escale{$y_8$}};
\translator{2}{1}{3}{0}{1}{-7}{3}
\draw (0,-7) circle (0.3);
\node (q_9) at (0,-7) {\nscale{$q_9$}};
\draw[->, thick]  (q_8)->(q_9)node [pos=0.5, left] {\escale{$w_9$}};
\draw[->, thick]  (q_9)->(t0)node [pos=0.4, above] {\escale{$y_9$}};
\translator{3}{2}{4}{5}{1}{-8}{2}
\draw (0,-8) circle (0.3);
\node (q_10) at (0,-8) {\nscale{$q_{10}$}};
\draw[->, thick]  (q_9)->(q_10)node [pos=0.5, left] {\escale{$w_{10}$}};
\draw[->, thick]  (q_10)->(t0)node [pos=0.4, above] {\escale{$y_{10}$}};
\translator{4}{1}{5}{4}{1}{-9}{3}
\draw (0,-9) circle (0.3);
\node (q_11) at (0,-9) {\nscale{$q_{11}$}};
\draw[->, thick]  (q_10)->(q_11)node [pos=0.5, left] {\escale{$w_{11}$}};
\draw[->, thick]  (q_11)->(t0)node [pos=0.4, above] {\escale{$y_{11}$}};
\translator{5}{4}{3}{5}{1}{-10}{1}
\draw (0,-10) circle (0.3);
\node (q_12) at (0,-10) {\nscale{$q_{12}$}};
\draw[->, thick]  (q_11)->(q_12)node [pos=0.5, left] {\escale{$w_{12}$}};
\draw[->, thick]  (q_12)->(t0)node [pos=0.4, above] {\escale{$y_{12}$}};
\generator{0}{0}{-12}{1}
\draw (0,-11) circle (0.3);
\node (q_13) at (0,-11) {\nscale{$q_{13}$}};
\draw[->, thick]  (q_13)->(g00)node [pos=0.4, left] {\escale{$y_{13}$}};
\generator{1}{3.75}{-12}{0}
\draw (3.75,-11) circle (0.3);
\node (q_14) at (3.75,-11) {\nscale{$q_{14}$}};
\draw[->, thick]  (q_14)->(g00)node [pos=0.4, left] {\escale{$y_{14}$}};
\generator{2}{7.5}{-12}{0}
\draw (7.5,-11) circle (0.3);
\node (q_15) at (7.5,-11) {\nscale{$q_{15}$}};
\draw[->, thick]  (q_15)->(g00)node [pos=0.4, left] {\escale{$y_{15}$}};
\draw[->, thick]  (q_12)->(q_13)node [pos=0.5, left] {\escale{$w_{13}$}}->(q_14)node [pos=0.5, above] {\escale{$w_{14}$}}->(q_15)node [pos=0.5, above] {\escale{$w_{15}$}};
\coordinate (corner4) at(11,-11);
\coordinate (corner5) at(11,-13);
\generator{3}{7.5}{-14}{0}
\draw (7.5,-13) circle (0.3);
\node (q_16) at (7.5,-13) {\nscale{$q_{16}$}};
\draw[->, thick]  (q_16)->(g00)node [pos=0.4, left] {\escale{$y_{16}$}};
\generator{4}{3.75}{-14}{1}
\draw (3.75,-13) circle (0.3);
\node (q_17) at (3.75,-13) {\nscale{$q_{17}$}};
\draw[->, thick]  (q_17)->(g00)node [pos=0.4, left] {\escale{$y_{17}$}};
\generator{5}{0}{-14}{0}
\draw (0,-13) circle (0.3);
\node (q_18) at (0,-13) {\nscale{$q_{18}$}};
\draw[->, thick]  (q_18)->(g00)node [pos=0.4, left] {\escale{$y_{18}$}};
\draw[->, thick]  (q_15)--(corner4)--(corner5)->(q_16) node [pos=0.5, above] {\escale{$w_{16}$}}->(q_17) node [pos=0.5, above] {\escale{$w_{17}$}}->(q_18) node [pos=0.5, above] {\escale{$w_{18}$}};
\end{tikzpicture}
\caption{For $\tau\in \{\tau^2_2,\tau^2_3\}$ the TS $A(U_\tau)$ where $\varphi=\{C_0,\dots, C_{5}\}$ with $C_0=\{X_0,X_1,X_2\},\ C_1= \{X_2,X_0,X_3\},\ C_2= \{X_1,X_3,X_0\},\ C_3= \{X_2,X_4,X_5\},\ C_4=\{X_1,X_5,X_4\},\ C_5= \{X_4,X_3,X_5\}$ and a sketched $\tau$-region $(sup, sig)$ solving $(k, h_{3,1,1})$.
If $sup(s)=0$ ($sup(s)=1$, $sup(s)=2$) then $s$ is surrounded by a solid (dashed, dotted) circle.
The signature is chosen in compliance to Lemma~\ref{lem:union_validity}, Lemma~\ref{lem:key_unions_2} and Lemma~\ref{lem:translator_2}.
The model is $M=\{X_0,X_4\}$.}
\label{fig:example}
\end{figure}

\section{Proofs for Section~\ref{sec:hardness_results}}%

\subsection{Proofs of Lemma~\ref{lem:union_validity} and Lemma~\ref{lem:observations}}

\begin{proof}[Proof of Lemma~\ref{lem:union_validity}]
\emph{If}:
If $(sup, sig)$ is a $\tau$-region of $A(U)$ that, for $e\in E_U, s,s'\in S_U$, solves $(e,s)$, respectively $(s,s')$, then projecting $(sup,sig)$ to the component TSs of $U$ yields a $\tau$-region of $U$ that solves the respective separation atom in $U$.
Hence, the $\tau$-(E)SSP of $A(U)$ implies the $\tau$-(E)SSP of $U$.

\emph{Only-if}:
Let $0_\tau=(0,0)$ if $(0,0)\in E_\tau$ and, otherwise, $0_\tau=0$.
A $\tau$-region $(sup, sig)$ of $U$ that solves $(s,s')$, respectively $(e, s)$, can be extended to a corresponding $\tau$-region $(sup', sig')$ of $A(U)$ by setting:
\begin{align*}
sup'(s'') &= \begin{cases}
sup(s''), & \text{if } s'' \in S_U,\\
sup(s), & \text{if } s'' \in Q 
\end{cases}\\
sig'(e') &= \begin{cases}
sig(e'), & \text{if } e' \in E_U,\\
0_\tau, & \text{if } e' \in W ,\\
(sup(s) - sup(s_{0,A_i}),0) & \text{if } e' = y_i \text{ and } sup(s_{0,A_i})  < sup(s),  0 \leq i \leq n \\
( 0, sup(s_{0,A_i} ) - sup(s)) & \text{if } e' = y_i \text{ and } sup(s_{0,A_i})  \geq  sup(s),  0 \leq i \leq n 
\end{cases}
\end{align*}

A $\tau$-region $(sup, sig)$ defined in that way inherits the property to solve $(e,s)$, respectively $(s,s')$, from $(sup, sig)$ and solves $(e,q_i)$ for $i\in \{0,\dots, n\}$ as, by definition, $sup(q_i)=sup(s)$ for all $i\in \{0,\dots, n\}$.
Consequently, as for every event $e\in E_U$ there is at least one state $s\in  S_U$ such that $(e,s)$ is a valid ESSP atom of $U$, the atom $(e,q_i)$ is solvable for every $e\in E_U$ and $i\in \{0,\dots, n\}$.
As a result, to prove the $\tau$-(E)SSP for $A(U)$ it remains to argue that the SSP atoms states $(q_0,\cdot),\dots, (q_n,\cdot)$ and the ESSP atoms $(w_1,\cdot),\dots, (w_n,\cdot), (y_0,\cdot),\dots, (y_n,\cdot)$ are solvable in $A(U)$.
If $i\in \{0,\dots, n\}$ and $s\in S_{A(U)}, e\in E_{A(U)}$ then the following region $(sup, sig)$ simultaneously solves every valid atom $(y_i, \cdot)$, $(q_i,\cdot) $ and, if it exists, $(w_{i+1}, \cdot)$ in $A(U)$:
\[
sup(s) =\begin{cases}
	0, & \text{if } s=q_i\\
	b, & \text{otherwise }
	\end{cases}\\
\text{   }
sig(e)=\begin{cases}
	(0,b) & \text{if } e = y_i \text{ or } ( i < n \text{ and } e=w_{i+1})\\
	(b,0) & \text{if } 1 < i  \text{ and } e=w_{i-1}\\
	0_\tau & \text{ otherwise} \\
	\end{cases}
\]
\end{proof}

\begin{proof}[Proof of Lemma~\ref{lem:observations}]
(\ref{lem:sig_summation_along_paths}): 
The first claim follows directly from the definitions of $\tau$ and $\tau$-regions.

(\ref{lem:absolute_value}): 
The \textit{If}-direction is trivial.
For the \textit{Only-if}-direction we show that the assumption $(m,n)\not\in \{(1,0),(0,1)\}$ yields a contradiction.

By (\ref{lem:sig_summation_along_paths}), we have that $sup(s_b)=sup(s_0) + b\cdot(n-m)$.
If $\vert n-m\vert > 1$, then we get a contradiction to $sup(s_0)\geq 0$ or to $sup(s_b)\leq b$.
Hence, if $n\not=m$ then $\vert n-m\vert =1$ implying $m > n$ or $m < n$.
For a start, we show that $ m > n$ implies $m=1, n=0$, that is, 
by $n \leq m-1$ and $sup(s_0)\leq b$ we obtain the estimation
\[
sup(s_{b-1}) = sup(s_0) +(b-1)(n-m)\ \leq \ b + (b-1)(m-1-m) = 1
\]
By $n < m \leq sup(s_{b-1})\leq 1$ we have $(m,n)=(1,0)$.
Similarly, we obtain that $(m,n)=(0,1)$ if $m < n$.
Hence, if $sig(e)=(m,n)$ and $n\not=m$ then $sig(e) \in \{(1,0),(0,1)\}$.
The second statement follows directly from (\ref{lem:sig_summation_along_paths}).
\end{proof}

\subsection{Completion of the Proof of Lemma~\ref{lem:liaison}}%

To complete the proof of Lemma~\ref{lem:liaison}, we stepwise prove the following statements in the given order:

\begin{enumerate}\label{en:todo_for_lemma_hardness_results}

\item\label{en:essp_atom_1}
If $\tau=\tau^b_1$ then the $\tau$-solvability of $(k , h_{1, 2b+4})$ in $U_\tau$ implies its $\tau$-ESSP.

\item\label{en:essp_atom_0}
If $\tau=\tau^b_0$ then the $\tau$-solvability of $(k , h_{0, 4b+1})$ in $U_\tau$ implies its $\tau$-ESSP.

\item\label{en:essp_implies_ssp}
If $\tau\in \{\tau^b_0,\tau^b_1\}$ then the $\tau$-ESSP of $U_\tau$ implies its $\tau$-SSP.

\item\label{en:essp_atom_2}
If $\tau\in \{\tau^b_2, \tau^b_3 \}$ then the $\tau$-solvability of $(k , h_{3,1,b-1})$ in $U_\tau$ implies its $\tau$-ESSP.

\item\label{en:essp_implies_ssp_Z}
If $\tau\in \{\tau^b_2,\tau^b_3\}$  then the $\tau$-ESSP of $U_\tau$ implies its $\tau$-SSP.

\item\label{en:ssp_atom}
If $\tau\in \{\tau^b_0,\tau^b_1\}$ then the $\tau$-solvability of $(h_{2,0} ,  h_{2,b})$ in $W$ implies its $\tau$-SSP.
 
\end{enumerate}

\subsubsection{Proof of Statement~\ref{en:essp_atom_1} and Statement~\ref{en:essp_atom_0}}%

We prove for $ \tau \in \{ \tau^b_0, \tau^b_1\}$ that the $\tau$-solvability of the key atom $a_\tau$ in $U_\tau$ implies the $\tau$-solvability of all ESSP atoms by the presentation of corresponding regions.
To do so, we provide for every ESSP atom $(e,s)$ of $U_\tau$ a corresponding $\tau$-region $(sup, sig)$ solving it.
For the sake of simplicity, these regions are often presented as rows of a table with the shape and meaning:
See Table~\ref{tab:first_table} for the first example.

\begin{enumerate}
\item\label{e}
\emph{e}: 
Here, $e$ means the event of the ESSP atoms $(e, \cdot)$ which are solved by the region of this row.
The corresponding states are listed in the \emph{states}-cell.
It is always the case that a $\tau$-region $(sup, sig)$ that solves such an atom $(e, \cdot)$ satisfies $sig(e) = (0,n)$ for some $n\in \mathbb{N}^+$.
\item\label{states}
\emph{states}:
All listed states $s$ such that $(e, s)$ is $\tau$-solved by the region of the corresponding row.
\item\label{initials}
\emph{initials}: 
By Lemma~\ref{lem:observations}, a $\tau$-region of $U_\tau$ is fully defined by its signature and the support of the initial states of the constituent TSs.
Hence, this cell explicitly presents the supports of the initial states of the TSs of $U_\tau$, which are actually affected by an event having a signature different from $(0,0)$.
The initial states of all other TSs, that is, all those constituents which have no event in their event set with a signature different from $(0,0)$, are assumed to be mapped to $b$.
Certainly, this condemns the states of all unaffected TSs to have the same support $b$.
As mentioned above, every $(e,\cdot )$ solving $\tau$-region $R=(sup, sig)$ satisfies $sig(e)=(0,n), n\in \mathbb{N}^+$.
Thus, for every state $s$ of an unaffected TS the atom $(e, s)$ is automatically solved by $R$, as $sup(s)=b$ and $\neg b\edge{(0,n)}$ for $n\geq 1$.
For the sake of readability, we never mention these states explicitly in the \emph{states}-cell.
\item
\emph{sig}:
The signatures of the events of $U_\tau$ with a value different from $(0,0)$.
The signature of the other events is $(0,0)$.
\item
\emph{constituents}:
For the sake of transparency, the constituents which are affected by events with a signature different from $(0,0)$.
Note that, by the discussion of (\ref{e}) and (\ref{initials}), for every state $s$ of a constituent which is not mentioned here it is true that the ESSP atom $(e,s)$ is also solved.
\end{enumerate}

Moreover, especially in the presented tables, we apply several shortcuts to make the presentations more lucid:
\begin{enumerate}
\item
If $s\in S_{U_\tau}$ is an initial state of an affected TS with support $sup(s)=s_\tau \in S_\tau$ then we write $s=s_\tau$.
\item
We differentiate between $i$-indexed and $j$-indexed events and insinuate the following double meaning:
If '$i$' occurs explicitly in the index of a presented event, respectively state, for example $k_{6i+1}$, respectively $t_{i,0,0}$, then it is assumed that $i\in \{0,\dots, m-1\}$ is arbitrary but fixed.
In contrast, if not stated explicitly otherwise, if '$j$' occurs explicitly in the index of a presented event or state then $j$ represents all possible values for this type of state or event.
For example, we write $d_{j,1,0}$ to abridge the enumeration $d_{0,1,0},\dots, d_{6m-1,1,0}$. 
\end{enumerate}

\begin{proof}[Statement~\ref{en:essp_atom_1} ]
Let $\tau=\tau^b_1$. 
The following table presents for atoms $(e, \cdot)$ of $U_\tau$ solving $\tau$-regions, where $e\in \{z_0,z_1,o_0,o_1, k,k_{6i}, k_{6i+2}, k_{6i+4}\}$.
\begin{longtable}{p{1cm} p{3.1cm}p{4.4cm} p{2.1cm} p{1.5cm}}
\caption{For $\tau=\tau^b_1$: Solving $\tau$-regions for atoms $(e, \cdot)$ of $U_\tau$ where $e\in \{z_0,z_1,o_0,o_1, k,k_{6i}, k_{6i+2}, k_{6i+4}\}$.}
\label{tab:first_table}
\endfirsthead
\endhead
\emph{e}& \emph{initials}& \emph{states} & \emph{sig} & \emph{constituents}\\ \hline
$z_0$ & $h_{1,0}=b$ & \raggedright{$S_{H_1}\setminus \{h_{1,2b+2},h_{1,3b+5}\} $} & \raggedright{$z_0=(0,b)$\\ $k=(1,0)$} & $H_1$\\
$z_0$ & $h_{1,0}=0$ & \raggedright{$h_{1,2b+2},h_{1,3b+5}$} & \raggedright{$z_0=(0,b)$,\\ $z_1=(b,0)$} & $H_1$\\ \hline
$z_1$ & \raggedright{$h_{1,0}=b$, $d_{j,1,0}=0$} & \raggedright{$S_{H_1}\setminus \{h_{1,b},h_{1,b+1},h_{1,3b+5}\} $, $\{d_{j,1,1}, d_{j,1,2}, d_{j,1,3}\}$} & \raggedright{$z_1=(0,b)$, $o_0=(0,b)$, $k=(1,0)$} & $H_1, D_{j,1}$\\

$z_1$ & \raggedright{$h_{1,0}=b$, $d_{j,1,0}=b$} & \raggedright{$\{h_{1,b},h_{1,b+1}, h_{1,3b+5}, d_{j,1,0}\}$} & \raggedright{$o_0=(b,0)$, $z_1=(0,b)$} & $H_1, D_{j,1}$\\ \hline
$o_0$ & \raggedright{$h_{1,0}=b$, $d_{j,1,0}=0$} &  \raggedright{$S_{H_1}\setminus \{h_{1,2b+4},\dots, h_{1,3b+5}\}$, $\{d_{j,1,1}, d_{j,1,2}, d_{j,1,3}\}$}  &  \raggedright{$o_0=(0,b)$, $z_0=(b,0)$} & $H_1, D_{j,1}$\\

$o_0$ &  \raggedright{$h_{1,0}=0$, $d_{j,1,0}=0$} &  \raggedright{$\{h_{1,2b+4},\dots, h_{1,3b+5}\}$} &  \raggedright{$o_0=(0,b)$} & $H_1, D_{j,1}$\\ \hline
$o_1$ & \raggedright{$h_{1,0}=t_{j,0,0}=b$, $t_{j,1,0}=t_{j,2,0}=b$, $0\leq j\leq 3m-1: d_{2j,1,0}=b$, $d_{2j+1,1,0}=0$ } &  \raggedright{$S_{H_1}\setminus \{h_{1,b+1},\dots, h_{1,2b+2}\}$, $0\leq j\leq 3m-1: S_{D_{2j,1}}$}  &  \raggedright{$o_1=(0,b)$, $z_0=(b,0)$, $z_1=(0,b)$, $k_{2j}=(b,0)$} & $H_1, D_{j,1}$, $T_\tau$\\

$o_1$ & \raggedright{$h_{1,0}=t_{j,0,0}=b$, $t_{j,1,0}=t_{j,2,0}=b$, $0\leq j\leq 3m-1: d_{2j,1,0}=0$, $d_{2j+1,1,0}=b$ } &  \raggedright{$ \{h_{1,b+1},\dots, h_{1,2b+2}\}$, $0\leq j\leq 3m-1: S_{D_{2j+1,1}}$}  &  \raggedright{$o_1=(0,b)$, $z_1=(b,0)$, $k_{2j+1}=(b,0)$} & $H_1, D_{j,1}$, $T_\tau$\\

$o_1$ &  \raggedright{$h_{1,0}=0$, $d_{j,1,0}=0$} &  \raggedright{remaining states} &  \raggedright{$o_1=(0,b)$} & $H_1, D_{j,1}, $\\ \hline
$k$ & key region &  $S_{H_1}$& see Lemma~\ref{lem:key_unions_1}, Lemma~\ref{lem:translator_1}  & \\ 
$k$ & $h_{1,0}=0$ &  \raggedright{$S_{U_\tau} \setminus S_{H_1}$} &  \raggedright{$k=(0,1)$, $z_0=(b,0)$} & $H_1$\\ \hline
$k_{6i}$ & \raggedright{$h_{1,0}=b$, $d_{j,1,0}=b$, $t_{i,0,0}=0$} & $d_{6i,1,0}$ & \raggedright{$k_{6i}=(0,b)$, $o_0=(b,0)$} & $H_1, T_{i,0}, D_{j,1}$ \\ 
$k_{6i}$ & \raggedright{$t_{i,0,0}=d_{6i,1,0}=0$} & remaining states & $k_{6i}=(0,b)$ & $T_{i,0}, D_{6i,1}$\\ \hline
$k_{6i+2}$ & \raggedright{$h_{1,0}=b$, $d_{j,1,0}=b$, $t_{i,1,0}=0$} & $d_{6i+2,1,0}$ & \raggedright{$k_{6i+2}=(0,b)$, $o_0=(b,0)$} & $H_1, T_{i,1}, D_{j,1}$ \\ 
$k_{6i+2}$ & \raggedright{$t_{i,1,0}=d_{6i+2,1,0}=0$} & remaining states & $k_{6i+2}=(0,b)$ & $T_{i,1}, D_{6i+2,1}$\\ \hline
$k_{6i+4}$ & \raggedright{$h_{1,0}=b$, $d_{j,1,0}=b$, $t_{i,2,0}=0$} & $d_{6i+4,1,0}$ & \raggedright{$k_{6i+4}=(0,b)$, $o_0=(b,0)$} & $H_1, T_{i,2}, D_{j,1}$ \\ 
$k_{6i+4}$ & \raggedright{$t_{i,2,0}=d_{6i+4,1,0}=0$} & remaining states & $k_{6i+4}=(0,b)$ & $T_{i,2}, D_{6i+4,1}$\\ 
\end{longtable}

It remains to prove the solvability of the valid atoms $(e,\cdot)$ of $U_\tau$ where $e\in E_{T_\tau}$.
To do so, we need the following notations:
If $i\in \{0,\dots, m-1\}$ and $\alpha \in \{0,1,2\}$ are arbitrary but fixed then by $i',i'',\beta, \gamma$ we mean the indices $i',i''\in \{0,\dots, m-1\}\setminus \{i\}$ and $\beta, \gamma \in  \{0,1,2\}$ such that $X_{i,\alpha}=X_{i',\beta}=X_{i'',\gamma}$, that is, $i'$ and $\beta$, respectively $i''$ and $\gamma$, determine the second, respectively third, occurrence of $X_{i,\alpha}$ in $U_\tau$.
The following table shows for an arbitrary but fixed $i\in \{0,\dots, m-1\}$ and all possible values for $\alpha\in \{0,1,2\}$ the solvability of $(X_{i,\alpha},s)$ for the states $s$ of $U_\tau$ which are not in $T_{i',\beta}$ or $T_{i'',\gamma}$ and the solvability of $(x_i,s)$ and $(p_i,s)$ for all states $s$ of $U_\tau$.
Please note, that if $\beta,\gamma\in \{0,2\}$ then $X_{i',\beta}\in E_{T_{i',0}}$ and  $X_{i'',\gamma}\in E_{T_{i'',0}}$, otherwise, if $\beta =\gamma=1$ then $X_{i',\beta}\in E_{T_{i',1}}$ and  $X_{i'',\gamma}\in E_{T_{i'',1}}$.
We can abbreviate this case analyses by identifying $T_{i',\beta\text{mod}2}$, respectively $T_{i'',\gamma\text{mod}2}$, as the translator where $X_{i',\beta}$, respectively $X_{i'',\gamma}$, occur in. 
To abridge, we define $S^{\beta}_\gamma=S_{T_{i', \beta\text{mod}2 }}\cup S_{T_{i'', \gamma\text{mod}2 }}$.
By the arbitrariness of $i$, this approach proves the solvability of every valid ESSP atom $(e, \cdot)$ in $U_\tau$, where $e$ is an event of $T_\tau$.
\begin{longtable}{p{1cm} p{3.5cm}p{3cm} p{2.5cm} p{2cm}}
\caption{For $\tau={\tau^b_1}$: Solving $\tau$-regions for atoms $(e, \cdot )$ of $U_\tau$ where $e\in E_{T_\tau}$.}
\label{tab:second_table}
\endfirsthead
\endhead
\emph{e}& \emph{initials}& \emph{states} & \emph{sig} & \emph{constituents}\\ \hline
$x_i$ & \raggedright{$t_{i,0,0}=0$, $t_{i,2,0}=b$, $d_{6i+4,0,0}=b$, } & $S_{T_{i,2}}$ &  \raggedright{$x_i=(0,b)$, $k_{6i+4}=(b,0)$} & \raggedright{$T_{i,0}, T_{i,2}$, $D_{6i+4,1}$}\arraybackslash \\

$x_i$ & \raggedright{$t_{i,0,0}=b$, $t_{i,2,0}=0$, $\alpha=0: t_{i',\beta \text{mod}2,0}=b$, $t_{i'',\gamma\text{mod}2,0}=b$} & $S_{T_{i,0}}$ & \raggedright{$x_i=(0,b)$, $X_{i,0} =(1,0)$} & \raggedright{$T_{i,0}, T_{i,2}$, $T_{i',\beta\text{mod}2}$, $T_{i'',\gamma\text{mod}2}$}\arraybackslash \\

$x_i$ & \raggedright{$t_{i,0,0}=t_{i,2,0}=0$} & remaining states  & $x_i=(0,b)$ & $T_{i,0}, T_{i,2}$\\ \hline
$p_i$ & \raggedright{$t_{i,0,0}=t_{i,1,0}=b$, $t_{i,2,0}=b$,\newline $\alpha=1$: $t_{i',\beta\text{mod}2,0}=b$, $t_{i'',\gamma\text{mod}2,0}=b$} & $S_{T_{i,1}},S_{T_{i,2}} $ & \raggedright{$p_i=(0,b)$, $x_i=(b,0)$, $X_{i,1}=(1,0)$ } & \raggedright{$T_{i,0}, T_{i,1}, T_{i,2}$, $T_{i',\beta\text{mod}2}$, $T_{i'',\gamma\text{mod}2}$}\arraybackslash \\

$p_i$ & \raggedright{$t_{i,1,0}=t_{i,2,0}=0$} & remaining states & $ p_i=(0,b)$ & $T_{i,1}, T_{i,2}$\\ \hline
$k_{6i+1}$ & \raggedright{$h_{1,0}=d_{j,1,0}=b$, $t_{i,0,0}=0$} & \raggedright{$d_{6i+1,1,0}$, $S_{T_\tau}\setminus S_{T_{i,0}}$} & \raggedright{$k_{6i+1}=(0,b)$, $o_0=(b,0)$} & $H_1, T_{i,0}, D_{j,1}$ \\ 

$k_{6i+1}$ & \raggedright{$t_{i,0,0}=b$, $d_{6i+1,1,0}=0$, $\alpha=2$: $t_{i',\beta\text{mod}2,0}=b$, $t_{i'',\gamma\text{mod}2,0}=b$} & remaining states & \raggedright{$k_{6i+1}=(0,b)$, $X_{i,2}=(1,0)$} & \raggedright{$T_{i,0}, D_{6i+1,1}$, $T_{i',\beta\text{mod}2}$, $T_{i'',\gamma\text{mod}2}$}\arraybackslash \\ \hline
$k_{6i+3}$ & \raggedright{$h_{1,0}=d_{j,1,0}=b$, $t_{i,1,0}=t_{i,2,0}=b$} & \raggedright{$d_{6i+3,1,0}$, $S_{T_\tau}\setminus S_{T_{i,2}}$} & \raggedright{$k_{6i+3}=(0,b),\newline  o_0=p_i=(b,0)$} & \raggedright{$H_1, T_{i,1}, T_{i,2}$, $D_{j,1}$}\arraybackslash \\ 

$k_{6i+3}$ & \raggedright{$d_{6i+3,1,0}=t_{i,1,0}=0$} & remaining states & $k_{6i+3}=(0,b)$ & $D_{6i+3,1}, T_{i,1}$ \\ \hline
$k_{6i+5}$ & \raggedright{$h_{1,0}=d_{j,1,0}=b$, $t_{i,1,0}=t_{i,2,0}=b$} & \raggedright{$d_{6i+5,1,0}$, $S_{T_\tau}\setminus S_{T_{i,1}}$} & \raggedright{$k_{6i+5}=(0,b),\newline  o_0=p_i=(b,0)$} & \raggedright{$H_1, T_{i,1}, T_{i,2}$, $D_{j,1}$}\arraybackslash \\ 

$k_{6i+5}$ & \raggedright{$d_{6i+5,1,0}=t_{i,2,0}=0$} & remaining states & $k_{6i+5}=(0,b)$ & $D_{6i+5,1}, T_{i,2}$ \\ \hline
$X_{i,0}$ & \raggedright{$t_{i,0,0}=d_{6i,1,0}=b$, $\alpha=0$: $t_{i',\beta\text{mod}2, 0}=0$, $t_{i'',\gamma\text{mod}2,0}=0$} & $S_{T_{i,0}}$ & \raggedright{$X_{i,0}=(0,1)$, $k_{6i}=(b,0)$} & \raggedright{$D_{6i,1}, T_{i',\beta\text{mod}2} $, $T_{i,0},  T_{i'',\gamma\text{mod}2}$}\arraybackslash \\ 

$X_{i,0}$ & \raggedright{$ t_{i,0,0}=0 $, $\alpha=0$: $ t_{i',\beta\text{mod}2, 0} = 0 $, $ t_{i'',\gamma\beta\text{mod}2,0} = 0 $} & \raggedright{$S_{U_\tau} \setminus (S_{T_{i,0}} \cup S^{\beta}_\gamma$)}  & $X_{i,0}=(0,1)$ & \raggedright{$T_{i,0}$, $T_{i',\beta\text{mod}2}$, $T_{i'',\gamma\text{mod}2}$}\arraybackslash \\ \hline
$X_{i,1}$ & \raggedright{$t_{i,1,0}=d_{6i+2,1,0}=b$, $\alpha=1$: $t_{i',\beta\text{mod}2, 0}=0$, $t_{i'',\gamma\text{mod}2,0}=0$} & $S_{T_{i,1}}$ & \raggedright{$X_{i,1}=(0,1)$, $k_{6i+2}=(b,0)$} & \raggedright{$D_{6i+2,1}, T_{i,1}$, $T_{i',\beta\text{mod}2}$, $T_{i'',\gamma\text{mod}2}$} \arraybackslash \\ 

$X_{i,1}$ & \raggedright{$t_{i,1,0}=0$, $\alpha=1$: $t_{i',\beta\text{mod}2, 0}=0$, $t_{i'',\gamma\text{mod}2,0}=0$} & \raggedright{$S_{U_\tau} \setminus (S_{T_{i,1}}\cup S^{\beta}_\gamma)$}  & $X_{i,1}=(0,1)$ & \raggedright{$T_{i,1}$, $T_{i',\beta\text{mod}2}$,  $T_{i'',\gamma\text{mod}2}$} \arraybackslash \\ \hline
$X_{i,2}$ & \raggedright{$t_{i,0,0}=t_{i,2,0}=b$, $t_{i',\beta\text{mod}2 , 0}=0$, $t_{i'',\gamma\text{mod}2 ,0}=0$} & $S_{T_{i,0}}$ & \raggedright{$X_{i,2}=(0,1)$, $x_i=(b,0)$} & \raggedright{$T_{i,0}, T_{i,2}$, $T_{i',\beta\text{mod}2 }$, $T_{i'',\gamma\text{mod}2 }$} \arraybackslash \\

$X_{i,2}$ & \raggedright{ $t_{i,0,0}=0$, $t_{i',\beta\text{mod}2, 0}=0$, $t_{i'',\gamma\text{mod}2,0}=0$ } & \raggedright{$S_{U_\tau} \setminus (S_{T_{i,0}} \cup S^{\beta}_\gamma)$}  & \raggedright{$X_{i,2}=(0,1)$} & \raggedright{$T_{i,2}$, $T_{i',\beta\text{mod}2 }$, $T_{i'',\gamma\text{mod}2 }$}\arraybackslash \\
\end{longtable}
\end{proof}

\begin{proof}[Statement~\ref{en:essp_atom_0}]
Let $\tau=\tau^b_0$.
We show that the solvability of $(k, h_{0,4b+1})$ in  $U_\tau$ implies the $\tau$-ESSP for $U_\tau$.
The type $\tau$ has the following obvious property:
If $U=U(A_1,\dots, A_n)$ is a union, $e\in E_{U}$ and if $S=\{s \in S_{A_i} \mid 1\leq i\leq n : e\not\in E_{A_i}\}$ is the set of states of all TSs implemented by $U$ which do not have $e$ in their event set then we can solve $(e,s)$ for all states $s\in S$ by a $\tau$-region $(sup, sig)$ which is defined by $sup(s)=b$ for all $s\in S_U\setminus S$, $sup(s)=0$ for all $s\in S$, $sig(e)=(b,b)$ and $sig(e')=(0,0)$ for all $e'\in E_{U}\setminus \{e\}$.
Hence, in the following, for all $e\in E_{U_\tau}$, we restrict ourselves to the presentation of regions of $U_\tau$ that altogether solves ESSP atoms $(e, s)$ for states $s$ of TSs that actually implement $e$.
By the former observation, this proves every atom $(e,\cdot) $ of $U_\tau$ to be solvable.

The following table presents corresponding regions for a lot of ESSP atoms of $U_\tau$.
However, for some atoms we need regions which are better discussed individually and these atoms are served first.

$(z)$:
The solvability of $(z, s)$ for $s\in S_{H_0}\setminus \{h_{0,2b},h_{0,4b+1}, h_{0,6b+1}\}$ is already proven by the region that solves $(k, h_{0,4b+1})$ presented in the proofs of Lemma~\ref{lem:key_unions_1}.\ref{lem:key_unions_1_existence}, Lemma~\ref{lem:translator_1}.\ref{lem:translator_1_existence}.
The first row of Table~\ref{tab:third_table} proves $(z, s)$  to be solvable for $s\in \{h_{0,2b},h_{0,4b+1}, h_{0,6b+1}\}$.
Hence, every $(z, \cdot)$ is solvable.

$(k, o_1)$:
The solvability of $(k, s)$ for $s\in S_{H_0}\setminus \{h_{0, 4b+2}, \dots, h_{0, 5b}\}$ is already done by the region that solves $(k, h_{0,4b+1})$ presented in the proof of Lemma~\ref{lem:key_unions_1}.\ref{lem:key_unions_1_existence}, Lemma~\ref{lem:translator_1}.\ref{lem:translator_1_existence}.

The solvability of $(o_1, s)$ for 
\[
s\in \{ h_{0,0},\dots, h_{0,b}, h_{0,2b+1},\dots, h_{0,3b+1}, h_{0,5b+1},\dots, h_{0,6b+1}\}
\] 
and the solvability of $(k ,s)$ for $s\in \{ h_{0, 4b+2}, \dots, h_{0, 5b} \}$ can be done as follows:
We use the region $(sup,sig)$ where $sup(h_{0,0})=b$, $sup(d_{j,0,0})=0$ and $sig(o_1)=(0,1)$, $sig(o_0)=(0,b)$, $sig(k)=(b,b)$, $sig(z)=(1,0)$, $sig(k_j)=(b,0)$ for $j\in \{0,\dots, 6m-1\}$ of $K_\tau$ and extend it for $i\in \{0,\dots, m-1\}$ appropriately corresponding to the region $(sup_T, sig_T)$ given for  the proof of Lemma~\ref{lem:translator_1}.\ref{lem:translator_1_existence}:
\begin{enumerate}
\item
$sup(t_{i,0,0})= sup(t_{i,1,0})= sup(t_{i,2,0}) =b$, 
\item
$sig(x_i)=(0,b)$ if $sig''(x_i)=(b,0)$, else $sig(x_i)=sig''(x_i)$,
\item
$sig(p_i)=(0,b)$ if $sig''(p_i)=(b,0)$, else $sig(p_i)=sig''(p_i)$,
\item
for $X\in V(\varphi)$: $sig(X)=(0,1)$ if $sig''(X)=(1,0)$, else $sig(X)=sig''(X)$.
\end{enumerate}

Moreover, to solve $(o_1, s)$ for 
\[
s\in \{ h_{0,b+1},\dots, h_{0,2b-1}, h_{0,3b+2},\dots, h_{0,4b}, d_{0,0,1},\dots, d_{6m-1,0,1}\}
\] 
we extend the region $(sup,sig)$ of $K_\tau$ with $sup(h_{0,0})=0$, $sup(d_{j,0,0})=b$ and $sig(o_1)=(b,b)$, $sig(o_0)=(b,0)$, $sig(z)=(0,1)$, $sig(k_j)=(0,b)$ for $j\in \{0,\dots, 6m-1\}$, by $sup(s)=sup_T(s)$ and $sig(e)=sig_T(e)$ for $s\in S_{T_\tau}$ and $e\in E_{T_\tau}$ where $(sup_T, sig_T)$ is defined in the proof of Lemma~\ref{lem:translator_1}.\ref{lem:translator_1_existence}.

Finally, the region presented in the 4th row of Table~\ref{tab:third_table} solves $(o_1, s)$ for $s\in \{h_{0,2b}, d_{0,0,0}, \dots, d_{6m-1,0,0}\}$.
Altogether it is justified, to consider every atom $(k, \cdot)$ and $(o_1,\cdot)$ to be solvable in $U_\tau$. 

$(o_0)$: The corresponding regions are given in Table~\ref{tab:third_table}.

For the solvability of atoms induced by the remaining events we exploit the already defined regions of Table~\ref{tab:first_table} and Table~\ref{tab:second_table}.
If $(sup, sig)$ is a region of the last three rows of  Table~\ref{tab:first_table} or a region of Table~\ref{tab:second_table} then we use it to create a region $(sup',sig')$ where we replace initials $h_{1,0}, d_{j,1,0}$ by $h_{0,0},d_{j,0,0}$, that is, $sup'(h_{0,0})=sup(h_{1,0})$, $sup'(d_{j,0,0})=sup(d_{j,1,0})$, and define $sup'(s)=sup(s)$ for the other affected initials, respectively, and let $sig'=sig$.
One can easily verify, that, altogether, the generated regions solve the remaining ESSP atoms of $U_\tau$.

\begin{longtable}{p{1cm} p{3cm}p{4.5cm} p{2cm} p{1.5cm}}
\caption{For $\tau=\tau^b_0$: Solving $\tau$-regions for atoms $(e, \cdot)$ of $U_\tau$ where $e\in \{z,o_0,o_1\}$.}
\label{tab:third_table}
\endfirsthead
\endhead
\emph{e}& \emph{initials}& \emph{states} & \emph{sig} & \emph{constituents}\\ \hline
$z$ & $h_{0,0}=0$ & \raggedright{$h_{0,2b},h_{0,4b+1}, h_{0,6b+1}$} & \raggedright{$z=(0,1)$,\\ $o_0=(b,0)$} & $H_0, D_{j,0}$\\ \hline
$o_0$ &  \raggedright{$h_{0,0}=b$, $d_{j,0,0}=0$} &  \raggedright{$h_{0,0},\dots, h_{0,2b-1}$, $d_{0,0,0},\dots, d_{6m-1,0,0}$} &  \raggedright{$o_0=(0,b)$, $z=(1,0)$} & $H_0, D_{j,0}$\\ 

$o_0$ &  \raggedright{$h_{0,0}=0$, $d_{j,0,0}=0$} &  \raggedright{remaining states} &  \raggedright{$o_0=(0,b)$} & $H_0, D_{j,0}$\\ \hline
$o_1$ & \raggedright{$h_{0,0}=d_{j,0,0}=b$} &  \raggedright{$h_{0,2b}, d_{0,0,0},\dots, d_{6m-1,0,0}$}  &  \raggedright{$o_1=(0,1)$, $o_0=(b,0)$} & $H_0, D_{j,0}$\\
\end{longtable}

\end{proof}

\subsubsection{Proof of Statement~\ref{en:essp_implies_ssp}}%

To justify Statement~\ref{en:essp_implies_ssp}, we observe, that the constituents of $U_\tau$ are all \emph{linear} TSs, that is, every constituent $A$ of $U_\tau$ is a finite directed labeled path: $ A= s_0 \edge{e_1}  \dots  \edge{e_t} s_t$, where all states $s_0,\dots, s_t$ are pairwise different.
The next Lemma shows that the $\tau$-ESSP of a linear TS $A$ always implies its $\tau$-SSP.
Consequently, if $\tau\in \{\tau^b_0, \tau^b_1\}$ then the $\tau$-ESSP of $U_\tau$ implies its $\tau$-SSP by the following lemma:

\begin{lemma}\label{lem:essp_implies_ssp}
Let $b\in \mathbb{N}^+$, $\tau \in \{ \tau^b_0, \tau^b_1 \}$ and $ A= s_0 \edge{e_1}  \dots  \edge{e_t} s_t$ be a linear TS having the $\tau$-ESSP.
\begin{enumerate}
\item\label{lem:essp_implies_ssp_infinite_sequence}
If $q_0\edge{e_1}\dots \edge{e_m}q_m \edge{e_1}\dots \edge{e_m}q_{2m}$ is a subpath of $A$ then there has to be a $\tau$-region $(sup,sig)$ of $A$ such that $sup(q_0)\not=sup(q_{m})$.
\item\label{lem:essp_implies_ssp_proof}
If $A$ is finite then it has the $\tau$-SSP.
\end{enumerate}
\end{lemma}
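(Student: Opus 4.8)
The plan is to isolate two elementary facts about $\tau$-regions of a linear TS and then drive the whole argument by them. The first is a \emph{separation fact}: whenever $s\edge{e}$ holds in $A$ but $\neg s'\edge{e}$, every $\tau$-region $(sup,sig)$ that solves the ESSP atom $(e,s')$ automatically separates $s$ and $s'$, because $sup(s)\edge{sig(e)}$ holds in $\tau$ (as $(sup,sig)$ is a region and $s\edge{e}$) while $\neg sup(s')\edge{sig(e)}$, forcing $sup(s)\neq sup(s')$. The second is a \emph{shifting fact}: if $s\edge{e}\bar s$ and $s'\edge{e}\bar s'$ use the same event $e$, then by Lemma~\ref{lem:observations}.\ref{lem:sig_summation_along_paths} (with $\vert sig(e)\vert=0$ for $\tau\in\{\tau^b_0,\tau^b_1\}$) every region satisfies $sup(\bar s)-sup(\bar s')=sup(s)-sup(s')$; hence one and the same region separates $(s,s')$ if and only if it separates $(\bar s,\bar s')$. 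Together these let me transport a separation problem forward along the path and solve it at a place where the two compared states disagree on their (unique) outgoing label.

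For Part~\ref{lem:essp_implies_ssp_infinite_sequence} I would argue as follows. The two copies of $w$ guarantee that for $k\in\{0,\dots,m-1\}$ the states $q_k$ and $q_{m+k}$ leave along the same event $e_{k+1}$; by the shifting fact the difference $sup(q_0)-sup(q_m)$ equals $sup(q_k)-sup(q_{m+k})$ in every region, and this equality is preserved as long as $q_{\ell}$ and $q_{m+\ell}$ keep agreeing on their outgoing label. I then shift the pair $(q_0,q_m)$ forward step by step. Because $A=s_0\edge{e_1}\cdots\edge{e_t}s_t$ is finite and the right component runs strictly ahead, the shifting must stop, and it can only stop for one of two reasons: the right component becomes the terminal state $s_t$ (which has no outgoing edge), or the two components disagree on their outgoing event. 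In either case the left component carries an outgoing event $e$ that does not occur at the right component, so $(e,\text{right component})$ is a valid ESSP atom; the separation fact yields a region separating the current pair, and the shifting fact carries this same region's separation back to $(q_0,q_m)$.

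For Part~\ref{lem:essp_implies_ssp_proof} let $s_i,s_j$ be distinct states with $i<j$ and put $d=j-i$. If $e_{j+k}=e_{i+k}$ for all $k\in\{1,\dots,d\}$ (which in particular needs $j+d\le t$), then $s_i\edge{w}s_j\edge{w}s_{j+d}$ with $w=e_{i+1}\cdots e_j$, so Part~\ref{lem:essp_implies_ssp_infinite_sequence} applies with $q_0=s_i$, $q_m=s_j$ and delivers a separating region. Otherwise let $k$ be minimal with $e_{i+k}\neq e_{j+k}$ or with $j+k-1=t$; the shared events $e_{i+1},\dots,e_{i+k-1}$ let the shifting fact reduce separation of $(s_i,s_j)$ to separation of $(s_{i+k-1},s_{j+k-1})$, and at this pair the left state emits $e_{i+k}$ while the right state does not (it emits a different label or is terminal), so the separation fact finishes the job. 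Since $s_i,s_j$ were arbitrary, $A$ has the $\tau$-SSP.

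The only genuinely delicate point I expect is the termination and bookkeeping of the shift in Part~\ref{lem:essp_implies_ssp_infinite_sequence}: I must make sure that each shift is ``free'' (i.e.\ the two compared states share their outgoing label, so the shifting fact applies) up to the exact step at which I invoke the separation fact, and that finiteness guarantees this step exists. Everything else—verifying that $\vert sig(e)\vert=0$ reduces the support update to $sup(\bar s)=sup(s)-sig^-(e)+sig^+(e)$, and that a region solving $(e,s')$ necessarily has $sig(e)\neq(0,0)$ so that the atom is solvable at all—is routine from the definitions of $\tau^b_0,\tau^b_1$ and Lemma~\ref{lem:observations}.
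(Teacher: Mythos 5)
Your proposal is correct and rests on exactly the two pillars of the paper's own proof: (i) by Lemma~\ref{lem:observations}.\ref{lem:sig_summation_along_paths} support differences are invariant under advancing both states along a common event, and (ii) at a point where the two compared states disagree on their unique outgoing label (or one is terminal) the corresponding ESSP atom yields a region with distinct supports, which is then transported back to the original pair. The only difference is organizational: you argue directly (shift forward until finiteness forces a disagreement point) where the paper argues by contradiction (non-separability of $(q_0,q_m)$ would force the $e_1$-continuation at $q_{2m}$ and hence an infinite path), so the underlying method is the same.
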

\begin{proof}
(\ref{lem:essp_implies_ssp_infinite_sequence}): 
Assume, for a contradiction, that $q_0\edge{e_1}\dots \edge{e_m}q_m \edge{e_1}\dots \edge{e_m}q_{2m}$ satisfies the equality $sup(q_0)=sup(q_{m})$ for every $\tau$-region $(sup,sig)$ of $A$.
We argue, that this sequence is continued by another transition $q_{2m}\edge{e_1}q_{2m+1}$ and that for every $\tau$-region $(sup, sig)$ of $A$ the equality $sup(q_1)=sup(q_{m+1})$ is satisfied.
This makes $q_1\edge{e_2}\dots \edge{e_1}q_{m +1}\edge{e_2}\dots \edge{e_1}q_{2m+1}$ a new starting point from which, by the same argumentation, we get another sequence that can be continued.
Hence, there is a state $s\in S_A$ and an event $e\in E_A$ such that $s_t\edge{e}s$, which is a contradiction.

For the proof, let $(sup,sig)$ be an arbitrary region. 
By Lemma~\ref{lem:observations}.\ref{lem:sig_summation_along_paths} we obtain 
\begin{align}
\label{eq:support_values_2}
sup(q_m) &= sup(q_0) + \sum_{i=1}^{m} sig^-(e_i) + \sum_{i=1}^{m} sig^+(e_i)  \\
\label{eq:support_values_3}
 sup(q_{2m}) &= sup(q_m) +\sum_{i=1}^{m} sig^-(e_i) + \sum_{i=1}^{m} sig^+(e_i) 
\end{align}
By Equation~\ref{eq:support_values_2} and $sup(q_0)=sup(q_m)$ we get $\sum_{i=1}^{m} sig^-(e_i) + \sum_{i=1}^{m} sig^+(e_i) =0$, implying, by Equation~\ref{eq:support_values_3}, $sup(q_{2m})=sup(q_m)=sup(q_0)$.
Hence, as $(sup,sig)$ was arbitrary, a valid ESSP atom $(e_1, q_{2m})$ contradicts the $\tau$-ESSP of $A$.
Thus, there is a state $q_{2m+1}$ such that $q_{2m}\edge{e_1}q_{2m+1}$.
Moreover, as $\delta_\tau$ is a function, by $q_0\edge{e_1}, q_{m}\edge{e_1}$ and $sup(q_0)=sup(q_m)$ we obtain $sup(q_1)=sup(q_{m+1})$, too.

(\ref{lem:essp_implies_ssp_proof}):
Assume, that there is a sequence $s_{i}\edge{e_{i+1}}\dots \edge{e_j}s_j\edge{e_{j+1}} \dots \edge{e_t}s_t $ in $A$ such that $(s_i, s_j)$ is not solvable.
As every region $(sup,sig)$ of $A$ satisfies $sup(s_i)=sup(s_j)$, by the $\tau$-ESSP, we have that $s_j\not=s_t$ and $e_{i+1}=e_{j+1}$.
Let $k\in \mathbb{N}$ such that $j=i+k$ and let $1 \le \ell \leq k $ be the biggest index such that $e_{i+1}=e_{j+1}, e_{i+2}=e_{j+2},\dots, e_{i+\ell}=e_{j+\ell}$.
If $\ell < k$, then, by the ESSP of $A$, there is a region $(sup, sig)$ separating $e_{i+\ell+1}$ from $s_{j+\ell}$.
By Lemma~\ref{lem:observations}.\ref{lem:sig_summation_along_paths} we have that
\begin{align} 
sup(s_{i+\ell}) & =sup(s_i) + \sum_{m=1}^{\ell} sig^-(e_{i+m}) + \sum_{m=1}^{\ell} sig^+(e_{i+m})\\ 
sup(s_{j+\ell}) & =sup(s_j) + \sum_{m=1}^{\ell} sig^-(e_{i+m}) + \sum_{m=1}^{\ell} sig^+(e_{i+m}) 
\end{align}

which, by $sup(s_i)=sup(s_j)$, implies $sup(s_{i+\ell})=sup(s_{j+\ell})$ contradicting $\neg (sup(e_{j+\ell})\edge{sig(e_{i+\ell+1})})$.
Hence, we have $\ell=k$. 
This implies that we have a sequence $s_i\edge{e_{i+1}}\dots\edge{e_{i+k}}s_j\edge{e_{i+1}}\dots \edge{e_{i+k}}s_{j+k}$ where $sup(s_i)=sup(s_j)$ for all regions of $A$.
By (\ref{lem:essp_implies_ssp_infinite_sequence}) this contradicts the linearity of $A$.
Hence, $(s_i,s_j)$ is $\tau$-solvable and $A$ has the $\tau$-SSP.

\end{proof}

\subsubsection{Proof of Statement~\ref{en:essp_atom_2}}%

Let $\tau\in \{\tau^b_2,\tau^b_3\}$.
We define the set of all initial states of the TSs implemented by $U_\tau$ by $I=\{h_{3,0,0}, t_{j,0}, f_{j,0,0},g_{j,0}\mid 0\leq j\leq m-1\}$.
A lot of separation atoms are solve by Table~\ref{tab:fourth_table} presented at the bottom of this subsection.
However, some atoms need to be discussed individually and or need some additional instructions how their corresponding rows in Table~\ref{tab:fourth_table} are to interpret. 

($k$):
The key region inhibits $k$ in $H_3$ and the region of the first row of the Table~\ref{tab:fourth_table} separates $k$ from the remaining states.

($z$):
Let $i, \ell \in \{0,\dots, m-1\}$,such that $X_\ell=X_{i,2}$ and $i',i''\in \{0,\dots, m-1\}\setminus \{i\}$ be the indices of the translators (clauses) of the second and third occurrence of $X_{i,2}$: $X_{i,2}\in E_{T_{i}}\cap E_{T_{i'}}\cap E_{T_{i''}}$.
Using these definitions, the region presented in the second row of Table~\ref{tab:fourth_table} shows the separation of $z$ in $T_i$ and from $h_{3,0,0}$.
By the arbitrariness of $i$ this proves $z$ to be separable from all states of $T_\tau$.

For the separation of $z$ from the states of $S_{H_3}\setminus \{ h_{3,0,0} \}$ see row three of Table~\ref{tab:fourth_table} and, finally, see the 4th row of Table~\ref{tab:fourth_table} for the separation of $z$ from the remaining states, that is $S_{F_j}\cup S_{G_j},j\in \{0,\dots, m-1\}$. 

($v_\ell$):
Let $\ell\in \{0,\dots, m-1\}$.
The separation of $v_\ell$ in $U_\tau$ affects the variable event $X_\ell$ and we assume $i,i',i''\in \{0,\dots, m-1\}$ to be the respective indices such that $X_\ell\in E_{T_{i}} \cap E_{T_{i'}}\cap E_{T_{i''}}$.
Using these indices, the seventh and eighth row of Table~\ref{tab:fourth_table} prove $v_\ell$ to be separable from all states of $U_\tau$.

($V(\varphi)$):
For the separation of the variable events we proceed as follows:
If $i, i',i''\in \{0,\dots, m-1\}$ and $\alpha \in \{0,1,2\}$ such that $X_{i,\alpha}\in  E_{T_{i}} \cap E_{T_{i'}}\cap E_{T_{i''}}$ then we explicitly present regions for the separation of $X_{i,\alpha}$ at the states in question of $S_{U_\tau}\setminus (S_{T_{i'}} \cup S_{T_{i''})}$.
By the arbitrariness of $i$ and $\alpha$ this proves $X_{i,\alpha}$ to be separable  in $T_{i'}, T_{i''}$, too, and, consequently, in $U_\tau$.

$(X_{i,0})$:
Let $i,i',i'', \ell \in \{0,\dots, m-1\}$ such that $X_{i,0}=X_\ell$ and $X_\ell\in E_{T_{i}} \cap E_{T_{i'}}\cap E_{T_{i''}}$.
The 9th row is dedicated to the separation of $X_\ell$ at the states $f_{\ell,1,0},\dots, f_{\ell,1,b-2}$ and $g_{\ell,0},\dots, g_{\ell, b-1}$ and $t_{i,0},\dots, t_{i,b-1}$.
After that, the 10th row shows $X_\ell$ to be separable  at the remaining states of $S_{U_\tau}\setminus (S_{T_{i'}} \cup S_{T_{i''}})$.

$(X_{i,1})$:
Let $\ell_0,\ell_1,i_0,\dots, i_3\in \{0,\dots, m-1\}$ such that $X_{i,0}=X_{\ell_0}\in E_{T_{i}} \cap E_{T_{i_0}}\cap E_{T_{i_1}}$ and $X_{i,1}=X_{\ell_1}\in E_{T_{i}} \cap E_{T_{i_2}}\cap E_{T_{i_3}}$.
The 11th row show the separation of $X_{\ell_1}$ at $f_{\ell_1,1,0},\dots, f_{\ell, 1, b-2}$ and $g_{\ell_1,0},\dots, g_{\ell_1, b-1}$ .
After that, the 12th row shows $X_{\ell_1}$ to be separable  at the remaining states of $S_{U_\tau}\setminus (S_{T_{i_2}} \cup S_{T_{i_3})}$.
To separate$X_{\ell_1}$ at the states $t_{i,0}, \dots, t_{i,b}$, a lot of cases analyses is necessary as the variable event $X_{i,0}$ comes into play.
Hence, to define an appropriate region, we have to analyze in which constellation the events $X_{i,0}$ and $X_{i,1}$  occur a second and a third time.
Roughly said, the following cases are possible:
\begin{enumerate}
\item
$X_{i,0}$ and $X_{i,1}$ occur a second (third) time together in another translator and $X_{i,1}$ occur \emph{left} from $X_{i,0}$, for example, $i_0=i_2$ and  $t_{i_0, b}\edge{ X_{i,1} }$ and $t_{i_0, b+1} \edge{ X_{i,0} }$, respectively $t_{i_0,b+2}\edge{ X_{i,0} }$, or $t_{i_0,b+1}\edge{X_{i,1}}$ and $t_{i_0,b+2}\edge{X_{i,0}}$.
\item
$X_{i,0}$ and $X_{i,1}$ occur a second (third) time together in another translator and $X_{i,1}$ occur always \emph{right} from $X_{i,0}$ as, for example, it is the case for $T_i$.
\item
$X_{i,0}$ and $X_{i,1}$ occur not again in a common translator.
\end{enumerate}
To define an appropriate region, in the following, we will discuss all possible cases individually.
Firstly, for all cases, the signature $sig$ is defined by $sig(X_{\ell_1}) =(0, b)$, $sig(X_{i,\ell_0})=sig(v_{\ell_1})=1$, $sig(v_{\ell_0})=b$ and $sig(e)=0$ for $e\in E_{U_\tau}\setminus  \{v_{\ell_0}, v_{\ell_1}, X_{\ell_0}, X_{\ell_1} \}$.
Independent from the different cases, we have that $sup(t_{i,0})=sup(f_{\ell_1,0})=b$ and $sup(f_{\ell_0,0})=1$. 
The challenge for the further initials is to achieve, that each further source $s$ of a $X_{\ell_1}$-labeled transition, that is $s\edge{X_{\ell_1}}$, is mapped to $0$.

If $X_{\ell_0} \not\in E_{T_{i_2}}$, respectively $X_{\ell_0} \not\in E_{T_{i_3}}$, then we simply define $sup(t_{i_2,0})=0$, respectively $sup(t_{i_3,0})=0$.

If $X_{\ell_0} \in E_{T_{i_2}}$, respectively $X_{\ell_0} \in E_{T_{i_3}}$, and if $X_{\ell_0}$ occurs left from $X_{\ell_1}$ then the situation is similar to $T_i$ and we define $sup(t_{i_2,0})=b$, respectively $sup(t_{i_3,0})=b$.
Otherwise, if $X_{\ell_0}$ occurs right from $X_{\ell_1}$ then we define $sup(t_{i_2,0})=0$, respectively $sup(t_{i_3,0})=0$.

Finally, the values of $t_{i_0,0}, t_{i_1,0}$ are defined in dependence of one of the former cases:
Actually, if $t_{i_0,0} \in \{t_{i_2,0}, t_{i_3,0}\}$, respectively $t_{i_1,0} \in \{t_{i_2,0}, t_{i_3,0}\}$, then the former case properly defines the support of $t_{i_0,0}$, respectively $t_{i_1,0}$.
Otherwise, we set $sup(t_{i_0,0})=0$, respectively $sup(t_{i_1,0})=0$.

$(X_{i,2})$: 
The separation of $X_{i,2}$ can be perfectly in the same way shown to the separation of $X_{i,1}$.
Hence, for simplicity, we refrain from the explicit presentation of the separating regions.

\noindent\begin{longtable}{p{0.7cm} p{3.8cm}p{3cm} p{3.5cm} p{1.7cm}}
\caption{Inhibiting regions of $U^{\tau}_\varphi$ for $z,u, v_\ell, X_{i,0}, X_{i,1}$}
\label{tab:fourth_table}
\endfirsthead
\endhead
\emph{e} & \emph{initials}& \emph{states} & \emph{sig} & \emph{constituents}\\ \hline
$k$ & $s\in I: s=0$ & remaining states & \raggedright{$k=(0,1)$, $z=v_j=1$} &$U_\tau$\\ \hline
$z$ & \raggedright{$t_{i,0}=t_{i',0}=t_{i'',0}=h_{3,0,0}=b$, $f_{j,0,0}=g_{j,0}=1$, $j\not\in \{i,i',i''\}: t_{j,0}=0$} & $S_{T_i}, h_{3,0,0}$ & \raggedright{$z=(0,b)$,\newline $u=X_\ell=1$, $v_\ell=b$} &$T^0_\tau, H_3, F_\ell, G_\ell$\\
$z$ & \raggedright{$h_{3,0,0}=f_{j,0,0}=g_{j,0}=0$, $t_{j,0}=1$}  & $S_{H_3}\setminus \{h_{3,0,0}\}$  & $z=(0,b), k=1$, $u=2$ & $U_\tau$ \\
$z$ &  \raggedright{ $h_{3,0,0}=b$, $t_{j,0}=0$} & remaining states & \raggedright{$z=(0,b)$, $u=1$} & $H_3, T^0_\tau$\\ \hline
$u$ & \raggedright{$s\in I: s=0$} & $h_{3,0,1},\dots, h_{3,0, b}$ & $u=(0,b)$, $z=2$, $k=1$ & $ U_\tau$\\
$u$ & \raggedright{$h_{3,0,0}=0$, $t_{j,0}=1$} & remaining states & $u=(0,b), z=1$ & $H_3, T^0_\tau$\\ \hline
$v_\ell$ & \raggedright{$s\in I: s=0$} & $f_{\ell,0,1},\dots, f_{\ell,0,b}$ & \raggedright{$v_\ell=(0,b)$, $X_\ell=2$, $k=1$} & $ U_\tau$\\
$v_\ell$ & \raggedright{$f_{\ell,0,0}=0$, $s\in I\setminus \{f_{\ell,0,0} \}: s=1$} & remaining states & $v_\ell=(0,b), X_\ell=1$ & $F_\ell, T_{i}, T_{i'}, T_{i''}$\\ \hline
$\underbrace{X_{i,0}}_{= X_\ell}$ 
& 
\raggedright{$g_{\ell,0}=1$,  $t_{i,0}=t_{i',0}=t_{i'',0}=1$, $s\in I\setminus \{t_{i,0},t_{i',0}, t_{i'',0},g_{\ell,0}\}$: $s=0$} 
& 
$f_{\ell,1,0},\dots, f_{\ell,1, b-2}$, $t_{i,0},\dots, t_{i,b-1}$, $g_{\ell,0},\dots, g_{\ell, b-1}$ 
& 
\raggedright{$X_\ell=(0,b)$, $v_\ell=2$, $k=1$} & $ U_\tau$\\
$\underbrace{X_{i,0}}_{= X_\ell}$ & \raggedright{$f_{\ell,0,0}=b, g_{\ell,0}=0$, $t_{i,0}=t_{i',0}=t_{i'',0}=0$} & remaining states & $X_\ell=(0,b)$, $v_\ell=1$& $ F_\ell, G_\ell$, $T_{i}, T_{i'}, T_{i''}$ \\ \hline
$\underbrace{X_{i,1}}_{= X_{\ell_1}}$ & \raggedright{$g_{\ell_1,0}=1$,$t_{i,0}=t_{i_2,0}=t_{i_3,0}=1$, $s\in I\setminus\{g_{\ell_1,0},t_{i,0},t_{i_2,0},t_{i_3,0} \}: s=0$} & $f_{\ell_1,1,0},\dots, f_{\ell_1,1,b-2}$, $g_{\ell_1,0},\dots, g_{\ell_1, b-1}$ & $X_{\ell_1}=(0,b)$, $v_{\ell_1}=2$, $k=1$ & $ U_\tau$\\

$\underbrace{ X_{i,1} }_{= X_{\ell_1}} $ & \raggedright{ $f_{ \ell_1,0, 0} = b, g_{\ell_1,0}=0 $, $t_{i,0}=t_{i_2,0}=t_{i_3,0}=0$}  & \raggedright{remaining of $S_{U_\tau}\setminus (S_{T_{i_2} }\cup S_{T_{i_3} })$, but not $t_{i,0},\dots, t_{i,b+1} $}& \raggedright{$X_{\ell_1}=(0,b)$, $v_{\ell_1}=1$} & $ F_{\ell_1}, G_{\ell_1}$, $T_{i}, T_{i_2}, T_{i_3}$ \\ 
\end{longtable}

\subsubsection{Proof of Statement~\ref{en:essp_implies_ssp_Z}}%

Actually, the $\tau$-SSP is already proven by the $\tau$-regions presented for the $\tau$-ESSP in Statement~\ref{en:essp_atom_2}:
\begin{enumerate}
\item
If  $s_0\edge{k}\dots \edge{k}s_{n}$ is a sequence in $U_\tau$, then the pairwise separation of $s_{i-1}, s_i$ for $i\in \{1,\dots, n\}$ is done by the region that solves  $(k,h_{3,1, b-1})$.
\item
The separation of $h_{3,0,0}, \dots, h_{3,0,b}$ from $h_{3,1,0},\dots, h_{3,1, b-1}$ is done by the region of the 2nd row of Table~\ref{tab:fourth_table} as well as the separation of $f_{j,0,0},\dots, f_{j,0, b}$ from $f_{j, 1,0},\dots, f_{j, 1,b-1}$ for $j\in \{0,\dots, m-1\}$.
This finishes the state separation in $H_3, F_0,\dots, F_{m-1}$.
\item
If $i,\ell\in \{0,\dots, m-1\}$ such that $X_{i,0}=X_\ell$ then 
\begin{enumerate}
\item
the 10th row of Table~\ref{tab:fourth_table} presents a region, that separates $t_{i,0},\dots, t_{i,b}$ from $t_{i,b+1},\dots, t_{i, 2b+4}$ and $g_{\ell,0},\dots, g_{\ell,b}$ from $g_{\ell,b+1}$,
\item
the region of the 12th row separates $t_{i,b+1}$ from $t_{i,b+2},\dots, t_{i, 2b+4}$ and the corresponding region for $X_{i,2}$ separates $t_{i,b+2}$ from $t_{i,b+3},\dots, t_{i, 2b+4}$ and
\item
the region of the 4th row separates $t_{i,b+3}$ from $t_{i,b+4},\dots, t_{i, 2b+4}$.
\end{enumerate}
By the arbitrariness of $i$, this completes the state separation in $U_\tau$.
\end{enumerate}

\subsubsection{Proof of Statement~\ref{en:ssp_atom}}%

Let $\tau\in \{\tau^b_0,\tau^b_1\}$. 
To show, that the solvability of $(h_{2,0}, h_{2,1})$ in $W$ implies its $\tau$-SSP we explicitly present regions of $W$ that, altogether, solve the SSP atoms induced by $H_2$ and $D_{0,1},\dots, D_{6m-1,1}$.
Furthermore, we observe that the regions of Table~\ref{tab:first_table} and Table~\ref{tab:second_table}, which were originally dedicated to the $\tau^b_1$-separation of $k_{6i},\dots, k_{6i+5},x_i,p_i,X_{i,0}, X_{i,1}, X_{i,2}$, can be fitted into $\tau$-regions of $W$. 
In fact, this is doable simply by replacing the support of the initial state $h_{1,0}$ by the appropriate value for $h_{2,0}$.
Consequently, the union $T_\tau$, has the $\tau$-ESSP with regions of $W$.
Moreover, $T_\tau$ consists only of linear TSs.
Hence, by Lemma~\ref{lem:essp_implies_ssp}, $T_\tau$ has the $\tau$-SSP.
Finally, with the added regions concerning  $H_2, D_{0,0},\dots, D_{6m-1,0}$, the $\tau$-SSP for $W$ is proven:

For a start,  if $(s,s')$ satisfies the condition $s,s'\in S_0=\{h_{2,0}, \dots, h_{2,b}\}$ or $s,s'\in S_1=\{h_{2,b+1}, \dots, h_{2,2b+1}\}$ or $s,s'\in S_2=\{h_{2,2b+2}, \dots, h_{2,3b+2}\}$ or $s,s'\in S_3=\{d_{j,1,0}, d_{j,1,1}\}$ or $s,s'\in S_4=\{d_{j,1,2}, d_{j,1,3}\}$ then $(s,s')$ is already solved by the key region presented for the proof of Lemma~\ref{lem:key_unions_1} and Lemma~\ref{lem:translator_1}.
Moreover, the $\tau$-region $(sup,sig)$ where:
\begin{enumerate}
\item
 $sup(h_{2,0})=sup(d_{j,1,0})=0$ and $sig(o_0)=(0,b)$ separates all states of $S_0$ from all states of $S_1\cup S_2$,

\item
$sup(h_{2,0})=sup(d_{j,1,0})=0$ and $sig(o_1)=(0,b)$ separates all states of $S_1$ from all states of $ S_2$,

\item
$sup(d_{j,1,0})=0$ and $sig(k_j)=(0,b)$ solves $(s,s')$ where $s\in S_3$ and $s'\in S_4$. 
Altogether, this proves $H_2, D_{0,1},\dots, D_{6m-1,1}$ to have the $\tau$-SSP with regions of $W$.
With the former discussion this implies the $\tau$-SSP of $W$.
\end{enumerate}

\end{appendix}

\end{document}